\newcommand\relativepath{}
\newcommand\bigone[1]{}
\newcommand\smallone[1]{#1}
\newcommand{\ignore}[1]{}
\newcommand{\eps}{\varepsilon}
\newcommand{\etal}{{\em et al.}\xspace}
\def\makeletter#1{%
\expandafter \newcommand \csname b#1\endcsname {\mathbb{#1}}%
\expandafter \newcommand \csname c#1\endcsname {\mathcal{#1}}%
\expandafter \newcommand \csname t#1\endcsname {\widetilde{#1}}%
\expandafter \newcommand \csname ct#1\endcsname {\widetilde{\mathcal{#1}}}%
}
\def\makeletters(#1#2){\makeletter#1\ifx#2.\else\makeletters(#2)\fi}
\def\makeSkob#1#2#3{%
\def\LLL{\left} \def\RRR{\right}
\expandafter \edef \csname #1\endcsname #2##1#3{\SkobInner}
\def\LLL{\bigl} \def\RRR{\bigr}
\expandafter \edef \csname #1A\endcsname #2##1#3{\SkobInner}
\def\LLL{\Bigl} \def\RRR{\Bigr}
\expandafter \edef \csname #1B\endcsname #2##1#3{\SkobInner}
\def\LLL{\biggl} \def\RRR{\biggr}
\expandafter \edef \csname #1C\endcsname #2##1#3{\SkobInner}
\def\LLL{\Biggl} \def\RRR{\Biggr}
\expandafter \edef \csname #1D\endcsname #2##1#3{\SkobInner}
\def\LLL{} \def\RRR{}
\expandafter \edef \csname #1O\endcsname #2##1#3{\SkobInner}
}
\def\SkobInner{\LLL(##1\RRR)} \makeSkob{s}[]
\def\SkobInner{\LLL[##1\RRR]} \makeSkob{sk}[]
\def\SkobInner{\LLL\lbrace##1\RRR\rbrace} \makeSkob{sfig}{}{}
\def\SkobInner{\LLL\lfloor##1\RRR\rfloor} \makeSkob{floor}[]
\def\SkobInner{\LLL\lceil##1\RRR\rceil} \makeSkob{ceil}[]
\def\SkobInner{\LLL\langle##1\RRR\rangle} \makeSkob{ip}<>
\def\SkobInner{\LLL|##1\RRR\rangle} \makeSkob{ket}|>
\def\SkobInner{\LLL|##1\RRR|} \makeSkob{abs}||
\def\SkobInner{\LLL\|##1\RRR\|} \makeSkob{norm}||
\def\SkobInner{\LLL\|##1\RRR\|_{\noexpand\mathrm F}} \makeSkob{normFrob}||
\def\SkobInner{\LLL\|##1\RRR\|_{\noexpand\mathrm{tr}}} \makeSkob{normtr}||
\newcommand{\midA}{\mathbin{\bigl|}}
\def \elem[#1]{[\![#1]\!]}
\def \bigfrac#1/{\left.#1\right/}
\def \bigfracR/#1.{\left/#1\right.}
\newcommand{\pfstart}{\begin{proof}} 
\newcommand{\pfsketch}{\begin{proof}[Proof sketch]}
\newcommand{\pfend}{\end{proof}} 
\newcommand{\itemstart}{\begin{itemize}\itemsep0pt}
\newcommand{\itemend}{\end{itemize}}
\newcommand{\descrstart}{\begin{description}\itemsep0pt}
\newcommand{\descrend}{\end{description}}
\newcommand{\enumstart}{\begin{enumerate}\itemsep0pt}
\newcommand{\enumend}{\end{enumerate}}
\newcommand{\negmedskip}{\vspace{-\medskipamount}}
\newcommand{\negbigskip}{\vspace{-\bigskipamount}}
\newcommand{\CrossRef}[2]{#1~\ref{#2}}
\newcommand{\maketheorem}[2]{
\newtheorem{#1}[thm]{#2}
\expandafter\def \csname ref#1\endcsname ##1{\CrossRef{#2}{#1:##1}}
}
\theoremstyle{definition}
\def \rf(#1:#2){\csname ref#1\endcsname{#2}}
\def \rfitem(#1@#2){\rf(#2)(\ref{#1@#2})}
\def \rfitemE(#1@#2){\ref{#2}(\ref{#1@#2})}
\def\mycommand#1#2{
\expandafter\newcommand \csname#1\endcsname {#2}%
}
\def\mytxtcommand#1#2{
\expandafter\newcommand \csname#1\endcsname {#2\xspace}%
}
\def\remycommand#1#2{
\expandafter\renewcommand \csname#1\endcsname {#2}%
}
\newcommand\draft[1]{}
\newcommand\release[1]{#1}
\title{Separations in Query Complexity Based on Pointer Functions}
\author{
Andris Ambainis$^1$ \and
Kaspars Balodis$^1$ \and
Aleksandrs Belovs$^2$ \and
Troy Lee$^3$ \and
Miklos Santha$^4$ \and
Juris Smotrovs$^1$}
\date{}
\begin{document}
\maketitle
\footnotetext[1]{Faculty of Computing, University of Latvia, ({\tt andris.ambainis@lu.lv, kbalodis@gmail.com,  juris.smotrovs@lu.lv}).}
\footnotetext[2]{CWI, the Netherlands, ({\tt stiboh@gmail.com}).} 
\footnotetext[3]{School of Physical and Mathematical Sciences, Nanyang Technological University and Centre for Quantum Technologies and MajuLab, UMI 3654, Singapore ({\tt troyjlee@gmail.com}).} 
\footnotetext[4]{LIAFA, Univ. Paris 7, CNRS, 75205 Paris, France;  and
Centre for Quantum Technologies, National University of Singapore,
Singapore 117543 ({\tt miklos.santha@liafa.univ-paris-diderot.fr}).}

\mytxtcommand{goos}{G\"o\"os}
\mytxtcommand{goosfunction}{\goos-Pitassi-Watson function}
\newcommand{\Goosfunction}{$g_{n,m}\ $}

\mycommand{NAND}{\mathrm{NAND}}
\mycommand{tdeg}{\mathop{\widetilde{\mathrm{deg}}}}
\mycommand{bool}{\{0,1\}}
\mycommand{cube}{\bool^n}

\begin{abstract}
In 1986, Saks and Wigderson conjectured that the largest separation between deterministic and zero-error randomized 
query complexity for a total boolean function is given by the function $f$ on $n=2^k$ bits defined by a complete binary tree 
of NAND gates of depth $k$, which achieves $R_0(f) = O(D(f)^{0.7537\ldots})$.  We show this is false by giving an example 
of a total boolean function $f$ on $n$ bits whose deterministic query complexity is $\Omega(n/\log(n))$ while its zero-error 
randomized query complexity is $\tO(\sqrt{n})$.  
We further show that the quantum query complexity of the same function is $\tO(n^{1/4})$, giving the first example of a total function with a super-quadratic gap between its quantum and deterministic query complexities.  

We also construct a total boolean function $g$ on $n$ variables that has zero-error randomized query complexity 
$\Omega(n/\log(n))$ and bounded-error randomized query complexity $R(g) = \tO(\sqrt{n})$.  This is the first super-linear separation between these two complexity measures.  
The exact quantum query complexity of the same function is $Q_E(g) = \tO(\sqrt{n})$.  

These two functions show that the relations $D(f) = O(R_1(f)^2)$ and $R_0(f) = \tO(R(f)^2)$ are optimal, up to poly-logarithmic factors.  
Further variations of these functions give additional separations between other query complexity measures: a cubic separation between $Q$ and $R_0$, a $3/2$-power separation between $Q_E$ and $R$, and a 4th power separation between approximate degree and bounded-error randomized query complexity.

All of these examples are variants of a function recently introduced by \goos, Pitassi, and Watson 
which they used to separate the unambiguous 1-certificate complexity 
from deterministic query complexity and to resolve the famous Clique versus Independent Set problem in communication complexity. 
\end{abstract}

\newpage
\section{Introduction}
Query complexity has been very useful for understanding the power of different computational models. 
In the standard version of the query model, we want to compute a boolean function $f\colon \bool^n \rightarrow \bool$ on an initially unknown input 
$x \in \bool^n$ that can only be accessed by asking queries of the form $x_i = ?$.  
The advantage of query complexity is that we can often prove tight lower bounds and have provable separations between different 
computational models.  This is in contrast to the Turing machine world where lower bounds and separations between complexity classes often have 
to rely on unproven assumptions. At the same time, the model of query complexity is simple and captures the essence of quite a few
natural computational processes. 

We use $D(f), R(f)$ and $Q(f)$ to denote the minimum number of queries in deterministic, randomized and quantum query algorithms\footnote{By default,
we use $R(f)$ and $Q(f)$ to refer to bounded-error algorithms (i.e., algorithms that compute $f(x)$ correctly on every input $x$ with probability at
least $9/10$).} that compute $f$.
It is easy to see that $Q(f) \le R(f) \le D(f)$ for any function $f$.  For partial functions (that is, functions whose domain is a 
strict subset of $\bool^n$), huge separations are known between all these measures.
For example, a randomized algorithm can tell if an $n$-bit boolean string has $0$ ones or 
at least $n/2$ ones with a constant number of queries, while any deterministic algorithm requires $\Omega(n)$ queries 
to do this.  Similarly, Aaronson and Ambainis~\cite{aaronson:forrelation} recently constructed a partial boolean function 
$f$ on $n$ variables that can be evaluated using one quantum query but requires
$\Omega(\sqrt{n})$ queries for randomized algorithms.

The situation is quite different for total functions.\footnote{In the rest of the paper we will exclusively talk about total functions. Hence, we sometimes drop this qualification.}
Here it is known that $D(f), R(f)$, and $Q(f)$ are all polynomially related.  In fact, $D(f) = O(R(f)^3)$ \cite{nisan:bs} and 
$D(f) = O(Q(f)^6)$ \cite{beals:pol}.  
A popular variant of randomized algorithms is the zero-error (Las Vegas) model in which a randomized algorithm always has to
output the correct answer, but the number of queries after which it stops can depend on the algorithm's coin flips.
The complexity $R_0(f)$ is defined as the expected number of queries, over the randomness of the algorithm, for the 
worst case input $x$. 
A tighter relation $D(f) \le R_0(f)^2$ is known for Las Vegas algorithms 
(this was independently observed by several authors \cite{HH87,BI87,T90}).  
Nisan has even shown $D(f) = O(R_1(f)^2)$ \cite{nisan:bs},
where $R_1(f)$ is the one-sided error randomized complexity of $f$.
Recently, Kulkarni and Tal~\cite{kulkarni:fractionalBS}, basing on a result by Midrij\=anis~\cite{midrijanis:rand}, showed that $R_0(f) = \tO(R(f)^2)$, where the $\tO$ notation hides 
poly-logarithmic factors.

While it has been widely conjectured that these relations are not tight, little progress has been made in the past 20 
years on improving these upper bounds or exhibiting functions with separations approaching them.  
Between $D(f)$ and $R_0(f)$, the best separation known for a 
total function is the function $\NAND^k$ on $n=2^k$ variables defined by a complete binary NAND tree of depth $k$.  This 
function satisfies $R_0(\NAND^k) = O(D(\NAND^k)^{0.7537...})$ \cite{snir:nand}.  Saks and Wigderson 
showed that this upper bound is optimal for $\NAND^k$, and conjectured that this is the largest gap 
possible between $R_0(f)$ and $D(f)$ \cite{saks:nand}.  This function also provides the largest 
known gap between $R(f)$ and $D(f)$, and satisfies
$R(\NAND^k)= \Omega(R_0(\NAND^k))$ \cite{San95}.  This situation points to the broader fact that, as far as we are 
aware, no super-linear gap 
is known between $R(f)$ and $R_0(f)$ for a total function $f$.  Between $Q(f)$ and $D(f)$, the 
largest known separation is quadratic, given by the OR function on $n$ bits, which satisfies $Q(f) = O(\sqrt{n})$ \cite{grover:search} and $D(f) = \Omega(n)$.
 
\subsection{Our results}
We improve the best known separations between all of these measures.  In particular, we show that
\begin{itemize}
  \item There is a function $f$ with $R_0(f) = \tO(D(f)^{1/2})$.  This refutes the nearly 30 year old conjecture of Saks and 
  Wigderson \cite{saks:nand}, and shows that the upper bounds $D(f) \le R_0(f)^2$ and $D(f)  = O(R_1(f)^2)$ are tight, up to poly-logarithmic factors.  
  \item There is a function $f$ with $R_1(f) = \tO(R_0(f)^{1/2})$.  
	This is also nearly optimal due to Nisan's result $D(f) = O(R_1(f)^2)$, as well as the upper bound 
	$R_0(f) = \tO(R(f)^2)$ by Kulkarni and Tal.
	Previously, no super-linear separation was known even between $R(f)$ and $R_0(f)$.
  \item There is a function $f$ with $Q(f) = \tO(D(f)^{1/4})$.  This is the first improvement in nearly 20 years to the quadratic separation given by Grover's search algorithm \cite{grover:search}.
  \item Let $Q_E(f)$ be the exact quantum query complexity, the minimal number of queries needed by a quantum algorithm that stops after a fixed number of steps and outputs $f(x)$ with probability 1.  We exhibit functions $f_1,f_2$ for which $Q_E(f_1) = \tO(R_0(f_1)^{1/2})$ 
  and $Q_E(f_2) = \tO(R(f_2)^{2/3})$.  This improves the best known separation of Ambainis from 2011 \cite{ambainis:exact} giving an $f$ for which $Q(f) = O(R(f)^{0.867\ldots})$.
	Prior to the work of Ambainis, no super-linear separation was known, the largest known separation being a factor of 2, attained for the PARITY function~\cite{cleve:phaseEstimation}.
	
\end{itemize}
A full list of our results are given in the following table.  Subsequent to our work, Ben-David \cite{ben-david:super-grover} has additionally given a super-quadratic separation between $Q(f)$ and $R(f)$, exhibiting a function with $Q(f) = \tO(R(f)^{2/5})$. 
\[
\label{eq:table}
\begin{array}{l|l@{}l|l@{}l|lll|}
&\multicolumn{2}{c|}{\text{lower bound for all $f$}} &\multicolumn{2}{c|}{\text{previous separation}} &\text{this paper} &  \text{function}& \multicolumn{1}{c|}{\text{result}} \\
\hline
R_0(f)& \Omega(D(f)^{1/2})& \cite{HH87,BI87,T90} & O(D(f)^{0.753\ldots})& \cite{snir:nand} & \tO(D(f)^{1/2})& 
f_{2n,n}& \text{\rf(cor:R0-D)} \\ 
Q(f) & \Omega(D(f)^{1/6})& \cite{beals:pol} & O(D(f)^{1/2})&\cite{grover:search} & \tO(D(f)^{1/4}) & 
f_{2n,n} & \text{\rf(cor:R0-D)} \\
R_1(f) & \Omega(R_0(f)^{1/2})& \cite{nisan:bs} & O(R_0(f)) && \tO(R_0(f)^{1/2}) & 
g_{n,n} &\text{\rf(cor:R1-R0)} \\
Q_E(f) & \Omega(R_0(f)^{1/3})& \cite{midrijanis:exact} & O(R_0(f)^{0.867\ldots}) &\cite{ambainis:exact}& \tO(R_0(f)^{1/2}) & g_{n,n} &\text{\rf(cor:R1-R0)} \\
Q(f) & \Omega(R_0(f)^{1/6})& \cite{beals:pol} & O(R_0(f)^{1/2})&\cite{grover:search} & \tO(R_0(f)^{1/3}) & h_{n,n,n^2} &\text{\rf(cor:Q-R0)} \\
Q_E(f) & \Omega(R(f)^{1/3})& \cite{midrijanis:exact} & O(R(f)^{0.867\ldots}) &\cite{ambainis:exact}& \tO(R(f)^{2/3}) & 
h_{1,n,n^2} & \text{\rf(cor:QE-R2)} \\
\tdeg(f) & \Omega(D(f)^{1/6})& \cite{beals:pol} & O(R(f)^{1/2})&\cite{nisan:pol} & \tO(R(f)^{1/4}) & h_{1,n,n^2} &\text{\rf(cor:deg-R2)}\\
\hline
\end{array}
\]

Other separations can be obtained from this table using relations between complexities in \rf(fig:complexities).

\subsection{\goosfunction}
All of our separations are based on an amazing function recently introduced by G\"o\"os, Pitassi, and Watson~
\cite{goos:partitionNumber} to
resolve the deterministic communication complexity of the Clique vs. Independent set problem, thus solving a long-standing open problem in communication complexity.  They solved this problem by
first solving a coresponding question in the query complexity model and
then showing a general ``lifting theorem'' that lifts the 
hardness of a function in the deterministic query model to the hardness of a derived function in the model of deterministic 
communication complexity. In the query complexity model, their goal was to exhibit a total boolean 
function $f$ that has large deterministic query complexity and small unambiguous 1-certificate complexity.\footnote{A 
subcube is the set of strings consistent with 
a partial assignment $x_{i_1}=b_1, \ldots, x_{i_s}=b_s$.  Its length is $s$, the number of assigned variables.  The 
unambiguous 1-certificate complexity is the smallest $s$ such that $f^{-1}(1)$ can be partitioned into subcubes of 
length $s$ (whose corresponding partial assignments are consequently 1-certificates of $f$).  
See \rf(sec:prelim) for full definitions.}

The starting point of their construction is the boolean function $f\colon\bool^M\to\bool$ with the input variables $x_{i,j}$ arranged in a rectangular grid $M=[n]\times [m]$.
The value of the function is 1 if and only if there exists a unique all-1 column.
The deterministic complexity of this function is $\Omega(nm)$, since it is hard to distinguish an input with precisely one zero in each column from the input in which one of the zeroes is flipped to one.
It is also easy to construct a 1-certificate of length $n+m-1$: Take the all-1 column and one zero from each of the remaining columns.  This certificate is not always unique, however, as there can be multiple zeroes in a column and any of them can be chosen in a certificate.  Indeed, it is impossible to \emph{partition} the set of all positive inputs into subcubes of small length.

%The goal behind the \goosfunction was to exhibit a total boolean function $f$ that has large deterministic query complexity and small unambiguous 1-certificate complexity (see \rf(sec:prelim) for the definition).
%Consider first the boolean function $f\colon\bool^M\to\bool$ with the input variables $x_{i,j}$ arranged in a rectangular grid $M=[n]\times [m]$.
%The value of the function is 1 if and only if \todo{expanded iff} there exists a unique all-1 column.
%The complexity of this function is $\Omega(nm)$, since it is hard to distinguish an input with precisely one zero in each column from the input in which one of the zeroes is flipped to one.
%It is also easy to construct a 1-certificate of length $n+m-1$: Take the all-1 column and one zero from each of the remaining columns.  This certificate is not always unique, however, as there can be multiple zeroes in a column and any of them can be chosen in a certificate.  Indeed, it is impossible to \emph{partition} the set of all positive inputs into subcubes of small length. 
%\todo{Alex: I still don't like this.  These ``subcubes'' fall out of the blue, and what is the length of a subcube is totally not clear.  This part can be said better.}

\goos~\etal added a surprisingly simple ingredient that solves this problem: pointers to cells in $M$. 
One can specify which zero to take from each column by requiring that there is a path of pointers that starts in the all-1 column and visits exactly one zero in all other columns, see \rf(fig:1cert-path).
Thus, the set of positive inputs breaks apart into a disjoint union of subcubes of small length.  
Since the pointers provide great flexibility in the positioning of zeroes, this function is still hard for a deterministic algorithm.
\mycommand{scaling}{0.8}

\begin{figure}[htbp]
\centering

\begin{tikzpicture}[semithick, ->, scale=\scaling]

\tikzstyle{cell}=[rectangle,fill=black!10,draw=black,very thick,minimum size=\scaling cm,inner sep=0pt]

\coordinate (pointer) at (+0.3,+0.3);

\draw[step=1cm,gray,very thin,-] (0, 0) grid (8,8);

%1-column
\foreach \y in {0, ..., 7} {
    \node[cell]    (c\y) at (3+0.5, \y+0.5) {$1$};
}
\foreach \y in {0, ...,  4, 6, 7} {
    \node[]    () at ([shift={(pointer)}]c\y) {\scriptsize $\bot$};
}

%zeroes
\foreach \x/\y [count=\i] in 
    {5/3, 6/6, 0/6, 1/5, 2/3, 4/1, 7/2}
{
    \node[cell]    (a\i) at (\x+0.5, \y+0.5) {$0$};
}

\draw (c5) ++(pointer) .. controls +(0:1) and +(90:1) .. (a1);
\draw (a1) ++(pointer) .. controls +(30:1) and +(-75:1) ..  (a2);
\draw (a2) ++(pointer) .. controls +(150:2) and +(30:1) .. (a3);
\draw (a3) ++(pointer) .. controls +(-15:0.5) and +(90:0.7) .. (a4);
\draw (a4) ++(pointer) .. controls +(-15:0.5) and +(90:1)  .. (a5);
\draw (a5) ++(pointer) .. controls +(30:2.5) and +(90:1) .. (a6);
\draw (a6) ++(pointer) .. controls +(60:0.75) and +(180:1) .. (a7);

\end{tikzpicture}

\caption{An example of a $1$-certificate for  the \goosfunction. 
    The center of a cell $x_{i,j}$ shows $\mathrm{val}(x_{i,j})$ and the top right corner shows $\mathrm{point}(x_{i,j})$}
\label{fig:1cert-path}
\end{figure}
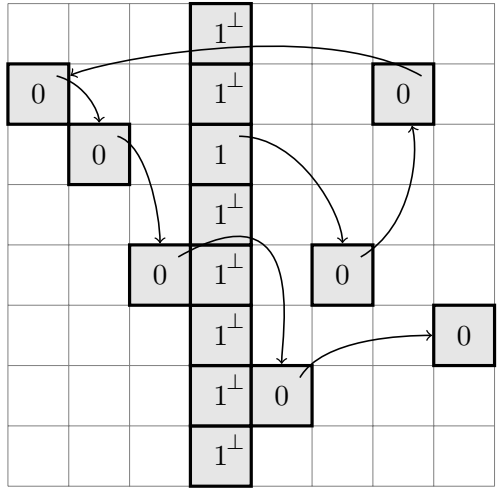

\mycommand{point}{\mathop{\mathrm{point}}}
\mycommand{val}{\mathop{\mathrm{val}}}
Formally, the definition of the \goosfunction is as follows.
Let $n$ and $m$ be positive integers, and $M=[n]\times [m]$ be a grid with $n$ rows and $m$ columns.  Let $\tM = M\cup\{\bot\}$.  Elements in $\tM$ are considered as pointers to the cells of $M$, where $\bot$ stands for the null pointer.

The function $g_{n,m}\colon (\bool \times \tM)^M \rightarrow \bool$ is defined as follows.  We think of each tuple $v = (b,p) \in \bool \times \tM$ in the following way.  The element $b \in \bool$ of the pair is the \emph{value} and the second element $p \in \tM$ is the \emph{pointer}.  We will use the notation 
$\val(v)=b$ and $\point(v)=p$.

Although $g_{n,m}$ is not a boolean function, it can be converted into an associated boolean function by encoding the 
elements of the input alphabet $\Sigma= \bool \times \tM$ using $\ceil[\log|\Sigma|]$ bits. 

An input $(x_{i,j})_{(i,j)\in M}$ evaluates to $1$ if and only if the following three conditions are satisfied (see \rf(fig:1cert-path) for an illustration):
\enumstart
  \item  There is exactly one column $b$ such that $\val(x_{i,b})=1$ for all $i \in [n]$.  We call this the \emph{marked column}. 
  \item In the marked column, there exists a unique cell $a$ such that $x_{a} \neq (1,\bot)$.  We call $a$ the \emph{special element}.
  \item  For the special element $a$, by following the pointers inductively defined as $p_1 = \point(x_a)$ and $p_{s+1} = \point(x_{p_s})$ for $s =1, \ldots, m-2$ we visit every column except the marked column, and $\val(x_{p_s})=0$ for each $s=1, \ldots, m-1$.
	We call $p_1,\dots,p_{m-1}$ the \emph{highlighted zeroes}.
\enumend
For each positive input $x$, the all-1 column satisfying items~(1) and~(2) of the definition, and the highlighted zeroes from~(3) give a unique minimal 1-certificate of $x$.
Thus, the unambiguous $1$-certificate complexity of this function is $n+m-1$.  \goos~\etal showed that this function has deterministic query complexity $mn$, giving a quadratic separation between the two when $n=m$.
%As the unambiquous nondeterministic query complexity is an upper bound on the polynomial degree, this also implies a quadratic separation between deterministic query complexity and degree, improving the previous best separation by Kushilevitz \cite{NW95}.

\subsection{Our technique and pointer functions}
As described in the previous section,
\goos~\etal showed how pointers can make certificates unambiguous without substantially 
increasing their size.  This technique turns out to be quite powerful for other applications as well.

Using the \goosfunction, it is already possible to give a larger separation between randomized and deterministic query complexity than previously known.
For instance, Mukhopadhyay and Sanyal~\cite{mukhopadhyay:goos}, independently from our work, obtained separations $R(f) = \tO(\sqrt{D(f)})$ and $R_0(f) = \tO(D(f)^{3/4})$.
However, these algorithms are rather complicated, and it is not known whether this function can realize an optimal separation between $R_0(f)$ and $D(f)$.

We instead modify the \goosfunction in various ways.  For the separation between $R_0$ and $D$ the key new idea we 
add is the use of \emph{back pointers}; for the separation between $Q$ and $D$, in addition to back pointers, we
further replace the path of zeroes in the \goosfunction with a \emph{balanced tree} whose leaves are the highlighted zeroes; finally, we 
consider a modification of the \goosfunction where there are \emph{multiple} marked columns for the separation 
between $Q$ and $R_0$. 
We now describe our modifications in more detail. 

%As described in the previous section,
%\goos~\etal showed how pointers can make certificates unambiguous without substantially 
%increasing their size.  
%This technique turns out to be quite powerful for other applications as well.
%In fact, all results in this paper stem from a systematic application of this technique, in combination with several new ideas.

\paragraph{Back pointers}
%In order to obtain a nearly optimal separation between $R_0(f)$ and $D(f)$, we modify the \goosfunction.
%A 1-certificate for our function still consists of an all-1 column and one zero from each of the remaining columns, which can be accessed using pointers.  
%The most interesting part of the certificate are the zeroes.
%We equip the zeroes with additional information and impose additional constraints on them.  Different constraints make the function feasible for some models of computation and infeasible for others.  The all-1 column, as in the \goosfunction, serves the sole purpose to certify there is only one ``genuine'' collection of zeroes.  When the all-1 column is found, we can find the zeroes by following the pointers and verify that the constraints are met.
%
%The new pieces of information we add to the zeroes are \emph{back pointers}.
A back pointer points either to a cell in $M$ or to a column in $[m]$.  
For instance, in order to get a quadratic separation between $R_0(f)$ and $D(f)$, we require that each highlighted zero points back to the marked (all-1) column.
%$b$.  
It turns out that this function is still hard for a deterministic algorithm.  A randomized algorithm, on the other hand, 
can take advantage of the back pointers to quickly find the all-1 column, if it exists.  The algorithm begins by querying all elements in a column.
Let $Z$ be the set of zeroes in this column and $B(Z)$ be the set of columns pointed to by the back pointers in $Z$.  If the value of function is 1, $B(Z)$ must contain the marked column.  However, $B(Z)$ may also contain pointers to non-marked columns.  We estimate the number of zeroes in each column of $B(Z)$ by sampling.  If we find a zero in every column of $B(Z)$, then we can reject the input.  
On the other hand, we can tune the sampling so that if no zero is found in a column $c \in B(Z)$, 
then, with high probability, $c$ has at most $|Z|/2$ many zeroes.  We then move to this column $c$ and repeat the process.  
Even if $c$ is not marked, we have made progress by halving the number of zeroes, and, in a logarithmic number of repetitions, we either find the marked column or reject.

In this way, back pointers to the marked column from the highlighted zeroes make the function easy for an $R_0$ algorithm, but hard for a deterministic one.  Similarly, if we only require that at least half of the highlighted zeroes point back to the special element $a$ from condition~(2), the function becomes hard for $R_0$, but easy for $R_1$.

%Finally, if we do not include any back pointers, the function becomes hard even for an $R$ algorithm.  The last two separations, though, require another modification of the function that we are about to describe.

\paragraph{Making partial functions total}
From another vantage point, the pointer technique can essentially turn a partial function into a total one.  This is beneficial as it is easy to prove separations for a partial function.  
Let us describe our separation between Monte Carlo and Las Vegas query complexities as an indicative example.  

It is easy to provide a separation between $R(f)$ and $R_0(f)$ for partial functions.
For example, consider the following partial boolean function $f$ on $m$ variables.
For $x\in\bool^m$, the value of $f(x)$ is 1 if the Hamming weight $|x| \ge m/2$, and $f(x)=0$ if $|x|=0$.  Otherwise, the function is not defined.
The Monte Carlo query complexity of this function is $O(1)$, but its Las Vegas complexity is $m/2+1$,  since it takes that many queries to reject the all-0 string.

How can we obtain a total function with the same property that there are either exactly $0$ or at least $m/2$ marked elements?
We define a variant of the \goosfunction, where we require that, in a positive input, at least $m/2$ of the highlighted zeroes point back to the special element $a$ of condition~(2). 
Consider an auxiliary function $f$ on the columns of the grid $M$.  For a column $j\in[m]$, $f(j)=1$ if and only if the value of the original function is 1, and the highlighted zero in column $j$ points back to $a$.
Thus, by definition, either $f(j)=0$ for all $j$, or $f(j)=1$ for at least half of all $j \in [m]$.  
Given a column $j$, we can find $a$ by analyzing the back pointers contained in column $j$.  When $a$ is found, it is easy to test whether the value of the function is 1 and the highlighted zero in column $j$ points to $a$.  Moreover, this procedure can be made deterministic and uses only $\tO(n+m)$ queries.

Subsequent to our work, Ben-David \cite{ben-david:super-grover} devised a different way of converting a partial function into a total one, and applied it to the forrelation problem \cite{aaronson:forrelation} to give a function $f$ with $Q(f) = \tO(R(f)^{2/5})$, the first super-quadratic separation between these measures.

\paragraph{Use of a balanced tree}
Instead of a path through the highlighted zeroes as in condition~(3) of the \goosfunction, we use a balanced binary tree with the zeroes being the leaves of the tree.  This serves at least three purposes.

First, this allows for even greater flexibility in placing the zeroes.  As they are the leaves of the tree, they are not required to point to other nodes.  This helps in proving Las Vegas lower bounds.

Second, the tree allows ``random access'' to the highlighted zeroes.
This is especially helpful for quantum algorithms.  After the algorithm finds the marked column, it should check the highlighted zeroes.  The last element of the path can be only accessed in $m$ queries.  But if we arrange the zeroes in a binary tree, each zero can be accessed in only a logarithmic number of queries, hence, they can be tested in $\tO(\sqrt{m})$ queries using Grover's search.

Finally, it can make the function hard even for a Monte Carlo algorithm (if no back pointers are present).
In the original \goosfunction, when a randomized algorithm finds a highlighted zero, it can follow the path starting from that cell.  As the zeroes are arranged in a path, the algorithm can thus eliminate half of the potential marked columns on average.  This fact is exploited by the algorithm of Mukhopadhyay and Sanyal~\cite{mukhopadhyay:goos}.

Similarly, when an algorithm finds a node of the tree it can also explore the corresponding subtree.  The difference is that the expected size of a subtree rooted in a node of the tree is only logarithmic.  Thus, even if the algorithm finds this node, it does not learn much, and we are able to prove an $\Omega(nm/\log m)$ lower bound for a Monte Carlo algorithm.

In principle, all of the above problems can be solved by adding direct pointers from the special element ($a$ in condition~(2)) to a zero in each non-marked column (that is, using an $(m-1)$-ary tree of depth 1 instead of a binary tree of depth $O(\log m)$).
The problem with this solution is that the size of the alphabet becomes exponential, rendering this construction useless for boolean functions.

\paragraph{Choice of separating functions}
We have outlined above three ingredients that can be added to the original \goosfunction: using various back pointers, identifying unmarked columns by a tree of pointers, and increasing the number of marked columns.
These ingredients can be added in various combinations to produce different effects.
In order to reduce the number of functions introduced, in this paper we stick to three variations:
\itemstart
\item a function $f_{n,m}$ with back pointers to the marked column from each of the highlighted zeroes.
\item a function $g_{n,m}$ with back pointers to the special element from half of the highlighted zeroes, and
\item a function $h_{k,n,m}$ with $k$ marked columns and no back pointers.
\itemend 
All three functions use a balanced binary tree.

This does not mean that a given separation cannot be proven with a different combination of ingredients.  For example, as outlined above, the $R_0$ vs $D$ separation can be proven for the original \goosfunction by equipping each highlighted zero with a back pointer to the marked column and not using the binary tree.

%%%%%%%%%%%%%%%%%%
\section{Preliminaries}
\label{sec:prelim}
%%%%%%%%%%%%%%%%%%
We let $[n]=\{1,2, \ldots, n\}$.  We use $f(n)=\tO(g(n))$ to mean that there exists constants $c,k$ and an integer $N$ 
such that $|f(n)| \le c |g(n)| \log^k(n)$ for all $n > N$.   
%\todo{removed $\widetilde \Omega$, no longer used.}
%Similarly we let $f(n)=\widetilde \Omega(g(n))$ to mean that 
%there exists constants $c,k$ and an integer $N$ such that $|f(n)| \ge c \tfrac{|g(n)|}{\log(n)^k}$ for all $n > N$.

In the remaining part of this section, we define the notion of query complexity for various models of computation.
For more detail on this topic, the reader may refer to the survey~\cite{buhrman:querySurvey}.
Relations between various models are depicted in \rf(fig:complexities).

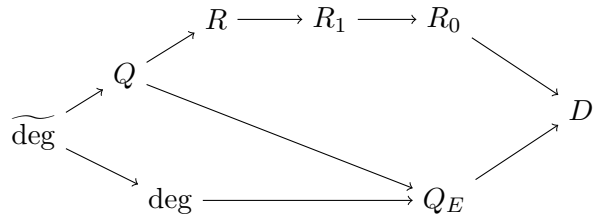
\begin{figure}[tbhp]
    \centering
    \begin{tikzpicture}[->, scale=0.6]
    \node     (adeg) at   (0, 1.5) {$\widetilde{\deg}$};
    \node     (deg)  at   (3,   0)  {$\deg$};
    \node     (q2)   at   (2, 2.75)  {$Q$};
    \node     (r2)   at   (4,   4)  {$R$};
    \node     (r1)   at   (6.5,   4)  {$R_1$};
    \node     (r0)   at   (9,   4)  {$R_0$};
    \node     (qe)   at   (9,   0)  {$Q_E$};
    \node     (d)    at   (12,   2)  {$D$};
    
    \draw (adeg) -- (q2);
    \draw (adeg) -- (deg);
    \draw (q2) -- (r2);
    \draw (r2) -- (r1);
    \draw (r1) -- (r0);
    \draw (r0) -- (d);
    \draw (deg) -- (qe);
    \draw (qe) -- (d);
    \draw (q2) -- (qe);
    \end{tikzpicture}
    \caption{Relations between various complexities.  An arrow means that complexity on the left is at most the complexity on the right.}
    \label{fig:complexities}
\end{figure}

\paragraph{Deterministic query complexity}
Let $\Sigma$ be a finite set.  A decision tree $T$ on $n$ variables and the input alphabet $\Sigma$ is a rooted tree, where
\itemstart
  \item internal nodes are labeled by elements of $[n]$;
  \item every internal node $v$ has degree $|\Sigma|$ and there is a bijection between the edges from $v$ to its  children and the elements of $\Sigma$;
  \item leaves are labeled from $\bool$.
\itemend
The output of the decision tree $T$ on input $x \in \Sigma^n$, denoted $T(x)$, is determined as follows.  
Start at the root.  If this is a leaf, then output 
its label.  Otherwise, if the label of the root is $i \in [n]$, then
follow the edge labeled by 
$x_i$ (this is called a \emph{query}) and recursively evaluate the corresponding subtree.  We say that $T$ computes the function 
$f\colon\Sigma^n \rightarrow \bool$ if $T(x)=f(x)$ on every input $x$.  The cost of 
$T$ on input $x$, denoted 
$C(T,x)$, is the number of internal nodes visited by $T$ on $x$. 
The deterministic query complexity $D(f)$ of $f$ is the minimum over all decision trees $T$ computing $f$ of the maximum over all $x$ of $C(T,x)$.

\paragraph{Randomized query complexity}
We follow the definitions for randomized query complexity given in \cite{Yao77, nisan:bs}. 
A randomized decision tree $T_\mu$ is defined by a probability distribution $\mu$ over deterministic decision trees.  
On input $x$, a randomized decision tree first selects a deterministic decision tree $T$ according to $\mu$, and then 
outputs $T(x)$.  The expected cost of $T_\mu$ on input $x$ is the expectation of $C(T,x)$ when $T$ is 
picked according to $\mu$.  The worst-case expected cost of $T_\mu$ is the maximum over inputs $x$ 
of the expected cost of $T_\mu$ on input $x$.

There are three models of randomized decision trees that differ in the definition of ``computing'' a function $f$.
\itemstart
\item Zero-error (Las Vegas):  It is required that the algorithm gives the correct output with probability 1 for every 
input $x$, that is,
every deterministic decision tree $T$ in the support of $\mu$ computes $f$.  
%Query complexity denoted by $R_0$.
\item One-sided error:  It is required that negative inputs are rejected with probability 1, and positive inputs are accepted with probability at least $1/2$.
%Query complexity denoted by $R_1$.
\item Two-sided error (Monte Carlo):  It is required that the algorithm gives the correct output with probability at least $9/10$ for every input $x$. 
%Query complexity denoted by $R$. 
\itemend

The error probability in the one-sided and two-sided cases can be reduced to $\eps$ by repeating the algorithm $O(\log \frac1\eps)$ times.

We define randomized query complexities $R_0(f), R_1(f)$, and $R(f)$ as the minimum worst-case expected cost of a randomized decision tree to compute $f$ in the zero, one-sided, and bounded-error sense, respectively.

\paragraph{Distributional query complexity}
A common way to show lower bounds on randomized complexity, and the way we will do it in this paper, is to consider 
distributional complexity \cite{Yao77}.  
The cost of a deterministic decision tree $T$ with respect to a distribution $\nu$, denoted $C(T,\nu)$, 
is $\mathbb{E}_{x \leftarrow \nu}[C(T,x)]$.  The decision tree $T$ computes a function $f$ 
with distributional error at most $\delta$ if $\Pr_{x \leftarrow \nu}[T(x)=f(x)] \ge 1-\delta$.  Finally, the 
 $\delta$-error distributional complexity of $T$ with respect to $\nu$, denoted
$\Delta_{\delta,\nu}(f)$, is the minimum of $C(T, \nu)$ over all $T$ that compute $f$ with distributional error at most 
$\delta$.

Yao has shown the following:
\begin{thm}[Yao \cite{Yao77}]
\label{thm:yao}
For any distribution $\nu$ and a function $f$, $R_0(f) \ge \Delta_{0,\nu}(f)$ and $R(f) \ge \tfrac{1}{2}\Delta_{2/10,\nu}(f)$.  
\end{thm}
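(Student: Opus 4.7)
The plan is to apply the standard Yao-style averaging argument to each of the two inequalities separately. In both parts I would fix an arbitrary input distribution $\nu$ and a distribution $\mu$ over deterministic decision trees achieving the corresponding randomized complexity, and the engine of the proof in both parts is the elementary bound
\[ \max_x \mathbb{E}_{T \sim \mu}[C(T,x)] \;\ge\; \mathbb{E}_{x \sim \nu}\mathbb{E}_{T \sim \mu}[C(T,x)] \;=\; \mathbb{E}_{T \sim \mu}[C(T,\nu)], \]
obtained by replacing the maximum over $x$ with an average and then swapping expectations.

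For the first inequality, every tree $T$ in the support of a Las Vegas distribution $\mu$ computes $f$ pointwise, so in particular it makes zero distributional error with respect to $\nu$, and therefore $C(T,\nu) \ge \Delta_{0,\nu}(f)$. Taking the expectation over $T \sim \mu$ and chaining with the display above yields $R_0(f) \ge \mathbb{E}_T[C(T,\nu)] \ge \Delta_{0,\nu}(f)$ immediately.

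For the second inequality, I would take $\mu$ to achieve $R(f)$ and write $\delta_T = \Pr_{x \sim \nu}[T(x) \ne f(x)]$ and $C_T = C(T,\nu)$. Since the algorithm errs with probability at most $1/10$ on every input, averaging over $x \sim \nu$ gives $\mathbb{E}_T[\delta_T] \le 1/10$, and the display above gives $\mathbb{E}_T[C_T] \le R(f)$. Let $A = \{T : C_T < 2R(f)\}$; Markov's inequality yields $\Pr_T[A] \ge 1/2$, and consequently
\[ \mathbb{E}[\delta_T \mid A] \;\le\; \mathbb{E}[\delta_T] \,/\, \Pr[A] \;\le\; (1/10)/(1/2) \;=\; 2/10. \]
Hence some $T \in A$ has both $\delta_T \le 2/10$ and $C_T < 2R(f)$; this single $T$ witnesses $\Delta_{2/10,\nu}(f) \le C_T < 2R(f)$, which rearranges to $R(f) > \tfrac{1}{2}\Delta_{2/10,\nu}(f)$.

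The main subtlety, and the step I would watch most carefully, is the simultaneous control of cost and error. Two separate applications of Markov combined with a union bound give $\Pr[\text{low cost and low error}] \ge 1/2 + 1/2 - 1 = 0$, which is useless; the trick is to first condition on the low-cost event $A$ and only then apply Markov to the error, which works because $\Pr[A] \ge 1/2$ inflates the error bound by only a factor $2$. A minor technicality is that $R(f)$ could a priori be an infimum rather than a minimum, but running the argument with an $\mu$ that achieves $R(f)+\epsilon$ and letting $\epsilon \to 0$ disposes of this.
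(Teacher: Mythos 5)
Your proof is correct and is the standard Yao minimax/averaging argument: the zero-error case is the direct averaging-plus-swap, and the bounded-error case is handled by the classical trick of conditioning on the low-cost event before applying Markov to the error, which avoids the useless union bound and yields the factor-$2$ loss exactly. The paper does not actually prove this statement — it cites Yao \cite{Yao77} and only remarks on the origin of the constant $\tfrac{2}{10}$ — so there is nothing to diverge from; your reconstruction matches the canonical argument.
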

In general, Yao shows that the $\delta$-error randomized complexity of a function $f$ is at least 
$\tfrac{1}{2}\Delta_{2\delta,\nu}(f)$, for any distribution $\nu$.  We obtain the constant $\tfrac{2}{10}$ on the right hand 
side as we have defined $R(f)$ for algorithms that err with probability at most $\tfrac{1}{10}$.

%We say that a zero-error (Las Vegas)
%randomized protocol $T_\mu$ computes $f$ if each deterministic 
%decision tree in the support of $\mu$ computes $f$.  The expected cost of $T_\mu$ on input $x$ is the expected number of 
%queries made on input $x$, according to the distribution $\mu$.  The zero-error complexity $R_0(f)$ of $f$ is the 
%minimum, over all zero-error protocols $T_\mu$ computing $f$, of the maximum expected cost of an input on $T_\mu$.
%
%We will also define a two-sided bounded-error notion of randomized complexity, also known as Monte Carlo complexity.  In 
%this model, we say that a randomized protocol $T_\mu$ computes $f$ with error $1/3$ if $T_\mu$ evaluates to $f(x)$ with 
%probability at least $2/3$ for every input $x$.  The cost of the protocol $T_\mu$ is the maximum depth of a decision tree in 
%the support of $\mu$.  The two-sided error complexity of $f$, denoted $R(f)$ is the minimum cost of a randomized 
%protocol that computes $f$ with error $1/3$.
%
%Finally, we define a one-sided error notion of randomized complexity.  A protocol $T_\mu$ computes a function $f$ 
%with one-sided error if the protocol always outputs $0$ if $f(x)=0$, and outputs $1$ if $f(x)=1$ with probability at least 
%$2/3$.  The one-sided error complexity $R_1(f)$ is defined as the minimum cost of a randomized protocol that 
%computes $f$ with one-sided error.  

\paragraph{Quantum query complexity}
The main novelty in a quantum query algorithm is that queries can be made in superposition.  For this exposition 
we assume $\Sigma = [|\Sigma|]$ (this identification can be made in an arbitrary way).  The memory of a 
quantum query algorithm contains two registers, the query register $H_Q$ which holds two integers $j \in [n]$ and 
$p \in \Sigma$ and the workspace $H_W$ which holds an arbitrary value.  
A query on input $x$  is encoded as a unitary operation $O_x$ in the following way.  On input $x$ and an 
arbitrary basis state $\ket |j,p>\ket |w> \in H_Q \otimes H_W$,
\[
O_x \ket |j,p> \ket |w> = \ket |j, p+x_j \bmod |\Sigma|>\ket |w> \, .
\]
A quantum query algorithm begins in the initial state $\ket |0,0> \ket |0>$ and on input $x$ proceeds by interleaving 
arbitrary unitary operations independent of $x$ and the operations $O_x$.  The cost of the algorithm is the number of 
applications of $O_x$.  The outcome of the algorithm is determined by a two-outcome measurement, specified by a complete set of projectors 
$\{\Pi_0, \Pi_1\}$.  If $\ket|\Psi_x>$ is the final state of the algorithm on input $x$, the probability that the algorithm 
outputs $1$ is $\|\Pi_1 |\Psi_x\rangle\|^2$.  
The exact quantum query complexity of the function $f$, denoted $Q_E(f)$, is the minimum 
cost of a quantum query algorithm that outputs $f(x)$ with probability $1$ for every input $x$.  The bounded-error quantum 
query complexity of the function $f$, denoted $Q(f)$, is the minimum cost of a quantum query algorithm that outputs $f(x)$ with probability at least $9/10$ for every input $x$.

We will describe our quantum algorithms as classical algorithms which use the following well-known quantum algorithms
as subroutines.  Let $O_x$ be a quantum oracle encoding a string $x\in\cube$.
\begin{itemize}
\item Grover's search \cite{grover:search, brassard:amplification}: Assume it is known that $|x| \ge t$.  There
is a quantum algorithm using $O(\sqrt{n/t})$ queries to $O_x$ that finds an $i$ such that $x_i=1$ with probability at least $9/10$.

\item Exact Grover's search \cite{brassard:amplification}. Assume it is known that $|x|=t$. There 
is a quantum algorithm using $O(\sqrt{n/t})$ queries to $O_x$ that finds an $i$ such that $x_i=1$ with certainty.  The case $t=n/2$ is essentially the Deutsch-Jozsa problem~\cite{deutsch:jozsa}. 

\item Approximate counting \cite{brassard:counting}: Let $t = |x|$.  There is a quantum algorithm making $O(\sqrt{n})$
queries to $O_x$ that 
outputs a number $\tilde t$ satisfying $| \tilde t- t | \le \tfrac{t}{10}$ with probability at least $9/10$. 
\item Amplitude amplification \cite{brassard:amplification}:
Assume a quantum algorithm $\cA$ prepares a state $\ket|\psi> = \alpha_0 \ket|0>\ket|\psi_0> + \alpha_1 \ket|1>\ket|\psi_1>$, where $\psi$, $\psi_0$ and $\psi_1$ are unit vectors, and $\alpha_0$ and $\alpha_1$ are real numbers.
Thus, the success probability of $\cA$, i.e., probability of obtaining 1 in the first register after measuring $\ket|\psi>$, is $\alpha_1^2$.

Assume a lower bound $p$ is known on $\alpha_1^2$.
There exists a quantum algorithm that makes $O(1/\sqrt{p})$ calls to $\cA$, and either fails, or generates the state $\ket|1>\ket|\psi_1>$.  The success probability of the algorithm is at least $9/10$.
 \end{itemize}

In all of these quantum subroutines, the error probability can be reduced to $\eps$ by repeating the algorithm $O(\log \frac1\eps)$ times.
 
 %\begin{thm}[Quantum Amplitude Amplification~\cite{brassard:amplification}]
%Let $\cA$ be a quantum algorithm that prepares a state $\ket|\psi> = \alpha_0 \ket|0>\ket|\psi_0> + \alpha_1 \ket|1>\ket|\psi_1>$, where $\psi$, $\psi_0$ and $\psi_1$ are unit vectors.
%There exists a quantum algorithm that calls $\cA$ $O(1/|\alpha_1|)$ times and generates the state $\ket|1>\ket|\psi_1>$ with constant success probability.
%\end{thm}
%
%\begin{thm}[Quantum Counting~\cite{brassard:counting}]
%Let $O_x$ be a quantum oracle encoding a sting $x\in\cube$.
%There exists a quantum procedure that estimates the Hamming weight of $x$ with relative precision $1/10$ using $O(\sqrt{n})$ queries to $O_x$.  The algorithm has error probability at most $1/3$.
%\end{thm}

%In our algorithms, we use only well-known tools.  Let us only define Grover's search~\cite{grover:search} here as our most commonly used tool in quantum algorithms.  
%Let $O_x$ be a quantum oracle encoding a sting $x\in\cube$, and $1\le k<n$ be a fixed integer.
%If it is known that $x$ contains at least $k$ elements equal to 1, Grover's search finds one of them with constant error probability using $O(\sqrt{n/k})$ queries to $O_x$.  
%If $x$ contains exactly $k$ elements equal to 1, Grover's search finds one with probability 1.
%\todo{Reference for exact Grover--?}

\paragraph{Polynomial degree}
Every boolean function $f\colon \bool^n \rightarrow \bool$ has a unique expansion as a multilinear polynomial 
$p = \sum_{S \subseteq [n]} \alpha_s \prod_{i \in S} x_i$.  The 
\emph{degree} of $f$, denoted $\deg(f)$, is the size of a largest monomial  $x_S$ in $p$ with 
nonzero coefficient 
$\alpha_S$.  The \emph{approximate degree} of $f$, denoted $\tdeg(f)$, is 
\[
\tdeg(f) = \min\sfig{ \deg(g) \midA |g(x) - f(x)| \le \tfrac{1}{10} \text{ for all } x \in \bool^n } \enspace.
\]
For any quantum algorithm that uses $T$ queries to the quantum oracle $O_x$, its acceptance probability is a polynomial of degree at most $2T$ \cite{beals:pol}.
Therefore, $\deg(f)\leq 2Q_E(f)$ and $\tdeg(f)\leq 2Q(f)$.

\paragraph{Certificate complexity}
A \emph{partial assignment} in $\Sigma^n$ is a string in $a \in (\Sigma \cup \{\star\})^n$.  The length of a partial 
assignment is the number of non-star values.  A string $x \in \Sigma^n$ is \emph{consistent} with an assignment $a$ if 
$x_i=a_i$ whenever $a_i \ne \star$.  Every partial assignment defines a \emph{subcube}, which is the set of all 
strings consistent with that assignment.  For every subcube there is a unique partial assignment that defines it, and 
we define the length of a subcube as the length of this assignment. 

%A \emph{subcube} is a set $S \subseteq \Sigma^n$ of all strings consistent with 
%some partial assignment.  The length of a subcube is the length of its 
%associated partial assignment.  
%\todo{removed $b$-subcube---not needed for defining one-sided versions.}

For $b\in\bool$, a \emph{$b$-certificate} for a function $f\colon\Sigma^n\to\bool$ is a partial assignment such that the value of $f$ is $b$ for all inputs in the associated subcube. 
The \emph{$b$-certificate complexity} of $f$ is the smallest number $k$ such that the set $f^{-1}(b)$ can be 
written as a union of subcubes of length at most $k$.
The \emph{unambiguous $b$-certificate complexity} of $f$ is the smallest number $k$ such that the set $f^{-1}(b)$ can be 
written as a disjoint union of subcubes of length at most $k$.

%An \emph{assignment} in $\Sigma^n$ is the set of all inputs $x=(x_j)\in\Sigma^n$ given by some conditions $x_{j_1} = v_1,\;\dots,\; x_{j_k} = v_k$ with $j_s\in[n]$ and $v_s\in\Sigma$ for all $s\in[k]$.  The number $k$ is called the \emph{length} of the assignment.
%
%For $b\in\bool$, a \emph{$b$-certificate} for a function $f\colon\Sigma^n\to\bool$ is an assignment such that the value of $f$ is $b$ for all inputs in the assignment.
%The \emph{$b$-certificate complexity} of $f$ is the smallest number $k$ such that the set $f^{-1}(b)$ can be covered by 1-certificates of length $k$.
%The \emph{unambiguous $b$-certificate complexity} of $f$ is the smallest number $k$ such that the set $f^{-1}(b)$ can be partitioned into 1-certificates of length $k$.

\paragraph{Booleanizing a function}
While we define functions over a nonboolean alphabet $\Sigma$, it is more typical in query complexity to discuss boolean 
functions.  Fix a surjection $b\colon \bool^{\ceil [\log|\Sigma|]} \rightarrow \Sigma$.  For a function 
$f\colon \Sigma^n \rightarrow \bool$, we define the associated boolean function 
$\tilde f\colon \bool^{n \ceil[\log|\Sigma|]} \rightarrow \bool$ by 
$\tilde f(x) = f(b(x))$.  A lower bound on $f$ in the model where a query returns an element of $\Sigma$ 
will also apply to $\tilde f$ in the model where a query returns a boolean value.  Also, if $f$ can be computed with $t$ 
queries then we can convert this into an algorithm for computing $\tilde f$ with $t \ceil [\log |\Sigma|]$ queries by querying 
all the bits of the desired element.   We will state our theorems for nonboolean functions where a query returns an 
element of $\Sigma$ and the alphabet size $|\Sigma|$ will always be polynomial in the input length.
By the remarks above, such separations can be converted into separations for the associated 
boolean function with a logarithmic loss.

%\begin{prp}
%Assume there is a quantum algorithm that evaluates a function $f$ with error probability at most $1/3$ in $T$ queries.  Then, there exists a quantum algorithm that evaluates $f$ with error probability at most $1/\eps$ using $O(T\log(1/\eps))$ queries.
%\end{prp}

%\begin{thm}[Quantum Amplitude Amplification~\cite{brassard:amplification}]
%Let $\cA$ be a quantum algorithm that prepares a state $\ket|\psi> = \alpha_0 \ket|0>\ket|\psi_0> + \alpha_1 \ket|1>\ket|\psi_1>$, where $\psi$, $\psi_0$ and $\psi_1$ are unit vectors.
%There exists a quantum algorithm that calls $\cA$ $O(1/|\alpha_1|)$ times and generates the state $\ket|1>\ket|\psi_1>$ with constant success probability.
%\end{thm}
%
%\begin{thm}[Quantum Counting~\cite{brassard:counting}]
%Let $O_x$ be a quantum oracle encoding a sting $x\in\cube$.
%There exists a quantum procedure that estimates the Hamming weight of $x$ with relative precision $1/10$ using $O(\sqrt{n})$ queries to $O_x$.  The algorithm has error probability at most $1/3$.
%\end{thm}

%%%%%%%%%%%%%%%%%%%%%%%%%%%%%
\section{Separations against deterministic complexity}
\label{sec:det}
%%%%%%%%%%%%%%%%%%%%%%%%%%%%%
\mycommand{bpoint}{\mathop{\mathrm{bpoint}}}
\mycommand{lpoint}{\mathop{\mathrm{lpoint}}}
\mycommand{rpoint}{\mathop{\mathrm{rpoint}}}

Let $n$, $m$, $M$, $\tM$ be as in the definition of the \goosfunction.
Let also $\tC = [m]\cup\{\bot\}$ be the set of pointers to the columns of $M$.
The input alphabet of our function is $\Sigma = \bool \times \tM \times \tM \times \tC$.
For $v\in \Sigma$, we call the elements of the quadruple the \emph{value}, the \emph{left pointer}, the \emph{right pointer} and the \emph{back pointer} of $v$, respectively.  We use notation $\val(v)$, $\lpoint(v)$, $\rpoint(v)$, and $\bpoint(v)$ for them in this order.

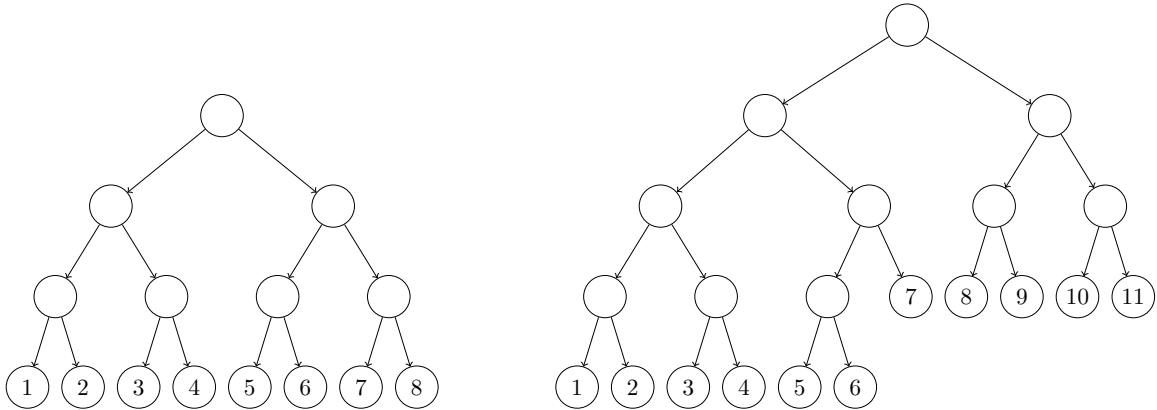
\begin{figure}[thbp]
\[
    \begin{tikzpicture}[->, scale=0.8,
    level distance = 1.5cm,
    sibling distance = 0.2cm,
    every tree node/.style={vert}]
    \tikzstyle{vert}=[circle,draw=black,minimum size=0.7cm,inner sep=0pt]
\tikzset{edge from parent/.style=
    {draw, edge from parent path={(\tikzparentnode) -- (\tikzchildnode)}}}

    \Tree
        [.{ } 
            [.{ }
                [.{ } {1} {2} ]
                [.{ } {3} {4} ]
            ]
            [.{ }
                [.{ } {5} {6} ]
                [.{ } {7} {8} ]
            ]
        ]
    \end{tikzpicture}
\qquad\qquad		
    \begin{tikzpicture}[->, scale=.8,
    level distance = 1.5cm,
    sibling distance = 0.2cm,
    every tree node/.style={vert}]
    \tikzstyle{vert}=[circle,draw=black,minimum size=0.7cm,inner sep=0pt]
    \tikzset{edge from parent/.style=
        {draw, edge from parent path={(\tikzparentnode) -- (\tikzchildnode)}}}
    \Tree
    [.{ } 
        [.{ }
            [.{ }
                [.{ } {$1$} {$2$} ]
                [.{ } {$3$} {$4$} ]
            ]
            [.{ }
                [.{ } {$5$} {$6$} ]
                {$7$}
            ]
        ]
        [.{ }
            [.{ }
                $8$
                $9$
            ]
            [.{ }
                $10$
                $11$
            ]
        ]
    ]
    \end{tikzpicture}
\]
    \caption{A completely balanced tree on 8 leaves, and a balanced tree on 11 leaves.}
    \label{fig:tree}
\end{figure}

\mytxtcommand{lft}{`left'}
\mytxtcommand{rht}{`right'}
Let $T$ be a fixed balanced oriented binary tree with $m$ leaves and $m-1$ internal vertices.
For instance, we can make the following canonical choice.
If $m=2^k$ is a power of two, we use the completely balanced binary tree on $m$ leaves as depicted in \rf(fig:tree) on the left.  Each leaf is at distance $k$ from the root.
Otherwise, assume $2^k<m<2^{k+1}$.  Take the completely balanced tree on $2^k$ leaves, and add a pair of children to each of its $m-2^k$ leftmost leaves.  An example is in \rf(fig:tree) on the right.

We have the following labels in $T$.
The outgoing arcs from each node are labeled by \lft and \rht.
The leaves of the tree are labeled by the elements of $[m]$ from left to right, 
with each label used exactly once.
For each leaf $j\in[m]$ of the tree, the path from the root to the leaf defines a sequence of \lft and \rht of length $O(\log m)$, which we denote $T(j)$.

The function $f_{n,m}\colon \Sigma^M\to\bool$ is defined as follows.
For an input $x = (x_{i,j})$, we have $f_{n,m}(x)=1$ if and only if the following 
conditions are satisfied (for an illustration refer to \rf(fig:1cert-tree-bp)):
\enumstart
  \item There is exactly one column $b\in[m]$ such that $\val(x_{i,b})=1$ for all $i \in [n]$.  We refer to it as the \emph{marked column}.
  \item In the marked column, there exists a unique cell $a$ such that $x_{a} \neq (1,\bot,\bot,\bot)$.  We call $a$ the \emph{special element}.
	\item\label{3} For each non-marked column $j\in [m]\setminus\{b\}$, let $\ell_j$ be the end of the path which starts at the special element $a$ and follows the pointers $\lpoint$ and $\rpoint$ as specified by the sequence $T(j)$.
	We require that $\ell_j$ exists (no pointer on the path is $\bot$), $\ell_j$ is in the $j$th column, and $\val\sA[x_{\ell_j}] = 0$.
	We call $\ell_j$ the \emph{leaves of the tree}. 
	\item Finally, for each non-marked column $j\in[m]\setminus\{b\}$, we require that $\bpoint\sA[x_{\ell_j}] = b$.
\enumend

\begin{figure}[tbph]
    \centering
    
    \begin{tikzpicture}[semithick, ->, scale=\scaling]
    
    \clip(-1,-1.25) rectangle (9,8); %otherwise the curved backpointers make canvas too large
    
    \tikzstyle{cell}=[rectangle,fill=black!10,draw=black,very thick,minimum size=\scaling cm,inner sep=0pt]
    
    \coordinate (lp) at  (-0.3, 0.0);
    \coordinate (rp) at  (+0.3, 0.0);
    \coordinate (bp) at  ( 0.0,-0.3);
    \coordinate (val) at ( 0.0, 0.10);

    \coordinate (botc) at (3.5,0);
                
    \draw[step=1cm,gray,very thin,-] (0, 0) grid (8,8);

    %1-column
    \foreach \y in {0, ..., 7} {
        \node[cell]    (c\y) at (3+0.5, \y+0.5) { };
        \node     ()  at ([shift={(val)}]c\y) {$1$};
    }
    \foreach \y in {0, ...,  5, 7} {
        \node    () at ([shift={(lp)}]c\y) {\scriptsize $\bot$};
        \node    () at ([shift={(rp)}]c\y) {\scriptsize $\bot$};
        \node    () at ([shift={(bp)}]c\y) {\scriptsize $\bot$};
    }
    \coordinate (spec) at (c6);
    \node    () at ([shift={(bp)}]spec) {\scriptsize $\bot$};

    %zeroes
    \foreach \x/\y [count=\i] in 
    {0/1, 1/0, 2/1, 4/1, 5/1, 6/2, 7/3}
    {
        \node[cell]    (a\x) at (\x+0.5, \y+0.5) {$0$};
%        \node          ()  at ([shift={(val)}]a\x) {$0$};
    }

    %tree inner nodes
    \foreach \x/\y [count=\i] in 
    {1/4, 5/5, 0/3, 2/3, 4/3, 6/4}
    {
        \node[cell]    (t\i) at (\x+0.5, \y+0.5) { };
    }

    %tree pointers
    \draw (spec) ++(lp) .. controls +(-180:1.25)  and +(90:1.0) .. (t1);
    \draw (spec) ++(rp) .. controls +( 15:0.5)  and +(90:1) .. (t2);
    \draw (t1) ++(lp) .. controls +(180:0.4)   and +( 90:0.75) .. (t3);
    \draw (t1) ++(rp) .. controls +(   0:0.4)  and +(90:0.75) .. (t4);
    \draw (t2) ++(lp) .. controls +(180:0.5)   and +( 90:1) .. (t5);
    \draw (t2) ++(rp) .. controls +( 0:0.4)    and +( 90:0.75) .. (t6);
    
    \draw (t3) ++(lp) .. controls +(-150:0.5)   and +( 120:1.25) .. (a0);
    \draw (t3) ++(rp) .. controls +(-15:0.5)   and +( 90:1) .. (a1);
    \draw (t4) ++(lp) .. controls +(-150:0.5)   and +(120:1.25) .. (a2);

    \draw (t5) ++(lp) .. controls +(-150:0.5)   and +(120:1.25) .. (a4);
    \draw (t5) ++(rp) .. controls +(-30:0.75)   and +( 90:1.0) .. (a5);
    \draw (t6) ++(lp) .. controls +(-150:0.5)   and +(120:1.25) .. (a6);
    \draw (t6) ++(rp) .. controls +( 0:0.4)   and +( 90:0.75) .. (a7);

    %backpointers
    \begin{scope}[very thin]
    \draw (a0) ++(bp) .. controls +(-90:0.5)  and +(-135:3.0) .. ([xshift=-0.1cm]botc);
    \draw (a1) ++(bp) .. controls +(-75:0.5)  and +(-145:2.0) .. ([xshift=-0.2cm]botc);
    \draw (a2) ++(bp) .. controls +(-90:0.5)  and +(-145:1.0) .. ([xshift=-0.3cm]botc);
    \draw (a4) ++(bp) .. controls +(-90:0.5)  and +(-35:1.0)  .. ([xshift=+0.3cm]botc);
    \draw (a5) ++(bp) .. controls +(-90:0.5)  and +(-35:2.0)  .. ([xshift=+0.2cm]botc);
    \draw (a6) ++(bp) .. controls +(-90:0.5)  and +(-45:3.0)  .. ([xshift=+0.1cm]botc);
    \draw (a7) ++(bp) .. controls +(-90:0.5)  and +(-45:5.0)  .. ([xshift=+0.0cm]botc);
    \end{scope}

    \end{tikzpicture}
    
    \caption{An example of a $1$-certificate for the function $f_{8,8}$.  The tree $T$ is like in \rf(fig:tree) on the left.
        The center of a cell $x_{i,j}$ shows $\mathrm{val}(x_{i,j})$, the bottom of the cell shows $\mathrm{bpoint}(x_{i,j})$ and the bottom left and right sides show $\mathrm{lpoint}(x_{i,j})$ and $\mathrm{rpoint}(x_{i,j})$, respectively.
				Values and pointers that are not shown can be chosen arbitrarily.
				}
    \label{fig:1cert-tree-bp}
\end{figure}
\mytxtcommand{defoff}{the definition of $f_{n,m}$}

\begin{thm}
\label{thm:D-1}
If $n = 2m$ and $m$ is sufficiently large, the deterministic query complexity $D(f_{n,m}) \ge m^2$.
\end{thm}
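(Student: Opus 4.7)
The plan is to use an adversary argument. Given any deterministic decision tree making $q < m^2$ queries, I construct both a $0$-input and a $1$-input consistent with all answers produced by the adversary, forcing the algorithm to be wrong on at least one of them; this yields $D(f_{n,m}) \ge m^2$.

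The adversary I would use defaults to answering each query by $(1,\bot,\bot,\bot)$ (value $1$ with all three pointers null). A cell answered this way can consistently play either the role of a non-special cell in the marked column $b$, or the role of an ``irrelevant'' cell (neither leaf, nor internal tree vertex, nor special element) in a non-marked column. To allow a fully queried column $j$ to still play the role of a non-marked column in the $1$-scenario, I would modify the rule so that the final ($n$th) query inside any column $j$ distinct from a candidate marked column $c^*$ is answered by $(0,\bot,\bot,c^*)$, turning that cell into the tree leaf $\ell_j$ (value $0$, back pointer to $c^*$). The column $c^*$ is chosen adaptively.

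I would then verify that both scenarios survive after $q < m^2$ queries. For the $0$-input, pick two columns that are not fully queried (possible since the number of fully queried columns is at most $q/n < m^2/(2m) = m/2$, leaving $\ge m/2 \ge 2$ candidates) and set all their unqueried cells to value $1$; this creates two all-$1$ columns, violating the uniqueness clause of condition~(1), so $f_{n,m}$ evaluates to $0$. For the $1$-input, place the special element $a$ in an unqueried cell of $c^*$, use the queried $0$-cell (or any unqueried cell) as the leaf $\ell_j$ for each $j \neq c^*$, and fit the $O(m)$ internal tree vertices into unqueried cells elsewhere; a simple counting argument confirms that enough unqueried cells remain when $q < m^2$.

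The main obstacle will be the adaptive choice of $c^*$. Since the adversary must commit to answers---including the back-pointer value $c^*$ written into the cell by the rule above---at the moment each query is made, all fully queried columns must share the same eventual marked column in the $1$-scenario. My plan is to simulate the (deterministic) algorithm under each candidate choice $c^* \in [m]$ and pick one whose resulting trace does not fully query $c^*$ itself; such a choice must exist when $q$ is not too large, since the algorithm can fully query at most $q/n$ columns. The delicate step is translating this into the sharper $m^2$ bound: I would argue that each non-$c^*$ column the algorithm must ``process'' (so as to rule out an alternative marked column in the $0$-scenario) costs the algorithm $n = 2m$ queries, and that $\Omega(m)$ such columns must be processed, giving $\Omega(nm) = \Omega(m^2)$ queries in total.
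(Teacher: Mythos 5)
Your high-level plan (adversary argument, maintain both a $0$- and a $1$-scenario, choose a candidate marked column among not-yet-exhausted columns) is the same strategy the paper uses, but there is a genuine gap in the way your adversary commits to the back pointer value $c^*$, and the paper's adversary is specifically engineered to avoid it.

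Your adversary answers the $n$th query in a column $j\ne c^*$ with $(0,\bot,\bot,c^*)$, so the very first time any column is about to be exhausted the adversary must already have fixed $c^*$, and it must be fixed once and for all (the algorithm sees the back pointer). For the $1$-scenario to be constructible at the end, you need $c^*$ to be a column that was \emph{not} fully queried: if $c^*$ is fully queried, every cell of $c^*$ is $(1,\bot,\bot,\bot)$, so $c^*$ is an all-$1$ column with no special element, and then \emph{no} positive input is consistent with the transcript (if $c^*$ were the marked column, condition~(2) fails; if some other column were marked, condition~(1) fails because $c^*$ is also all-$1$). Your proposed repair, ``simulate the algorithm under each candidate $c^*$ and pick one whose trace does not fully query $c^*$,'' does not resolve the circularity: the set of fully queried columns $F(c^*)$ depends on $c^*$, and there is no pigeonhole reason why some $c^*$ with $c^*\notin F(c^*)$ must exist. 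Indeed, an algorithm that queries column $1$ completely and then immediately follows the back pointer it sees there forces $c^*\in F(c^*)$ for every $c^*\ne 1$, so the argument would have to handle the $c^*=1$ branch separately, and it is not clear how to make this uniform.

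The paper avoids this by making the adversary's answers \emph{independent of the eventual marked column}: in each column, the first $m$ queries get $(1,\bot,\bot,\bot)$, and the subsequent queries get $(0,\bot,\bot,1), (0,\bot,\bot,2), \ldots$, so a fully queried column ends up with one zero back-pointing to each of the $m$ columns. Then the marked column $b$ can be chosen \emph{after} all queries are answered, and every other column already contains a zero with back pointer to $b$ regardless of which $b$ is picked. The condition for undeterminedness becomes simply ``some column has at most $m$ queried cells and at least $4m$ cells are unqueried,'' which holds whenever $q\le m^2$ and $m\ge 4$; there is no need for the vaguer ``$\Omega(m)$ columns must be processed'' counting you sketch at the end. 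You should replace the $c^*$-dependent rule with this $c^*$-independent one (or find another mechanism to defer the choice of marked column), otherwise the $1$-scenario construction is not justified.
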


\pfstart
We describe an adversary strategy that ensures that the value of the function is undetermined after $m^2$ queries, provided $m \ge 4$. 
Assume a deterministic query algorithm queries a cell $(i,j)$.
Let $k$ be the number of queried cells in column $j$, including the cell $(i,j)$.
If $k\le m$, the adversary replies with $(1,\bot,\bot,\bot)$.
Otherwise, the response is $(0,\bot,\bot,k-m)$.

Note that, after all the cells are queried in some column, it contains $m$ cells with $(1,\bot,\bot,\bot)$ and one cell with $(0,\bot,\bot,b)$ for each $b\in[m]$.

%On some moment in the algorithm, let $B$ be the set of columns, where at most $m$ cells were queried.
%The idea of the adversary strategy is summarized in the following claim.

\begin{clm}
If there is a column $b\in[m]$ with at most $m$ queried cells and there are at least $4m$ unqueried cells in total, then the function value is undetermined.
\end{clm}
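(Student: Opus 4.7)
The plan is to prove the claim by exhibiting two completions of the unqueried cells, both consistent with the adversary's responses so far, one yielding $f_{n,m}(x)=1$ and the other yielding $f_{n,m}(x)=0$. For a column $j$, let $q_j$ denote the number of queried cells in column $j$.

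For the $f_{n,m}(x)=0$ completion, I will arrange that every column contains a cell of value $0$, so no marked column exists. In any column with $q_j>m$, the adversary has already answered with zeroes. In any column with $q_j\le m$ (including column $b$, by hypothesis), there are $n-q_j\ge m$ unqueried cells, so I can set one of them to $(0,\bot,\bot,\bot)$. The remaining unqueried cells are set arbitrarily. No column is all-$1$, hence $f_{n,m}(x)=0$.

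For the $f_{n,m}(x)=1$ completion, I will make column $b$ the marked column. Since $q_b\le m$, all queried cells in column $b$ were answered $(1,\bot,\bot,\bot)$, and column $b$ contains at least $n-m=m$ unqueried cells. Pick any unqueried cell $a$ in column $b$ as the special element. For each non-marked column $j\ne b$, choose the leaf $\ell_j$ in column $j$ as follows: if $q_j\ge m+b$, then the $(m+b)$-th queried cell in column $j$ was answered $(0,\bot,\bot,b)$ by the adversary and I take it; otherwise $q_j<m+b\le 2m$, so column $j$ still has an unqueried cell, and I set one to $(0,\bot,\bot,b)$. I then place the remaining $m-2$ internal nodes of the tree $T$ on any unused unqueried cells and set their $\lpoint,\rpoint$ fields to realize the tree structure with root $a$. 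Finally, I set every other unqueried cell in column $b$ to $(1,\bot,\bot,\bot)$, and all other unqueried cells arbitrarily. The four conditions of the definition of $f_{n,m}$ can then be verified directly.

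The main obstacle is to check that the tree embedding actually fits, i.e., that enough unqueried cells remain. Note that internal tree nodes must come from unqueried cells, because the adversary always sets $\lpoint$ and $\rpoint$ to $\bot$. In total I consume at most one unqueried cell for $a$, at most $m-1$ for the leaves $\ell_j$, and $m-2$ for the non-root internal nodes, so at most $2m-2$ unqueried cells; this is comfortably below the $4m$ available by hypothesis. A small edge case to check is that when a column $j$ is fully queried ($q_j=2m$), a queried zero with back pointer $b$ is guaranteed, because the queried zeroes in column $j$ carry back pointers $1,2,\dots,q_j-m=m$, which include $b$.
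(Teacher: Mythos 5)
Your proof follows essentially the same outline as the paper's: exhibit a $0$-completion (add a zero to any column that lacks one) and a $1$-completion making $b$ the marked column, using the adversary's responses $(0,\bot,\bot,k-m)$ to locate a zero with back pointer $b$ in columns that were queried too far, and an unqueried cell otherwise. Those parts are fine.

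The one place you fall short is exactly the spot you flagged as the ``main obstacle.'' The $m-2$ internal tree nodes have non-$\bot$ left/right pointers, so in a $1$-input they must be placed in unqueried cells that lie \emph{outside column $b$} (every cell of column $b$ other than $a$ must equal $(1,\bot,\bot,\bot)$) and that are not already claimed as some $\ell_j$. Your count ``at most $2m-2$ unqueried cells consumed, comfortably below $4m$'' ignores this constraint entirely and therefore does not dispose of the obstacle: as many as $2m$ of the $4m$ unqueried cells may sit in column $b$ and be unusable. The inequality that actually certifies the embedding is $4m - 2m - (m-1) = m+1 \ge m-2$, i.e.\ after discarding every unqueried cell in column $b$ and every $\ell_j$ there are still at least $m-2$ unqueried cells available for the internal nodes. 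This is the computation the paper does, and it shows the constant $4m$ is used essentially (your bound $2m-2<4m$ would, for instance, still pass with a hypothesis of only $3m$ unqueried cells, yet $3m$ does not in general leave room for the tree). With that correction the argument is complete and coincides with the paper's.
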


\begin{proof}
First, the adversarial strategy is such that no all-one column can ever be constructed, hence, by answering all remaining queries with value $0$, the adversary can make the function evaluate to $0$.

Now we show that the function value can also be set to $1$.
For each column $j\ne b$, define $\ell_j$ as follows.
If column $j$ contains an unqueried cell $(i,j)$, let $\ell_j = (i,j)$, and assign the quadruple $(0,\bot,\bot,b)$ to this cell.
If all elements in column $j$ were queried, then, by the adversary strategy, it contains a cell with quadruple $(0,\bot,\bot,b)$.  Let $\ell_j$ be this cell.

Next, the queried cells in column $b$ only contain $(1, \bot, \bot, \bot)$.
Assign the quadruple $(1, \bot, \bot, \bot)$ to the remaining cells in column $b$ except for one special cell $a$.  Using the cell $a$ as the root construct a tree of pointers isomorphic to $T$ using as internal vertices some of the remaining unqueried cells, and such that the $j$th leaf is $\ell_j$.
Finally, assign the quadruple $(1, \bot, \bot, \bot)$ to every other cell.

To carry out the construction above, we need $m-2$ unqueried cells outside of column $b$ and the set of $\ell_j$s to place the internal vertices of the tree.  Since there are $2m$ cells in column $b$ and $m-1$ cells are used by $\ell_j$s, it suffices to have $4m$ unqueried cells to do this.
\end{proof}

It takes more than $m^2$ queries to ensure that each column contains more than $m$ queried cells. 
As $2m^2 - 4m \ge m^2$ when $m \ge 4$, we obtain the required lower bound.
\pfend

\begin{algorithm}[tbhp]
\caption{A Las Vegas randomized algorithm for the function $f_{n,m}$\label{alg:R0}}
\medskip
{\bf VerifyColumn($j$)} tests whether column $j$ is marked
\negmedskip
\enumstart
\item If column $j$ does not satisfy condition~(2) of \defoff, then reject.  Otherwise, let $a$ be the special element.
\item Following the left and right pointers from $a$ and querying the elements along the way, check that the tree rooted at $a$ satisfies conditions~(3) and~(4) of \defoff. 
 If it does, accept.  Otherwise, reject.
\enumend

{\bf TestColumn($c$, $k$)} always returns `True' if column $c$ has no zeroes.  
If it has more than $k/2$ zeroes, returns `False' with probability $\ge 1-1/(nm)^2$.
Returns anything in the intermediate cases.
\negmedskip
\enumstart
\item Query $O(\frac nk\log (nm))$ random elements from column $c$.  If no zero was found, return `True'.  Otherwise, return `False'.
\enumend

{\bf Main procedure of the algorithm}
\negmedskip
\enumstart
\item Let $j$ be an arbitrary column in $[m]$, and $k\gets n$.  

\item Repeat the following actions:
\negmedskip

\enumstart
\item Query all the elements of column $j$.  If all of them have value $1$, VerifyColumn($j$).
\item If column $j$ contains more than $k$ zeroes, then query all the elements of $M$ and output the value of the function.
\item Else, let $C$ be the set of nonnull back pointers stored in the zero elements of column $j$. 
For each $c\in C$, TestColumn($c$, $k$).  If `False' is obtained for all the columns, reject.  
Otherwise, let $j$ be any column with outcome `True'.
\item If $k=0$, reject.  Otherwise, let $k\gets \floor[k/2]$, and repeat the loop.
\enumend
\enumend
\end{algorithm}

\begin{thm}
\label{thm:R0-1}
The Las Vegas randomized complexity $R_0(f_{n, m}) = \tO(n + m)$.
\end{thm}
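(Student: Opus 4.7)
The plan is to analyze Algorithm \ref{alg:R0} and show both that it outputs $f_{n,m}(x)$ correctly with probability one, and that its worst-case expected cost is $\tO(n+m)$. All the randomness sits inside \textbf{TestColumn}; the subroutine \textbf{VerifyColumn} and step 2(b) are deterministic verifiers that are trivially sound. So the work splits cleanly into (i) ruling out a false rejection in step 2(c) (and, on positive inputs, in 2(d)), and (ii) summing the cost of the $O(\log n)$ iterations of the outer loop, including the contribution of the rare catastrophic trigger of step 2(b) that would cost $nm$ queries.

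For correctness on positive inputs, the key observation is the role of the back pointers. Assume $f_{n,m}(x)=1$ with marked column $b$, and suppose the algorithm has reached step 2(c) with current column $j$. Because we are past 2(a), column $j$ has been fully queried and contains at least one zero, so $j\neq b$. By condition (3) of the definition of $f_{n,m}$, column $j$ contains the highlighted leaf $\ell_j$; by condition (4), $\bpoint(x_{\ell_j})=b$. Hence $b\in C$. Since column $b$ contains no zeroes, $\textbf{TestColumn}(b,k)$ is guaranteed to return True, so step 2(c) cannot falsely reject. The same invariant, pushed to the last iteration, shows that with high probability we terminate in 2(a) through an accepting call to \textbf{VerifyColumn}, so step 2(d) is effectively unreachable on a positive input; rejections on negative inputs are automatically correct.

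For the query complexity, I would work iteration by iteration. Step 2(a) costs $n$ queries. Conditional on step 2(b) not firing, column $j$ has at most $k$ zeroes, so $|C|\le k$; each of the $|C|$ calls $\textbf{TestColumn}(c,k)$ costs $O((n/k)\log(nm))$ queries, giving a per-iteration step 2(c) cost of $O(k\cdot(n/k)\log(nm))=\tO(n)$, independent of $k$. Since $k$ halves each iteration, the loop runs at most $O(\log n)$ times, so the total ``good-path'' cost of the loop is $\tO(n)$. A single call to \textbf{VerifyColumn} adds $O(n)$ for column $b$ plus $O(m)$ for traversing the $2m-1$ vertices of the balanced tree $T$, i.e., $\tO(n+m)$ queries, yielding a good-path total of $\tO(n+m)$.

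The main obstacle is bounding the contribution of step 2(b), which costs the full $nm$ whenever triggered. A single \textbf{TestColumn} call errs (returns True for a column with more than $k/2$ zeroes) with probability at most $1/(nm)^2$. The total number of \textbf{TestColumn} calls made over the run is at most $\sum_t |C_t|\le \sum_t k_t\le 2n$, so by a union bound the probability that any such call errs---and hence that step 2(b) ever triggers---is $O(n/(nm)^2)=O(1/(nm^2))$. Multiplying by the cost $nm$ of step 2(b) gives an expected contribution of $O(1/m)=o(1)$ queries, so the grand total is $\tO(n+m)$, proving $R_0(f_{n,m})=\tO(n+m)$. The delicate point is the conditioning---the column visited in iteration $t+1$ depends on the random output of iteration $t$---but because the per-call failure probability is polynomially small in $nm$, a crude union bound over all iterations and all \textbf{TestColumn} calls is tight enough and avoids any martingale-style argument.
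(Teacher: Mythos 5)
Your proof follows the same decomposition and overall strategy as the paper's: soundness is automatic because accepting only happens in \textbf{VerifyColumn}; on positive inputs the back pointer to the marked column $b$ forces $b\in C$ and \textbf{TestColumn}$(b,k)$ always returns True, so step~2(c) never falsely rejects; and the rare triggering of step~2(b) contributes only $o(1)$ expected queries. Your union bound over $\sum_t k_t\le 2n$ calls is a slightly tighter accounting than the paper's, and the remark about adaptivity being harmless because the per-call bound is worst-case is a valid observation, but the route is the same.

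There is, however, one genuine gap: your argument that step~2(d) cannot falsely reject on a positive input. You write that ``with high probability we terminate in 2(a) \ldots so step 2(d) is effectively unreachable.'' For a zero-error algorithm, ``with high probability'' is not enough --- $R_0$ demands that a positive input is \emph{never} rejected, on \emph{every} branch of the randomness. Moreover, the invariant you point to ($b\in C$ and \textbf{TestColumn}$(b,k)$=True) only rules out rejection at step~2(c); it does not by itself imply that the loop reaches $j=b$ before $k$ hits zero, because $j$ is driven by the random outcomes of \textbf{TestColumn} and may wander among unmarked columns. The argument that actually closes this is deterministic and comes from step~2(b): when $k=0$, either the current column $j$ is all-ones, in which case step~2(a) already calls \textbf{VerifyColumn} (which on a positive input accepts), or column $j$ contains at least one zero, in which case it has more than $k=0$ zeroes and step~2(b) unconditionally fires, querying all of $M$ and outputting the correct answer. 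Hence step~2(d) is never reached with $k=0$ on any execution branch. Without this observation your correctness argument does not establish zero error.
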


\pfstart
For the description, see \rf(alg:R0).
With each iteration of the loop in step~2, $k$ gets reduced by half until it becomes zero, hence, after $O(\log n)$ iterations of the loop, the algorithm terminates.

Let us check the correctness of the algorithm.
The algorithm only accepts from the procedure VerifyColumn which verifies the existence of a $1$-certificate.  Thus, the algorithm never accepts a negative input.

To see the algorithm always accepts a positive input, let the input $x$ be positive with marked column $b$.  
Consider one iteration of the loop in step~2.  If $j=b$, then the algorithm accepts in VerifyColumn($j$) on step~2(a).
Now assume $j\ne b$.
Then, column $j$ contains a zero with a back pointer to $b$, hence, the algorithm does not reject on step~2(c).
The algorithm also does not reject on step~2(d) since, when $k=0$, the condition in~2(b) applies.

%If $b$ is chosen in step~1, then VerifyColumn($b$) will accept.  Otherwise, as every column $j$ has a zero 
%element pointing back to $b$, we have $b \in C$.  The invariant $b \in C$ is maintained throughout the loop as 
%TestColumn($b,k$) always returns `True'.  If the loop ever exits through step~2(c) then it is always correct.  
%As only column $b$ passes step~1 of ProcessColumn, the algorithm is also correct if it exits there.
%It is only left to be seen that the algorithm does not reject in step~2(d).
%Consider the start of the loop when $k=1$.  If $b$ is the column chosen in step~2(c) then the algorithm will exit 
%correctly through VerifyColumn; otherwise, the algorithm will exit correctly through step~2(c), as $b$ is the only column 
%with less than $1/2$ many zeroes.

Let us now estimate the expected number of queries made by the algorithm.
Condition in step~2(b) is obviously not satisfied on the first iteration of the loop.
On a specific later iteration, the probability this condition is satisfied is at most $1/(nm)^2$ by our definition of 
TestColumn($c,k$).  
Since the loop is repeated $O(\log n)$ times, the contribution of step~2(b) to the complexity of the algorithm is $o(1)$.

If step~2(b) is not invoked, we have the following complexity estimates.
VerifyColumn uses $O(m)$ queries, and it is called at most once. 
Apart from VerifyColumn, Step~2(a) uses $n$ queries.
Since $|C|\le k$ on step~2(c), the number of queries in this step is $\tO(n)$.  
Since there is only a logarithmic number of iterations of the loop in step~2, the total number of queries is $\tO(n+m)$.
\pfend

%\mycommand{invlogsq}{{1/\log^2 n}}

\begin{algorithm}[tbhp]
\caption{A quantum algorithm for the function $f_{n,m}$\label{alg:Q}}
\medskip

{\bf VerifyColumn(j)} tests whether column $j$ is marked
\negmedskip
\enumstart
\item Use Grover's search to find an element $a$ in column $j$ with nonnull left or right pointer.  If no element found, reject.  If $\val(x_a)=0$, reject.
\item Use Grover's search to verify that all elements in column $j$ except $a$ are equal to $(1,\bot,\bot,\bot)$.  If not, reject.
\item Use Grover's search (over all $j\in[m]\backslash\{b\}$) to check that conditions~(3) and (4) of \defoff are satisfied.  If they are, accept.  Otherwise, reject.
\enumend

{\bf FindGoodBackPointer(j, k)} if column $j$ has $\le \frac{11}{10}k$ zeroes and one of them has a back pointer to an all-1 column, finds a column containing $\le k/2$ zeroes with probability $\ge 1/2k$.
\enumstart
\negmedskip
\item Use Grover's search to find a zero $v$ in column $j$.
\item If $\bpoint(x_v)=\bot$, return `False'.  Otherwise $c\gets\bpoint(x_v)$.
\item Execute Grover's search for a zero in column $c$, assuming there are $\ge k/2$ of them.
Return `False' if Grover's search finds a zero, and 'True' otherwise.
\enumend

{\bf Main procedure of the algorithm}
\negmedskip
\enumstart
\item Let $j$ be an arbitrary column in $[m]$.

\item Repeat the following actions.  If the loop does not finish after $10 \log n$ iterations, reject.
\negmedskip

\enumstart
\item Use quantum counting to estimate the number of zeroes in column $j$ with relative accuracy $1/10$.  Let $k$ be the estimate.  If $k=0$, VerifyColumn($j$).

\item Execute quantum amplitude amplification on the FindGoodBackPointer($j$, $k$) subroutine amplifying for the output `True' of the subroutine and assuming its success probability is at least $1/2k$.  Let $c$ be the corresponding value of the subroutine after amplification.

\item Set $j\gets c$.  Repeat the loop.
\enumend
\enumend
\end{algorithm}

\begin{thm}
\label{thm:Q-1}
The quantum query complexity $Q(f_{n,m}) = \tO(\sqrt{n}+\sqrt m)$.
\end{thm}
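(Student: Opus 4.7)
My plan is to analyze \rf(alg:Q) by separately establishing correctness and bounding its query cost, relying on the standard guarantees of the quantum subroutines recalled in the preliminaries.

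\textbf{Correctness.} I would first observe that the only accepting state in the whole algorithm is inside VerifyColumn, which uses Grover's search to verify conditions (2)--(4) of \defoff directly: step~1 locates the unique nontrivial element $a$ of column $j$ and checks $\val(x_a) = 1$; step~2 checks that every other cell of the column equals $(1, \bot, \bot, \bot)$; step~3 iterates over $[m]\setminus\{j\}$ confirming that the path in $T$ from $a$ ends in a zero in the correct column whose back pointer equals $j$. These conditions together are a valid 1-certificate for $f_{n,m}$, so negative inputs are never accepted. For positive inputs with marked column $b$, the key invariant I would establish is that whenever the loop variable $j$ is not marked, the call to FindGoodBackPointer($j$, $k$) succeeds --- i.e., outputs a column $c$ containing at most $k/2$ zeroes --- with probability $\ge 1/(2k)$: Grover's search in step~1 of the subroutine returns each of the $k' \le \tfrac{11}{10}k$ zeroes of column $j$ with probability roughly $1/k'$, so with probability $\ge 1/(2k)$ it lands on the leaf $\ell_j$, whose back pointer is $b$; then the Grover's search in step~3 finds no zero in the all-1 column $b$ and `True' is returned. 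Amplitude amplification with $O(\sqrt{k})$ invocations boosts the success probability to a constant, so with high probability each iteration halves $k$; since initially $k \le n$, the $10\log n$ iteration budget suffices to reach $k = 0$, at which point the algorithm enters VerifyColumn on the marked column and accepts.

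\textbf{Complexity.} VerifyColumn uses $O(\sqrt{n})$ queries for steps~1--2 (Grover's over a column of $n$ cells) and $O(\sqrt{m}\,\log m)$ for step~3 (Grover's over $m-1$ columns, each check costing an $O(\log m)$-length path traversal through $T$), so $\tO(\sqrt{n}+\sqrt{m})$ in total. One invocation of FindGoodBackPointer uses $O(\sqrt{n/k})$ queries (Grover's for a zero in column $j$, then Grover's for a zero in a column promised to have $\ge k/2$ zeroes), so with $O(\sqrt{k})$ amplification rounds each iteration of the main loop costs $O(\sqrt{n})$; approximate counting with relative accuracy $1/10$ adds at most $O(\sqrt{n/k'}) \le O(\sqrt{n})$. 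Summing over the $O(\log n)$ iterations of the main loop and adding VerifyColumn gives $\tO(\sqrt{n} + \sqrt{m})$.

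The main obstacle I foresee is composing error probabilities: counting, amplitude amplification, and the internal Grover's searches each have their own failure modes, and on iterations where the promise behind amplitude amplification fails (notably on negative inputs, or on positive inputs where $j$ happens to have no zero back-pointing to an all-1 column) its output is unreliable. The standard remedy is to reduce each subroutine's error to $1/\mathrm{poly}(n)$ via $O(\log n)$-fold repetition and union-bound over iterations; separately, whenever counting returns $k = 0$ we bypass FindGoodBackPointer in favor of VerifyColumn, and whenever the $10\log n$ iteration budget is exhausted we simply reject, which remains the correct answer on negative inputs (as no 1-certificate was ever exhibited).
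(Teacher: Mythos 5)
Your proof is correct and follows essentially the same approach as the paper's: the same decomposition into correctness (acceptance only via VerifyColumn, the claim that FindGoodBackPointer returns `True' with probability at least $1/(2k)$ on positive inputs, amplitude amplification halving the zero count per iteration) and the same cost bookkeeping ($\tO(\sqrt n + \sqrt m)$ for VerifyColumn, $\tO(\sqrt n)$ per main-loop iteration, $O(\log n)$ iterations), with the error-composition concern you flag handled in the paper by the same device of driving each subroutine's error to $1/(nm)^2$.
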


\pfstart
The algorithm, \rf(alg:Q), is a quantum counterpart of \rf(alg:R0).
We assume that every elementary quantum subroutine of the algorithm 
(e.g. Grover's search or quantum counting)
is repeated sufficient number of times to reduce its error probability to at most $1/(nm)^2$.  
This requires a logarithmic number of repetitions, which can be absorbed into the $\tO$ factor.
Since the algorithm makes less than $O(n+m)$ queries, we may further assume that all the 
elementary quantum subroutines are performed perfectly.

The analysis is similar to \rf(thm:R0-1).  
Again, the algorithm only accepts from VerifyColumn, which is called at most once.
The three steps of VerifyColumn correspond to the three conditions defining a $1$-input.  Any negative input violates one of these conditions, and thus will fail one of these tests.

Now suppose we have a positive input $x$ with marked column~$b$.  
%So, we may assume the input $x$ is positive.  Let~$b$ be its marked column.
In this case, each non-marked column contains a zero with a back pointer to the marked column $b$. 
We want to argue that the algorithm accepts $x$ with high probability.
The following claim is the cornerstone of our analysis.

\begin{clm}
\label{clm:halfing}
If the input $x$ is positive and column $j$ contains at most $\frac{11}{10}k$ zeroes, then step~2(b) of the algorithm finds a column $c$ containing at most $k/2$ zeroes with high probability.
\end{clm}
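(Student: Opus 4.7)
The plan is to analyze the subroutine \texttt{FindGoodBackPointer}($j$, $k$) and establish two facts: (a) it outputs \texttt{True} with probability at least $1/(2k)$, which justifies the parameter used by the amplitude amplification in step~2(b); and (b) conditioned on outputting \texttt{True}, the column $c$ returned has at most $k/2$ zeroes with overwhelming probability. Combined with the standard guarantee that amplitude amplification preserves the conditional distribution of outcomes within the amplified \texttt{True} branch, these two facts yield the claim.

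For (a), I exploit the leaf $\ell_j$. Since the algorithm reached step~2(b) we have $k>0$, so with overwhelming probability column $j$ contains at least one zero and hence $j$ is not the marked column $b$. By conditions~(3) and~(4) of the definition of $f_{n,m}$, the leaf $\ell_j$ lies in column $j$, has value $0$, and satisfies $\bpoint(x_{\ell_j})=b$. The boosted Grover's search in step~1 samples (approximately uniformly) among the at most $\frac{11}{10}k$ zeroes of column $j$, so it returns $\ell_j$ with probability at least $\frac{10}{11k}$. In that event $c=b$; column $b$ has no zeroes at all by condition~(1), so the boosted Grover's search in step~3 finds none, and the subroutine returns \texttt{True}. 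Hence the success probability of \texttt{FindGoodBackPointer} is at least $\frac{10}{11k}(1-o(1))\ge\frac{1}{2k}$, and amplitude amplification with $O(\sqrt{k})$ calls produces a \texttt{True} outcome with probability at least $9/10$.

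For (b), a \texttt{True} outcome combined with a ``bad'' column $c$ (one having more than $k/2$ zeroes) can occur only if the Grover's search in step~3, parameterized to search for $\ge k/2$ marked items, nevertheless finds none. Under the blanket assumption that every elementary quantum subroutine is repeated so its error probability is at most $1/(nm)^2$, this happens with probability at most $1/(nm)^2$ for any fixed bad $c$. Summing over the at most $\frac{11}{10}k$ possible choices of $v$ in step~1, a single call of the subroutine returns ``bad \texttt{True}'' with probability at most $\frac{11k}{10(nm)^2}$. Consequently, in the amplified distribution, the conditional probability that $c$ is bad given \texttt{True} is at most
\[
\frac{11k/\bigl(10(nm)^2\bigr)}{9/(11k)} \;=\; O\!\bigl(k^2/(nm)^2\bigr),
\]
which is $o(1)$ since $k\le n$. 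The main subtlety I expect is carefully justifying that amplitude amplification preserves the conditional distribution of outcomes within the \texttt{True} branch; once this is in hand, the remainder is routine error accounting plus the combinatorial existence of $\ell_j$.
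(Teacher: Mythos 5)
Your proposal is correct and follows essentially the same approach as the paper's proof: establish that FindGoodBackPointer succeeds with probability at least $1/(2k)$ via the leaf $\ell_j$ whose back pointer targets the all-ones column $b$, and then use the fact that amplitude amplification preserves the (normalized) conditional distribution within the True branch. The only difference is cosmetic: where the paper invokes its blanket assumption that each boosted elementary subroutine errs with probability $\le 1/(nm)^2$ and therefore ``may assume this portion of the state only contains columns $c$ with at most $k/2$ zeroes,'' you carry out the explicit error accounting (your union bound over choices of $v$ is in fact slightly lossier than needed, since the step-3 error probability already bounds the bad-True probability by $1/(nm)^2$ without summing), but the resulting bound is still $o(1)$ and the argument is sound.
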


\pfstart
We first claim that FindGoodBackPointer($j$, $k$) returns `True' with probability at least $1/2k$.
Indeed, we assumed that the probability Grover's search on step 1 fails is negligible.  Thus, with high probability, before execution of step 2, $v$ is chosen uniformly at random from the at most $\frac{11}{10} k$ zeroes in column $j$.  One of these zeroes has a back pointer to the marked column $b$.  If it is chosen, step 3 returns `True' with certainty, which proves our first claim.

Thus, amplitude amplification in step~2(b) of the main procedure will generate the `True'-portion of the final state of the FindGoodBackPointer subroutine.  Again, since we assume that the error probability of Grover's search on step~3 of FindGoodBackPointer is negligible, we may assume this portion of the state only contains columns $c$ with at most $k/2$ zeroes.
\pfend

Consider the loop in step~2.  
We may assume quantum counting is correct in step~2(a).
If $j=b$, then VerifyColumn(j) is called in step 2(a), and the algorithm accepts with high probability.
So, consider the case $j\ne b$.
Column $j$ contains a zero, hence, with high probability, VerifyColumn is not executed on step~2(a).  Thus, by \rf(clm:halfing), the number of zeroes in column $j$ gets reduced by a factor of $1.1/2$.  Therefore, after $10 \log n$ iterations, the number of zeroes in column $j$ becomes zero, which means $j=b$, and the algorithm accepts with high probability.

We now estimate the complexity of the algorithm.  Steps~(1) and~(2) of VerifyColumn take $\tO(\sqrt{n})$ 
queries.  For step~(3) of VerifyColumn, we have to check that the tree of pointers rooted from $x_a$ satisfies conditions~(3) and~(4) from 
\defoff.  We can check the correctness of a single path from the root to a leaf with $O(\log m)$ (classical) queries.  
Since there are $m$ many paths, checking them all with Grover's search takes
$\tO(\sqrt{m})$ many queries.  Overall, VerifyColumn takes $\tO(\sqrt{n}+\sqrt{m})$ queries.

Grover's search in FindGoodBackPointer($j,k$) uses $\tO(\sqrt{n/k})$ queries.
Since the success probability of FindGoodBackPointer($j,k$) is at least $1/2k$, amplitude amplification
repeats FindGoodBackPointer($j,k$) $\tO(\sqrt{k})$ times and the complexity of step~2(b) is $\tO(\sqrt{n})$. 
Quantum counting in step~2(a) also uses $\tO(\sqrt{n})$ queries.

Since we run the main loop at most $O(\log n)$ many times, the total complexity of the algorithm is $\tO(\sqrt n + \sqrt m)$.
\pfend

\begin{cor}
\label{cor:R0-D}
There is a total boolean function $f$ with $R_0(f) = \tO(D(f)^{1/2})$ and $Q(f) = \tO(D(f)^{1/4})$.
\end{cor}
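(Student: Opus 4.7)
The corollary is essentially a direct consequence of the three theorems just proved, combined with the booleanization remark from Section 2. The plan is to instantiate the parameters so that all three bounds align in the right way, and then translate the bounds on the $\Sigma$-valued function to bounds on its boolean encoding.

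First, I would set $n = 2m$ and take $f = f_{2m,m}$. Theorem~\ref{thm:D-1} gives $D(f_{2m,m}) \geq m^2$, while the trivial algorithm that queries every one of the $nm = 2m^2$ cells shows $D(f_{2m,m}) \leq 2m^2$, so $D(f_{2m,m}) = \Theta(m^2)$. Theorem~\ref{thm:R0-1} then gives $R_0(f_{2m,m}) = \tO(n+m) = \tO(m) = \tO(\sqrt{D(f_{2m,m})})$, and Theorem~\ref{thm:Q-1} gives $Q(f_{2m,m}) = \tO(\sqrt{n}+\sqrt{m}) = \tO(\sqrt{m}) = \tO(D(f_{2m,m})^{1/4})$.

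The only remaining issue is that $f_{2m,m}$ is defined over the alphabet $\Sigma = \bool \times \tM \times \tM \times \tC$, not over $\bool$. As noted in the booleanization paragraph of Section~\ref{sec:prelim}, since $|\Sigma|$ is polynomial in $nm$, we have $\lceil \log |\Sigma| \rceil = O(\log m)$, so passing to the associated boolean function $\tilde f_{2m,m}$ preserves lower bounds in the $\Sigma$-query model and increases upper bounds by at most a factor of $O(\log m)$. Both effects are absorbed into the $\tO$ notation: the lower bound on $D$ becomes $D(\tilde f_{2m,m}) = \Omega(m^2 / \log m)$, while the upper bounds become $R_0(\tilde f_{2m,m}) = \tO(m)$ and $Q(\tilde f_{2m,m}) = \tO(\sqrt{m})$, which still give $R_0(\tilde f_{2m,m}) = \tO(D(\tilde f_{2m,m})^{1/2})$ and $Q(\tilde f_{2m,m}) = \tO(D(\tilde f_{2m,m})^{1/4})$.

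There is no real obstacle here; the work has already been done in the three theorems and the booleanization remark. The only mildly delicate point is verifying that the polylogarithmic losses from booleanization combine compatibly with the polylogarithmic factors already hidden in the $\tO$ bounds for $R_0$ and $Q$, but since everything is polylogarithmic in $m$ (and hence in $D$), this is immediate.
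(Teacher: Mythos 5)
Your proof takes exactly the paper's approach: instantiate $f_{n,m}$ with $n=2m$, apply Theorems~\ref{thm:D-1}, \ref{thm:R0-1}, \ref{thm:Q-1}, and invoke the booleanization remark since $|\Sigma|$ is polynomial. One small slip: you write that after booleanization the deterministic lower bound degrades to $D(\tilde f_{2m,m}) = \Omega(m^2/\log m)$, but in fact lower bounds in the $\Sigma$-query model carry over to the boolean encoding \emph{without} loss (a boolean query reveals strictly less information than a $\Sigma$-query), so $D(\tilde f_{2m,m}) = \Omega(m^2)$ holds exactly; it is only the \emph{upper} bounds $R_0$ and $Q$ that pick up the $O(\log m)$ factor. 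This error is conservative and does not affect your final conclusion.
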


\begin{proof}
We first obtain these separations for a non-boolean function.
Take $f_{n,m}$ with $n = 2m$.  Then the zero-error randomized query complexity is $\tO(n)$ by \rf(thm:R0-1), the quantum  query complexity is $\tO(\sqrt{n})$ by \rf(thm:Q-1), and the deterministic query complexity is $\Omega(n^2)$ by \rf(thm:D-1).
Since the size of the alphabet $\Sigma$ is polynomial, this also gives the separations for the associated boolean 
function $\tilde f_{2m,m}$.
\end{proof}

%Using the Simulation Theorem from~\cite{goos:partitionNumber}, we immediately obtain the following consequence for communication complexity.
%\begin{cor}
%There exists a total Boolean function $f$ with $R_0^c(f) = \tO(D^c(f)^{1/2})$ and $Q^c(f) = \tO(D^c(f)^{1/4})$, where $R_0^c$, $Q^c$ and $D^c$ stand for randomized zero-error, quantum and deterministic communication complexity.
%\end{cor}
%\todo{Is this corollary interesting?  Equality already gives exponential separation b/w $Q$ and $D$.  Maybe only 
%state $Q_E$ version of this.}

%%%%%%%%%%%%%%%%%%%%%%%%%%%%%
%\section{One-sided randomized versus Las Vegas}
\section{Separations against Las Vegas complexity}
\label{sec:R1-R0}
%%%%%%%%%%%%%%%%%%%%%%%%%%%%%

In this section, we define a variant of the $f_{n,m}$ function from the last section.
Let $n$, $m$, $M$, $\tM$, $T$ and $T(j)$ be as previously.
The input alphabet is $\Sigma = \bool \times \tM \times \tM \times \tM$,
where we keep the names and notation of left, right and back pointers.  
Note that the back pointers now point to a cell of $M$, not a column.

Let $m$ be even.  The function $g_{n,m}\colon \Sigma^M\to\bool$ is defined like the function $f_{n,m}$ in \rf(sec:det) with condition 4 replaced by the following condition
\enumstart
	\item[$4'$]\label{4'} The set 
	$G = \sfigA{ j\in[m]\setminus\{b\} \midA \bpoint\sA[x_{\ell_j}] = a}$
	is of size exactly $m/2$.
\enumend

For an illustration refer to \rf(fig:1cert-tree-bpc-12_2).

\begin{figure}[H]
    \centering
    
    \begin{tikzpicture}[semithick, ->, scale=\scaling]
    
    \clip(-1,-1.25) rectangle (9,8); %otherwise the curved backpointers make canvas too large
    
    \tikzstyle{cell}=[rectangle,fill=black!10,draw=black,very thick,minimum size=\scaling cm,inner sep=0pt]
    
    \coordinate (lp) at  (-0.3, 0.0);
    \coordinate (rp) at  (+0.3, 0.0);
    \coordinate (bp) at  ( 0.0,-0.3);
    \coordinate (val) at ( 0.0, 0.10);
    
    \coordinate (botc) at (3.5,0);
    
    \draw[step=1cm,gray,very thin,-] (0, 0) grid (8,8);

    %1-column
    \foreach \y in {0, ..., 7} {
        \node[cell]    (c\y) at (3+0.5, \y+0.5) { };
        \node     ()  at ([shift={(val)}]c\y) {$1$};
    }
    \foreach \y in {0, ...,  5, 7} {
        \node    () at ([shift={(lp)}]c\y) {\scriptsize $\bot$};
        \node    () at ([shift={(rp)}]c\y) {\scriptsize $\bot$};
        \node    () at ([shift={(bp)}]c\y) {\scriptsize $\bot$};
    }
    \coordinate (spec) at (c6);
    \node    () at ([shift={(bp)}]spec) {\scriptsize $\bot$};

    %zeros
    \foreach \x/\y [count=\i] in 
    {0/1, 1/0, 2/1, 4/1, 5/1, 6/2, 7/3}
    {
        \node[cell]    (a\x) at (\x+0.5, \y+0.5) {$0$};
        %        \node          ()  at ([shift={(val)}]a\x) {$0$};
    }

    %tree inner nodes
    \foreach \x/\y [count=\i] in 
    {1/4, 5/5, 0/3, 2/3, 4/3, 6/4}
    {
        \node[cell]    (t\i) at (\x+0.5, \y+0.5) { };
    }

    %tree pointers
    \draw (spec) ++(lp) .. controls +(-180:1.25)  and +(90:1.0) .. (t1);
    \draw (spec) ++(rp) .. controls +( 15:0.5)  and +(90:1) .. (t2);
    \draw (t1) ++(lp) .. controls +(180:0.4)   and +( 90:0.75) .. (t3);
    \draw (t1) ++(rp) .. controls +(   0:0.4)  and +(90:0.75) .. (t4);
    \draw (t2) ++(lp) .. controls +(180:0.5)   and +( 90:1) .. (t5);
    \draw (t2) ++(rp) .. controls +( 0:0.4)    and +( 90:0.75) .. (t6);
    
    \draw (t3) ++(lp) .. controls +(-150:0.5)   and +( 120:1.25) .. (a0);
    \draw (t3) ++(rp) .. controls +(-15:0.5)   and +( 90:1) .. (a1);
    \draw (t4) ++(lp) .. controls +(-150:0.5)   and +(120:1.25) .. (a2);
    
    \draw (t5) ++(lp) .. controls +(-150:0.5)   and +(120:1.25) .. (a4);
    \draw (t5) ++(rp) .. controls +(-30:0.75)   and +( 90:1.0) .. (a5);
    \draw (t6) ++(lp) .. controls +(-150:0.5)   and +(120:1.25) .. (a6);
    \draw (t6) ++(rp) .. controls +( 0:0.4)   and +( 90:0.75) .. (a7);

    %backpointers
    \begin{scope}[very thin]
    \draw (a0) ++(bp) .. controls +(-30:0.75)  and +(-105:3.0) .. (c6);
    \draw (a2) ++(bp) .. controls +(-00:0.75)  and +(-95:3.0) .. (c6);
    \draw (a6) ++(bp) .. controls +(-150:0.75)  and +(-85:3.0)  .. (c6);
    \draw (a7) ++(bp) .. controls +(-150:0.75)  and +(-75:3.0)  .. (c6);

    %other backpointers
    \draw (a1) ++(bp) .. controls +(-15:0.75)  and +(-150:0.5) .. (2.98,0.5);
    \draw (a4) ++(bp) .. controls +(-90:0.5)  and +(-180:0.3)  .. (5, 0.5);
    \node    () at ([shift={(bp)}]a5) {\scriptsize $\bot$};
    \end{scope}

    \end{tikzpicture}
		\negbigskip
    
    \caption{An example of a $1$-certificate for the function $g_{8,8}$.  The tree $T$ is like in \rf(fig:tree) on the left.
        The center of a cell $x_{i,j}$ shows $\mathrm{val}(x_{i,j})$, the bottom of the cell shows $\mathrm{bpoint}(x_{i,j})$ and the bottom left and right sides show $\mathrm{lpoint}(x_{i,j})$ and $\mathrm{rpoint}(x_{i,j})$, respectively.
				Values and pointers that are not shown can be chosen arbitrarily.
				It is crucial that $m/2$ leaves point to the root $a$ of the tree, and $m/2-1$ leaves point to something different.
				}
    \label{fig:1cert-tree-bpc-12_2}
\end{figure}

\begin{thm}
\label{thm:R0-1.5}
If $n$ and $m$ are sufficiently large, the Las Vegas randomized query complexity $R_0(g_{n,m}) = \Omega(nm)$.
\end{thm}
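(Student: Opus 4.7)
The plan is to apply Yao's principle (\rf(thm:yao)) and lower bound $R_0(g_{n,m})$ by $\Delta_{0,\nu}(g_{n,m})$ for a hard distribution $\nu=\tfrac12\nu_++\tfrac12\nu_-$. The guiding intuition is that on negative inputs the algorithm can certify $g_{n,m}(x)=0$ only by ruling out the existence of a marked column, i.e., by locating at least one zero in every column; with zeros placed at uniformly random rows this costs $\Omega(n)$ queries per column and hence $\Omega(nm)$ overall.

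First I would construct the distributions. For $\nu_-$, independently for each column $j$ pick a uniform row $i_j\in[n]$, set cell $(i_j,j)$ to $(0,\bot,\bot,p_j)$ with $p_j$ a uniformly random cell of $M$, and fill the rest of the grid with $(1,\bot,\bot,\bot)$. Each such input has no all-$1$ column, so condition~(1) of the definition of $g_{n,m}$ fails and $g_{n,m}=0$. For $\nu_+$, I would sample a positive input designed to agree cell-wise with $\nu_-$ as much as possible: pick uniform $b\in[m]$ and $r\in[n]$, set $a=(r,b)$; in every non-marked column $j$ place the highlighted zero $\ell_j$ at a uniform row with $\bpoint(x_{\ell_j})=a$ for $j$ in a uniform $G\subseteq[m]\setminus\{b\}$ of size $m/2$ and $\bpoint(x_{\ell_j})$ uniform over $M\setminus\{a\}$ otherwise; embed the $m-2$ internal vertices of $T$ at uniform cells of non-marked columns; fill the rest with $(1,\bot,\bot,\bot)$.

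Next I would prove the distributional lower bound by contradiction. Fix any deterministic tree $T$ correct on $\mathrm{supp}(\nu)$ and execute it on a random $\nu_-$-input. The central claim is that $T$ cannot output $0$ until it has found a zero in every column. Suppose instead $T$ halts with output $0$ while some column $j$ has received fewer than $n$ queries and no zero was seen there. Then with positive probability over $\nu_-$ the zero of column $j$ lies in an unqueried row, and I can construct $x'\in\mathrm{supp}(\nu_+)$ with marked column $b=j$ consistent with the transcript: place $a$ at an unqueried row of column $j$; reinterpret the previously observed zeros in other columns as the $\ell_{j'}$'s (retroactively matching their random back pointers to either $a$ or a decoy using the freedom in choosing $G$); and place the tree's internal vertices at unqueried cells compatible with the observed $(1,\bot,\bot,\bot)$-entries. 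The resulting $x'$ is a positive input, contradicting $T(x')=0$. Hence $T$ must find a zero in every column before halting, which takes $\Omega(n)$ queries per column in expectation over the uniformly random row $i_j$, giving expected cost $\Omega(nm)$ over $\nu_-$ and thus over $\nu$.

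The main obstacle is making the pointer-reinterpretation step above rigorous: the $\nu_-$-zeros carry uniformly random back pointers, whereas $\nu_+$-zeros in $G$ must all point to the specific cell $a$ of the hypothesized positive input. The reinterpretation should succeed via a coupling argument---over the randomness of $\nu_-$ and of the proposed $\nu_+$-completion, the uniform $G$ and uniform internal-node placements provide enough slack that any transcript of size $o(nm)$ remains jointly consistent with both distributions. Making this precise requires a careful averaging computation (for each column, only $o(n)$ rows have been queried, and the transcripts induced by the two distributions have overlapping marginals per cell), but the symmetry of $\nu_+$ over $b$, $r$, $G$, and tree embeddings makes it straightforward. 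Combined with \rf(thm:yao), the resulting bound $\Delta_{0,\nu}(g_{n,m})=\Omega(nm)$ yields $R_0(g_{n,m})=\Omega(nm)$.
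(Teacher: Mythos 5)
Your overall strategy is the same as the paper's — use Yao's principle with a hard distribution over negative inputs that have one zero per column at a random row, and argue that until sufficiently many zeros are located a positive completion still exists — but there are two concrete problems with the proposal as written.

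First, the claim that ``$T$ cannot output $0$ until it has found a zero in \emph{every} column'' is false, and the argument you sketch for it actually reveals why. In a positive input, exactly $m/2$ of the $\ell_j$'s must have back pointer equal to the single special element $a$. An observed zero has a fixed back pointer: in your $\nu_-$ a random cell, in the paper's version $\bot$; either way, with $a$ unqueried at most one observed zero can belong to $G$. Since only $m/2-1$ non-marked columns lie outside $G$, once roughly $m/2$ zeros have been seen a consistent positive completion no longer exists, and the algorithm can legitimately halt with output $0$. The paper states exactly this threshold in its Claim~\ref{clm:R0Needs}: the algorithm can reject only once it has found at least $m/2$ zeros \emph{or} queried almost all cells.

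Second, even after correcting the threshold to $m/2$, the concluding step ``$\Omega(n)$ queries per column, hence $\Omega(nm)$'' is not rigorous as stated, because it treats columns as independent tasks and ignores that an adaptive algorithm can trade off querying some cells in a column against abandoning it, or can over-query a few columns to gain information. The paper handles exactly this via a potential function $I_t(x)=A_t(x)+\frac{2}{n}B_t(x)$ (compromised columns plus suitably scaled queries outside them) and shows $\mathbb{E}[I_{t+1}]-\mathbb{E}[I_t]\le 4/n$, which then gives the $\Omega(nm)$ bound by Markov. Your proposal also gives $\nu_-$ zeros random back pointers, which is more than is needed and creates the ``coupling'' obstacle you flag; the paper sidesteps it by simply using $\bot$ back pointers, so that every observed zero is automatically outside $G$ and the completion argument is clean. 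So the proposal captures the right high-level picture but has a concretely wrong intermediate claim and a non-rigorous cost analysis that the paper's potential-function lemma is specifically designed to repair.
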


\begin{proof}
We construct a hard probability distribution on negative inputs such that any Las Vegas randomized  algorithm has to make $\Omega(nm)$ queries in expectation to reject an input sampled from it.  
% Hence, for any Las Vegas algorithm, there exists a negative input, which requires an expected $\Omega(nm)$ number of queries to reject, proving $R_0(g_{n,m}) = \Omega(nm)$.
Each input $x = (x_{i,j})$ in the hard distribution is specified by a function $\ell_x\colon [m]\to[n]$.
The function specifies the positions of the leaves of the tree $T$ in a possible positive instance.
The definition of $x$ is as follows
\begin{equation}
\label{eqn:xijdef}
x_{i,j} =
\begin{cases}
(0,\bot,\bot,\bot), & \text{if $i = \ell_x(j)$;}\\
(1,\bot,\bot,\bot), & \text{otherwise.}
\end{cases}
\end{equation}
The hard distribution is formed in this way from the uniform distribution on all functions $\ell_x$.
Thus, all pointers are null pointers, and each column contains exactly one zero element in a random position.  The theorem obviously follows from the following two results.
\pfend

\begin{clm}
\label{clm:R0Needs}
Any Las Vegas algorithm for the function $g_{n,m}$ can reject an input $x$ from the hard distribution~\rf(eqn:xijdef) only if
it has found at least $m/2$ zeroes or
it has queried more than $n(m-1) - 2m$ elements.
\end{clm}

\begin{lem}
\label{lem:R0}
Assume a Las Vegas algorithm can reject an input $x$ from the hard distribution~\rf(eqn:xijdef) only if 
it has found $\Omega(m)$ zeroes or 
it has queried $\Omega(nm)$ elements.  
Then, the query complexity of the algorithm is $\Omega(nm)$.
\end{lem}

\pfstart[Proof of \rf(clm:R0Needs)]
Assume these conditions are not met.  Then, we can construct a positive input $y$ that is consistent with the answers to the queries obtained by the algorithm so far.

Let $B\subseteq[m]$ be the set of columns where no zero was found.  By assumption, $|B|\ge m/2+1$.  
Choose an element $b\in B$ and a subset $G\subseteq B\setminus\{b\}$ of size $m/2$.
Define $a = (\ell_x(b), b)$, set the value of $y_a$ to $1$ and its back pointer to $\bot$.
For each column $j\in G$, define $y_{\ell_x(j),j} = (0,\bot,\bot,a)$.
Finally, for the remaining columns $j\in B\setminus (G\cup\{b\})$, define $y_{\ell_x(j),j} = (0,\bot,\bot,\bot)$.

Remove from the tree $T$ the leaf with label $b$.  Let the resulting graph be $T'$.  
Put the root of $T'$ into $a$, and, for each $j\ne b$, put the leaf of $T'$ with label $j$ into $(\ell_x(j),j)$.  Put the remaining nodes of $T'$ into the still unqueried cells of $M$ preserving the structure of the graph.  Set their values to 0 and their back pointers to $\bot$.
Set all the remaining cells to $(1,\bot,\bot,\bot)$.  The resulting input $y$ is positive and consistent with the answers to the queries obtained by the algorithm.
\pfend

\pfstart[Proof of \rf(lem:R0)]
By \rf(thm:yao) it suffices to show that any deterministic algorithm $\cD$ makes an expected 
$\Omega(nm)$ number of queries to find $\Omega(m)$ zeroes in an input from the hard distribution.

Consider a node $S$ of the decision tree $\cD$.  Call a column $j\in[m]$ {\em compromised} in $S$ if either a zero was found in it, or more than $n/2$ of its elements were queried.
For an input $x$, let $A_t(x)$ be the number of compromised columns on input $x$ after $t$ queries.
Similarly, let $B_t(x)$ be the number of queries made outside the compromised columns.
Let us define
\[
I_t(x) = A_t(x) + \frac 2n B_t(x).
\]
Note that $A_t(x)$ can only increase as $t$ increases, whereas $B_t(x)$ can increase or decrease.

\begin{clm}
\label{clm:R0I}
For a non-negative integer $t$, we have 
\begin{equation}
\label{eqn:R0I}
\bE_x\skA[I_{t+1}(x)] - \bE_x\skA[I_t(x)] \le \frac 4n,
\end{equation}
 where the expectation is over the inputs in the hard distribution.
\end{clm}
 
\pfstart
Fix $t$.  
We say two inputs $x$ and $y$ are equivalent if they get to the same vertex of the decision tree after $t$ queries.  We prove that~\rf(eqn:R0I) holds with the expectation taken over each of the equivalency classes.
Fix an equivalence class, let $x$ be an input in the class, and $(i,j)$ be the variable queried by $\cD$ on the $(t+1)$st query on the input $x$.  Note that $(i,j)$, $A_t(x)$ and $B_t(x)$ do not depend on the choice of $x$.

Consider the following cases, where each case excludes the preceding ones.  All expectations and probabilities are over the uniform choice of an input in the equivalence class.
\itemstart
\item The $j$th column is compromised.  Then $I_{t+1}(x) = I_t(x)$, and we are done.
\item After the cell $(i,j)$ is queried, more than half of the cells in the $j$th column have been queried.
Then, $A_t(x)$ increases by $1$, and $B_t(x)$ drops by $\floor[n/2]$. 
Hence, $\bE_x\skA[I_{t+1}(x)] \le \bE_x\skA[I_t(x)]+1/n$.
\item Consider the remaining case.  We have $\mathrm{Pr}_x[i = \ell_x(j)]\le 2/n$.
If $i = \ell_x(j)$, then $A_t(x)$ grows by 1, and $B_t(x)$ can only decrease.
If $i\ne \ell_x(j)$, then $A_t(x)$ does not change, and $B_t(x)$ grows by 1.
Thus, $\bE_x\skA[I_{t+1}(x)] - \bE_x\skA[I_t(x)] \le \frac2n+\frac2n = \frac4n$.\qedhere
\itemend
\pfend

Now we finish the proof of \rf(lem:R0).
Assume the algorithm can reject an input $x$ only if it has found $c_1m$ zeroes or it has queried $c_2nm$ elements for some constants $c_1,c_2>0$.

Let $t = \floor[c_1 nm /8]$.  
Clearly, $\bE_x[I_0(x)]=0$ for all $x$.  
\rf(clm:R0I) implies $\bE_x[I_t(x)] \le c_1 m/2$.
By Markov's inequality, $\Pr_x\skA[I_t(x)\ge c_1 m]\le 1/2$.  
By our assumption, the probability the algorithm $\cD$ has not rejected $x$ after $t'=\min\{t,\; c_2 nm\}$ 
queries is at least $1/2$.  
Hence, the expected number of queries made by the algorithm is at least $t'/2 = \Omega(nm)$.
\end{proof}

\begin{thm}
\label{thm:R1}
For one-sided error randomized and exact quantum query complexity, we have $R_1(g_{n,m}) = \tO(n+m)$ and $Q_E(g_{n,m}) = \tO(n+m)$.
\end{thm}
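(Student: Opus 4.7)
The plan is to reduce both bounds to the same auxiliary decision problem on the columns of $M$. For an input $x$ of $g_{n,m}$, define $f\colon[m]\to\{0,1\}$ by $f(j)=1$ iff $g_{n,m}(x)=1$ and $\bpoint(x_{\ell_j})=a$. Condition $(4')$ provides the promise: $f\equiv 0$ on negative inputs and $|f^{-1}(1)|=m/2$ on positive inputs. Both claimed upper bounds follow once we exhibit a deterministic $\tO(n+m)$-query subroutine that evaluates $f(j)$ on any given $j$.

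For $R_1(g_{n,m})$ the overall algorithm samples $j\in[m]$ uniformly at random, evaluates $f(j)$ via the subroutine, and accepts iff $f(j)=1$: negative inputs are always rejected, while positive inputs are accepted with probability $|G|/m=1/2$. For $Q_E(g_{n,m})$ I make the same subroutine reversible to implement an exact quantum oracle for $f$ at cost $\tO(n+m)$ input queries per call, and then apply a Deutsch--Jozsa-style exact algorithm over $[m]$: prepare the uniform superposition $\frac{1}{\sqrt m}\sum_{j\in[m]}|j\rangle$, apply the phase oracle $|j\rangle\mapsto(-1)^{f(j)}|j\rangle$, apply an inverse Fourier transform on $[m]$, and measure. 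The amplitude at $|0\rangle$ equals $\frac{1}{m}\sum_{j}(-1)^{f(j)}$, which is $1$ when $f\equiv 0$ and $0$ when $|f^{-1}(1)|=m/2$, so $O(1)$ oracle calls distinguish the two cases with certainty.

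The subroutine for $f(j)$ first queries every cell of column $j$. Each zero $z$ in column $j$ whose back pointer $a'=\bpoint(x_z)$ is non-null is treated as a candidate for the special element, with $z$ as the corresponding claimed leaf. Each candidate $a'$ is then verified against the definition of $g_{n,m}$: $x_{a'}$ has value $1$ with a non-null left or right pointer; the path $T(j)$ from $a'$ ends exactly at $z$; the tree rooted at $a'$ (traversed using $\lpoint$ and $\rpoint$ according to $T$) has its $j'$-th leaf in column $j'$ with value $0$ for every $j'$ other than the column of $a'$; exactly $m/2$ of those leaves have back pointer $a'$; and the column of $a'$ consists of $(1,\bot,\bot,\bot)$ except at $a'$. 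Condition~(1) of the definition of $g_{n,m}$ ensures that at most one candidate, namely the real $a$, can pass the last check.

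The main obstacle is bounding the total verification cost by $\tO(n+m)$ in spite of there being up to $n$ candidates. The key structural observation is that for every fake candidate $a'$ in a non-marked column, the tree rooted at $a'$ would have to place a value-$0$ leaf on the path $T(b)$ into the real all-$1$ column $b$, which is impossible. Organizing the verifications in an interleaved, breadth-first fashion so that fakes are detected at their first inconsistency with column $b$ brings the amortized cost per fake down to $\tO(1)$, and the overall subroutine then fits within $\tO(n+m)$ queries.
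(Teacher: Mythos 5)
Your high-level structure matches the paper exactly: define the column indicator $f$ (the paper calls $\{j : f(j)=1\}$ the \emph{good} columns), reduce $R_1$ to random sampling of one column and $Q_E$ to an exact Deutsch--Jozsa distinguisher, and reduce both to a deterministic $\tO(n+m)$-query subroutine that decides $f(j)$. Those two reductions are correct and are identical to the paper's.

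The gap is in the deterministic subroutine itself, and you correctly flag it as ``the main obstacle'' but then do not actually resolve it. There can be $\Theta(n)$ candidate special elements $a'$ coming from the back pointers in column $j$, and a naive verification of a single candidate costs $\Theta(m)$ queries. Your fix --- ``organizing the verifications in an interleaved, breadth-first fashion so that fakes are detected at their first inconsistency with column $b$ brings the amortized cost per fake down to $\tO(1)$'' --- is a claim, not a proof, and it does not obviously hold. The structural observation (a fake $a'$ in a non-marked column must, if its tree were valid, place a 0-valued leaf inside the all-1 column $b$) is correct, but it gives no fast detection mechanism: the subroutine does not know $b$, so it cannot jump to the path $T(b)$, and a fake $a'$ can have a perfectly consistent-looking tree on every column it is checked against before column $b$ is ever touched, costing $\Omega(m)$ queries. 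Breadth-first expansion across candidates does not help either, since distinct candidates root their trees at arbitrary, typically disjoint cells, so there is no shared work, and a fake might not be falsifiable until the leaf level, by which time $\Omega(m)$ nodes of its tree have been expanded.

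The paper's subroutine (its Algorithm 3) closes this gap with a specific accounting scheme your proposal is missing. It maintains a set $I \subseteq [n]$ of surviving candidate rows in column $j$ and, simultaneously, a set $B \subseteq [m]$ of surviving candidate marked columns. Each loop iteration takes the current candidate $a' = \bpoint(x_{i,j})$ for the smallest surviving $i \in I$ and tries to eliminate a single column $j' \in B$ (the smallest one not containing $a'$) by following the single path $T(j')$ from $a'$: if it terminates at a valid $0$-leaf in column $j'$, remove $j'$ from $B$; otherwise remove $i$ from $I$. Each iteration therefore strictly decreases $|I|+|B|$ and costs only $O(\log m)$ queries, so there are at most $n+m$ iterations. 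Correctness hinges on two invariants: the true marked column $b$ is never removed from $B$ (either it contains $a'$ and is skipped, or it contains no $0$ so the check necessarily fails and $i$ is removed instead), and when column $j$ is good the true row $i_0$ with $\ell_j=(i_0,j)$ is never removed from $I$ (the genuine tree always verifies). So the loop must exit with $B=\{b\}$, after which a single $O(n+m)$-query verification of column $b$ and of membership of $j$ in $G$ finishes. This joint elimination of candidate rows and candidate marked columns --- progress charged to the sum $|I|+|B|$ rather than to per-candidate trees --- is the actual content of the paper's Lemma~\ref{lem:good}, and it is the piece your argument leaves unproved.
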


\pfstart
For a positive input $x$, we call a column $j\in[m]$ \emph{good} iff it belongs to the set
$G$ from Condition~$4'$ on Page~\pageref{4'}.  Thus, a positive input has exactly $m/2$ good columns, whereas a negative one has none.
\rf(thm:R1) follows immediately from the following lemma.

\begin{lem}
\label{lem:good}
There exists a deterministic subroutine that, given an index $j\in[m]$, accepts iff the column $j$ is good in $\tO(n+m)$ queries.
\end{lem}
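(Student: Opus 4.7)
Given $j\in[m]$, I would first query all $n$ cells of column $j$ and record the set $Z$ of zero-valued cells in column $j$. If $Z$ is empty we reject, since in any positive input with $j$ good the highlighted zero $\ell_j$ must belong to $Z$. Otherwise, each $v\in Z$ with $\bpoint(x_v)\ne\bot$ supplies a candidate for the special element, $a_v:=\bpoint(x_v)$.

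Each candidate then undergoes a two-stage test. The cheap \emph{path check} traces the canonical sequence $T(j)$ from $a_v$ using $\lpoint$ and $\rpoint$ in $O(\log m)$ queries, retaining $a_v$ only when the trace terminates exactly at $v$. A candidate that survives is put through \emph{full verification}: query every cell in the column containing $a_v$ (a further $n$ queries) and confirm that every cell except $a_v$ equals $(1,\bot,\bot,\bot)$, thereby certifying conditions~(1) and~(2); then traverse the whole tree $T$ rooted at $a_v$ via $\lpoint,\rpoint$ in $O(m)$ queries, certifying that every leaf $\ell_{j'}$ lies in column $j'$ with $\val(x_{\ell_{j'}})=0$ (condition~(3)) and that exactly $m/2$ leaves satisfy $\bpoint(x_{\ell_{j'}})=a_v$ (condition~$(4')$). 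The subroutine accepts as soon as some candidate passes both stages, and rejects otherwise.

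Correctness is immediate in both directions. A successful full verification produces a complete $1$-certificate: conditions~(1)--(3) and~$(4')$ hold with special element $a_v$, and since $\bpoint(x_v)=a_v$ and the path check identifies $v$ as the leaf reached via $T(j)$, the leaf $\ell_j=v$ satisfies $\bpoint(x_{\ell_j})=a_v$, placing $j$ in $G$. Conversely, if $j$ is good then $\ell_j\in Z$ and the candidate $a_{\ell_j}$ is the true $a$, which survives both stages. One also checks that two distinct candidates cannot both pass the full verification: the tree check from $a_v$ forces every column other than the one containing $a_v$ to contain a zero, so a second successful candidate in a different column would contradict the first candidate's column check, and two candidates in the same column must coincide (that column is $(1,\bot,\bot,\bot)$ everywhere except at its unique non-trivial cell).

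The main obstacle is to keep the total query cost at $\tO(n+m)$, since a priori $|Z|$ can be as large as $n$ and each full verification is $\Theta(n+m)$. The aggregate cost of the path checks is only $|Z|\cdot O(\log m)=\tO(n)$, which is within budget. For the full verifications I would maintain a cache of already-queried cells so that candidates whose columns or tree paths overlap share queries rather than duplicating them; I would also abort a column scan at the first non-conforming cell, and use the uniqueness observation above to halt as soon as a second column has been fully certified to be all-$1$. A careful amortization along these lines is what I expect to be the technical heart of the argument, and should yield the desired $\tO(n+m)$ total query bound.
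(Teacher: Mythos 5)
Your plan correctly isolates the obstacle---$|Z|$ can be $\Theta(n)$, and a naive full verification of each surviving candidate costs $\Theta(n+m)$---but the amortization you sketch does not close it. Consider an input in which column $j$ contains $\Theta(\min(n,m))$ zeroes $v_1,v_2,\dots$; each $v_i$ carries a back pointer $a_i=\bpoint(x_{v_i})$ lying in a distinct column $c_i$; each $a_i$ survives your path check (the $O(\log m)$-length paths can be routed through mutually disjoint scratch cells outside the $c_i$'s); and each column $c_i$ contains exactly one zero, placed at whatever row your deterministic scan inspects last. No two verifications share any cell, so caching buys nothing; no column is ever certified all-one, so the two-all-one-columns early halt never triggers; and aborting at the first non-conforming cell still costs $\Theta(n)$ per candidate, because the offending zero sits at the end of the scan order. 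The total is $\Theta(n\min(n,m))$, which for $n=m$ is $\Theta(n^2)$, not $\tO(n+m)$. Your uniqueness observation bounds how many candidates can \emph{succeed}, but gives no handle on the aggregate cost of the candidates that \emph{fail}.

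The paper's subroutine avoids a per-candidate full verification entirely. It maintains $I\subseteq[n]$, the rows of column $j$ not yet ruled out as $\ell_j$, and $B\subseteq[m]$, the columns not yet ruled out as marked. Each iteration takes $a=\bpoint(x_{i,j})$ for the smallest surviving $i\in I$, picks a single column $j'\in B$ not containing $a$, and follows the $O(\log m)$-length path $T(j')$ from $a$. If this reaches a zero in column $j'$, then $j'$ cannot be the marked column and is removed from $B$; otherwise $a$ cannot be the special element, and $i$ is removed from $I$. Every iteration strictly decreases $|I|+|B|$, so there are at most $n+m$ iterations of $O(\log m)$ queries each, and only once $B$ has shrunk to a singleton $\{b\}$ is a single $O(n+m)$-query verification of $b$ performed. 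This interleaving of candidate elimination with column elimination---charging each $O(\log m)$ probe to a guaranteed unit of progress on $|I|+|B|$---is precisely the amortization mechanism your plan is missing.
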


Indeed, given a string $y\in\bool^m$, it takes $O(1)$ queries for either an $R_1$ or a $Q_E$ algorithm to distinguish the case $y=0^m$ from the case when $y$ has exactly $m/2$ ones.  Using the subroutine from \rf(lem:good) as the input to this algorithm, we evaluate the function $g_{n,m}$ in $\tO(n+m)$ queries.  

For example, for an algorithm with one-sided error, we choose an index $j\in[m]$ uniformly at random, and execute the subroutine of \rf(lem:good).  If the input is negative, we always reject.  If the input is positive, we accept with probability exactly $1/2$.  The exact quantum algorithm is obtained similarly, using the Deutsch-Jozsa algorithm.
\pfend

\pfstart[Proof of \rf(lem:good)]
The subroutine is described in \rf(alg:good).
In the subroutine, $I$ stores the set of (the first indices) of the cells in column $j$ that can potentially contain the element $\ell_j$ back pointing to the special element $a$.
The set $B$ contains potentially marked columns.

\begin{algorithm}[tbhp]
\caption{A deterministic subroutine testing whether a column $j$ is good\label{alg:good}}
\medskip

\enumstart
\item Let $I\gets [n]$ and $B\gets [m]$.

\item While $I\ne\emptyset$ and $|B|\ge 2$, repeat the following:
\negmedskip
\enumstart
\item Let $i$ be the smallest element of $I$.  
Let $a\gets\bpoint(x_{i,j})$.
If $a=\bot$, remove $i$ from $I$, and continue with the next iteration of the loop.
\item Let $j$ be the smallest number of a column in $B$ that does not contain $a$.  Follow the pointers from $a$ as specified by the sequence $T(j)$.  Let $\ell_j$ be the endpoint.
\item If $\ell_j$ exists, is located in column $j$ and its value is 0, remove $j$ from $B$.  Otherwise, remove $i$ from $I$.
\enumend
\negmedskip

\item If $|B|\ge 2$, reject.  Otherwise, let $b$ be the only element of $B$.  Verify column $b$ using a procedure similar to that in \rf(alg:R0).
\item If column $b$ passes the verification, and $j$ belongs to the set $G$ from Condition~$4'$ on Page~\pageref{4'}, accept.  Otherwise, reject.
\enumend
\end{algorithm}

As ensured by step~4, the subroutine only accepts if the input is positive (column $b$ passes the verification), and the column $j$ is good.  Hence, we get no false positives.
On the other hand, assume the input is positive with the marked column $b$, column $j$ is good, and $\ell_j = (i,j)$.
In this case, $i$ cannot get removed from $I$ due to goodness of column $j$ and Condition~3 on Page~\pageref{3}, and $b$ never gets eliminated from $B$ as it contains no zeroes.
Thus, the only possibility to exit the loop on step~2 is to have $B=\{b\}$.  In this case, $b$ passes the verification, and the algorithm accepts since column $j$ is good.  Hence, we get no false negatives as well.

The query complexity of each iteration of the loop in step~2 is $O(\log m)$.  Also, with each iteration, either $I$ or $B$ get reduced by one element.  Hence, the total number of iterations of the loop does not exceed $n+m$.  Finally, the verification in step~3 requires $O(n+m)$ queries.  Thus, the query complexity of the algorithm is $\tO(n+m)$.
\end{proof}

\begin{cor}
\label{cor:R1-R0}
There is a total boolean function $f$ with $R_1(f) = \tO(R_0(f)^{1/2})$ and $Q_E(f) = \tO(R_0(f)^{1/2})$.
\end{cor}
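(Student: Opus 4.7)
The plan is to instantiate the function $g_{n,m}$ at $n = m$ and combine the three theorems proved earlier in this section. By \rf(thm:R0-1.5) we have $R_0(g_{m,m}) = \Omega(m^2)$, while \rf(thm:R1) gives $R_1(g_{m,m}) = \tO(m)$ and $Q_E(g_{m,m}) = \tO(m)$. Eliminating $m$ between these two estimates yields $R_1(g_{m,m}) = \tO\sA[R_0(g_{m,m})^{1/2}]$ and likewise for $Q_E$, which is exactly the separation claimed for the non-boolean function $g_{m,m}$.

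To obtain the stated separation for a \emph{boolean} function, I would apply the booleanization procedure recorded at the end of \rf(sec:prelim). The input alphabet of $g_{n,m}$ is $\Sigma = \bool \times \tM \times \tM \times \tM$ whose size is polynomial in $n$ and $m$, so encoding each symbol takes only $O(\log m)$ bits. By the standard observation in that paragraph, a lower bound on $R_0(g_{m,m})$ in the $\Sigma$-query model immediately descends to the associated boolean function $\tilde g_{m,m}$, while upper bounds on $R_1$ and $Q_E$ transfer with at most a logarithmic multiplicative overhead. Both effects are absorbed into the $\tO$ notation, so the ratios are preserved up to polylog factors.

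Putting these pieces together gives a total boolean function $f = \tilde g_{m,m}$ with $R_0(f) = \tilde\Omega(m^2)$ and $R_1(f), Q_E(f) = \tO(m)$, which is the statement of the corollary. There is no substantive obstacle here: the hard work has already been done in the proofs of \rf(thm:R0-1.5) (the adversary argument with the potential $I_t(x)$) and \rf(thm:R1) (the goodness-testing subroutine and its composition with a one-call Deutsch--Jozsa / one-sided test). The only things to check are the arithmetic $R_0 = \Omega(m^2)$ versus $R_1 = \tO(m)$ and the booleanization bookkeeping, both of which are routine.
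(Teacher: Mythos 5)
Your proposal is correct and follows exactly the paper's own argument: set $n=m$, invoke \rf(thm:R0-1.5) for the lower bound $R_0(g_{m,m})=\Omega(m^2)$, invoke \rf(thm:R1) for the upper bounds $R_1(g_{m,m}),Q_E(g_{m,m})=\tO(m)$, and then transfer to the associated boolean function $\tilde g_{m,m}$ using the booleanization remark from \rf(sec:prelim), absorbing the $O(\log m)$ overhead into the $\tO$.
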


\begin{proof}
We first obtain this separation for a non-boolean function.
Take $g_{n,m}$ with $n = m$.  The one-sided error randomized and exact quantum query complexity is $\tO(n)$ by \rf(thm:R1), and the Las Vegas query complexity is $\Omega(n^2)$ by \rf(thm:R0-1.5).
Since the size of the alphabet $\Sigma$ is polynomial, this also gives the separations for the associated boolean function $\tilde g_{n,n}$.
\end{proof}

%%%%%%%%%%%%%%%%%%%%%%%%%%
\section{Other separations against randomized complexity}
%%%%%%%%%%%%%%%%%%%%%%%%%%
\mycommand{ipoint}{\mathop{\mathrm{ipoint}}}
In this section we define another modification of the function used in \rf(sec:det).
Let $n$, $m$, $M$, $\tM$ and $T$ be as in \rf(sec:det), and let $k : 1\le k<m$ be an integer.
The new function $h_{k,n,m}\colon \Sigma^M\to\bool$ is defined as follows.
The input alphabet is $\Sigma = \bool \times \tM \times \tM \times \tM$.
For $v\in \Sigma$, we call the elements of the quadruple the \emph{value}, the \emph{left pointer}, the \emph{right pointer} and the \emph{internal pointer} of $x_{i,j}$, respectively.  We use notation $\val(v)$, $\lpoint(v)$, $\rpoint(v)$, and $\ipoint(v)$ for them in this order.

\renewcommand{\scaling}{0.9}
\begin{figure}[tbhp]
    \centering
    
    \begin{tikzpicture}[semithick, ->, scale=\scaling]
    
    %\clip(-1,-0.1) rectangle (9,10); %otherwise the curved pointers make canvas too large
    
    \tikzstyle{cell}=[rectangle,fill=black!10,draw=black,very thick,minimum size=\scaling cm,inner sep=0pt]
    
    \coordinate (lp) at  (-0.3, 0.0);
    \coordinate (rp) at  (+0.3, 0.0);
    \coordinate (ip) at  ( 0.0,+0.3);
    \coordinate (val) at ( 0.0,-0.10);
    
    \coordinate (botc) at (3.5,0);
    
    \draw[step=1cm,gray,very thin,-] (0, 0) grid (8,8);

    %1-columns
    \foreach \x in {2, 4, 5}
        \foreach \y in {0, ..., 7} {
            \node[cell]    (c\x\y) at (\x+0.5, \y+0.5) { };
            \node     ()  at ([shift={(val)}]c\x\y) {$1$};
        }
    
    \foreach \x in {2, 4}
    \foreach \y in {0, ...,  5, 7} {
        \node    () at ([shift={(lp)}]c\x\y) {\scriptsize $\bot$};
        \node    () at ([shift={(rp)}]c\x\y) {\scriptsize $\bot$};
        \node    () at ([shift={(ip)}]c\x\y) {\scriptsize $\bot$};
    }
    \foreach \y in {0, ...,  6} {
        \node    () at ([shift={(lp)}]c5\y) {\scriptsize $\bot$};
        \node    () at ([shift={(rp)}]c5\y) {\scriptsize $\bot$};
        \node    () at ([shift={(ip)}]c5\y) {\scriptsize $\bot$};
    }

    \coordinate (s1) at (c26);
    \coordinate (s2) at (c46);
    \coordinate (s3) at (c57);
%    \node    () at ([shift={(bp)}]spec) {\scriptsize $\bot$};

    %zeroes
    \foreach \x/\y [count=\i] in 
    {0/0, 1/1, 3/1, 6/2, 7/1}
    {
        \node[cell]    (a\x) at (\x+0.5, \y+0.5) {$0$};
        %        \node          ()  at ([shift={(val)}]a\x) {$0$};
    }

    %tree inner nodes
    \foreach \x/\y [count=\i] in 
    {1/4, 6/4, 0/2, 3/3, 7/3}
    {
        \node[cell]    (t\i) at (\x+0.5, \y+0.5) { };
    }

    %ipoint
    \begin{scope}[very thick]
    \draw   (s1) ++(ip) .. controls +( 30:0.5)  and +(150:0.75)  .. (c46);
    \draw   (s2) ++(ip) .. controls +( 0:0.5)  and +(-120:0.75)   .. (c57);
    \draw   (s3) ++(ip) .. controls +(90:1.0)  and +(105:1.75)   .. (c26);
    \end{scope}
    
%    %tree pointers

    \draw (s1) ++(lp) .. controls +( 180:0.5)  and +(105:2.0) .. (t1);
    \draw (s2) ++(lp) .. controls +(-135:2)  and +(75:2.0) .. (t1);
    \draw (s3) ++(lp) .. controls +( 165:2.5)  and +(90:1.75) .. (t1);
    \draw (s1) ++(rp) .. controls +(   0:1)  and +(105:2.0) .. (t2);
    \draw (s2) ++(rp) .. controls +( -15:1)  and +(90:1.0) .. (t2);
    \draw (s3) ++(rp) .. controls +( -15:0.5)  and +(75:0.75) .. (t2);
            
    \draw (t1) ++(lp) .. controls +(180:0.5)   and +( 90:1) .. (t3);
    \draw (t1) ++(rp) .. controls +( 15:1.5)  and +(90:1) .. (t4);
    \draw (t2) ++(rp) .. controls +(  0:0.4)   and +( 90:0.75) .. (t5);

    \draw (t3) ++(lp) .. controls +(180:0.5)   and +( 120:1.25) .. (a0);
    \draw (t3) ++(rp) .. controls +(  0:0.4)   and +( 90:0.75) .. (a1);
    \draw (t4) ++(rp) .. controls +(  0:0.5)   and +(60:1.25) .. (a3);
    \draw (t5) ++(lp) .. controls +(180:0.4)   and +( 90:0.75) .. (a6);
    \draw (t5) ++(rp) .. controls +( 0:0.5)   and +( 60:1.25) .. (a7);
    
    \end{tikzpicture}
    
    \caption{An example of a $1$-certificate for the $h_{3,8,8}$ function. 
		The tree $T$ is like in \rf(fig:tree) on the left.
        The center of a cell $x_{i,j}$ shows $\mathrm{val}(x_{i,j})$, the top of the cell shows $\mathrm{ipoint}(x_{i,j})$ and the left and right sides show $\mathrm{lpoint}(x_{i,j})$ and $\mathrm{rpoint}(x_{i,j})$, respectively.
				Values and pointers that are not shown can be chosen arbitrarily.
				}
    \label{fig:1cert-k-tree-bp}
\end{figure}
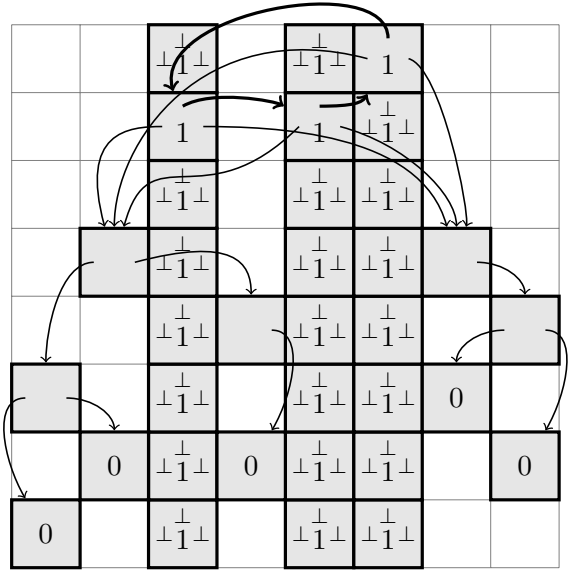

For an input $x = (x_{i,j})$, we have $h_{k,n,m}(x)=1$ if an only if the following conditions are satisfied (for an illustration refer to \rf(fig:1cert-k-tree-bp)):
\enumstart
\item There are exactly $k$ columns $b_1, \ldots, b_k$ such that $\val(x_{i,b_s})=1$ for all $i \in [n]$ and each $s \in [k]$.  We refer to these as the \emph{marked columns}.
	
\item Each marked column $b_s$ contains a unique cell $a_s$ such that $x_{a_s} \neq (1,\bot,\bot,\bot)$.  We call $a_s$ a \emph{special element}.

\item We have $\ipoint(x_{a_s}) = a_{s+1}$ for all $s\in[k-1]$, and $\ipoint(x_{a_k}) = a_1$.  Also, $\lpoint(x_{a_s}) = \lpoint(x_{a_t})$ and $\rpoint(x_{a_s}) = \rpoint(x_{a_t})$ for all $s,t\in[k]$.
	
\item For each non-marked column $j\in [m]\setminus\{b_1,\dots,b_k\}$, let $\ell_j$ be the end of the path which starts at a special element $a_s$ (whose choice is irrelevant) and follows the pointers $\lpoint$ and $\rpoint$ as specified by the sequence $T(j)$.
	We require that $\ell_j$ exists (no pointer on the path is $\bot$), $\ell_j$ is in the $j$th column, and $\val(x_{\ell_j}) = 0$.
\enumend

We use this function in two different modes.  
The first one is the $k=1$ case.  Then, $h_{1,n,m}$ is essentially the $f_{n,m}$ function from \rf(sec:det) with the back pointers removed, i.e., it need not satisfy condition~(4).  In this mode, the function is hard for a Monte Carlo algorithm, but still has low approximate polynomial degree.

The second mode is the general $k$ case.  In this mode, the function is hard for a Las Vegas algorithm, but feasible for quantum algorithms.  The proof of the lower bound is similar to \rf(sec:R1-R0).  We need a different function because \rf(alg:good) in \rf(thm:R1) cannot be efficiently quantized.

\begin{thm}
\label{thm:R0-2}
If $n$ and $m$ are sufficiently large, the Las Vegas randomized query complexity $R_0(h_{k,n,m}) = \Omega(nm)$ for any $k<m/2$.
\end{thm}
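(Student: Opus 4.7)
Plan: I would mirror the proof of \rf(thm:R0-1.5), reusing the hard distribution and applying \rf(lem:R0) verbatim, with the only new work being a combinatorial claim tailored to the $k$-marked-column setting in place of \rf(clm:R0Needs).

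Concretely, sample an input $x$ by drawing $\ell_x\colon[m]\to[n]$ uniformly and setting $x_{i,j}=(0,\bot,\bot,\bot)$ if $i=\ell_x(j)$ and $x_{i,j}=(1,\bot,\bot,\bot)$ otherwise; every such input is negative because no column is all-one. The combinatorial claim I would prove is: any Las Vegas algorithm can reject such an input only if it has found zeroes in at least $m-k$ columns, or it has queried at least $nm-cm$ cells for some absolute constant $c$. Once this is established, since the hypothesis $k<m/2$ implies $m-k>m/2=\Omega(m)$ and $nm-cm=\Omega(nm)$, the hypothesis of \rf(lem:R0) is satisfied and $R_0(h_{k,n,m})=\Omega(nm)$ follows.

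To prove the claim I argue the contrapositive. Suppose at most $m-k-1$ columns contain a found zero and at least $cm$ cells are still unqueried; I exhibit a consistent positive input $y$. Let $Z$ be the set of columns with a found zero and $U=[m]\setminus Z$, so $|U|\ge k+1$. Pick any $B\subseteq U$ with $|B|=k$ as the marked columns, and in each $b_s\in B$ choose an unqueried cell $a_s$ (one exists since $b_s\in U$ means the column was not fully queried). Set $\val(y_{a_s})=1$, connect the $a_s$ cyclically via $\ipoint$, and give all special elements the same left and right pointers to two cells $L,R$ that serve as the children of the root in an embedding of $T$. Embed the remaining internal vertices of $T$ into arbitrary unqueried cells of $M$. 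Finally, place the leaf of $T$ labelled $j$ into column $j$: for $j\in Z\setminus B$ take the known zero $(\ell_x(j),j)$; for $j\in U\setminus B$ take any unqueried cell of column $j$ and assign it value~$0$; for $j\in B$ the definition of $h_{k,n,m}$ never tests the path $T(j)$, so place the leaf at any cell of column $j$. Assign $(1,\bot,\bot,\bot)$ to unqueried cells of marked columns and arbitrary values elsewhere so as to remain consistent with the transcript. A direct check of the four conditions defining $h_{k,n,m}$ shows $y$ is positive.

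The main obstacle will be the bookkeeping in this embedding: one must verify that the internal vertices of $T$, the cells $L,R$, the $k$ special elements, and the forced leaves in $U\setminus B$ columns can all be placed in distinct unqueried cells simultaneously. The total budget needed is $O(m)$ unqueried cells with a per-column requirement of one unqueried cell in each column of $B\cup(U\setminus B)$; the former is supplied by the hypothesis (for an appropriately large $c$), and the latter holds automatically since every column of $U$ contains at least one unqueried cell (else its zero would have been found and it would belong to $Z$). This is a routine counting exercise patterned after the final paragraph of the proof of \rf(clm:R0Needs).
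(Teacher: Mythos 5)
The overall architecture you propose -- same hard distribution $\ell_x$, reusing \rf(lem:R0) verbatim, and proving a combinatorial claim analogous to \rf(clm:R0Needs) with a positive-input construction in the contrapositive -- is precisely what the paper does, and your construction of $y$ (cyclically linked special elements, common $L,R$ cells for the root's children, embedding $T$ minus its root and $B$-labelled leaves) matches the paper's.

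However, your combinatorial claim has a genuine quantitative gap. You assert that ``at least $cm$ cells unqueried, for some absolute constant $c$,'' is enough budget for the embedding, and your accounting counts only the total number of cells required ($O(m)$) plus one unqueried cell per column of $B\cup(U\setminus B)$. What this misses is that the $m-2$ non-root internal vertices of $T$ carry value $0$ and nonnull left/right pointers, so they cannot sit in a marked column without violating condition (1) (marked columns must be all-$1$) or condition (2) (uniqueness of the special element). Hence they must be placed in unqueried cells of \emph{unmarked} columns specifically. With only $cm$ total unqueried cells and $c$ an absolute constant, nearly all of them can lie inside the $k$ marked columns: e.g.\ in the regime actually used in \rf(cor:Q-R0), $k=n$ and $m=n^2$, a decision tree can fully query every $Z$-column and leave all $cm$ unqueried cells inside the $|U|$ columns, of which $k$ become marked; after the constructor removes $B$, the unmarked columns of $U$ retain only about $cm\cdot|U\setminus B|/|U|$ unqueried cells, which falls short of the required $m-2$ once $k$ is large. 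No absolute constant $c$ repairs this -- the needed slack grows like $nk$, not like $m$. The paper's claim uses the threshold ``queried more than $n(m-k)-2m$ elements,'' i.e.\ at least $nk+2m$ unqueried cells, precisely so that after discarding the at most $nk$ unqueried cells that may lie in marked columns, at least $2m$ remain in unmarked columns, which comfortably accommodates the $m-2$ internal nodes and the $|U\setminus B|$ forced leaves; since $k<m/2$, this threshold is still $\Omega(nm)$ and \rf(lem:R0) applies. Your proof goes through once you replace $nm-cm$ with a threshold of the form $n(m-k)-O(m)$ and explicitly note that internal nodes are drawn from unqueried cells outside the marked columns.
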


\pfstart
The proof is similar to the proof of \rf(thm:R0-1.5).
We define the hard distribution~\rf(eqn:xijdef) in exactly the same way.
The theorem follows from \rf(lem:R0) and the following claim.
\pfend

\begin{clm}
Any Las Vegas algorithm for the function $h_{k,n,m}$ can reject an input $x$ from the hard distribution~\rf(eqn:xijdef) only if
it has found $m-k+1$ zeroes or 
it has queried more than $n(m-k) - 2m$ elements.
\end{clm}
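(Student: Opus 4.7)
The plan is to mimic the proof of \rf(clm:R0Needs) from the $g_{n,m}$ setting, adapting it to $k$ marked columns. Assume the algorithm has found strictly fewer than $m-k+1$ zeroes and made at most $n(m-k)-2m$ queries. I will construct a positive input $y$ for $h_{k,n,m}$ consistent with every answer the algorithm has received; the algorithm therefore cannot correctly reject $x$.

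Since each column of $x$ contains exactly one zero, the set $B \subseteq [m]$ of columns in which no zero has been found satisfies $|B| \ge k$. Pick any $k$ columns $b_1,\dots,b_k \in B$ as the marked columns of $y$ and set $a_s := (\ell_x(b_s), b_s)$ for $s\in[k]$. Because $b_s \in B$, each $a_s$ is unqueried and I may freely define $y_{a_s}$: set its value to $1$, its internal pointer to $a_{s+1 \bmod k}$, and assign all the $a_s$ a common pair of left/right pointer targets that will serve as the two children of the root of the shared subtree constructed below. Set every other cell in a marked column to $(1,\bot,\bot,\bot)$; this matches the queries made in those columns, which (since $b_s \in B$) were all answered by $(1,\bot,\bot,\bot)$.

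Next, build the shared pointer subtree using the canonical tree $T$ with the leaves labelled by the marked columns dropped. For each non-marked column $j$, place the leaf $\ell_j$ at $(\ell_x(j), j)$ with value $0$: this is consistent with $x$ because the cell is either unqueried (when $j\in B$) or was already committed to $(0,\bot,\bot,\bot)$ (when $j\notin B$), which is a valid leaf. Place the internal vertices of the pruned subtree at arbitrary unqueried cells, with left and right pointers encoding the edges of $T$, and set every remaining cell to $(1,\bot,\bot,\bot)$. The query bound gives at least $nm - (n(m-k)-2m) = nk + 2m$ unqueried cells, which comfortably accommodates the $k$ special elements, the at most $m-k$ leaves landing in $B$ columns, and the at most $m-1$ internal tree vertices. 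The main obstacle, as in \rf(clm:R0Needs), is purely topological bookkeeping: one must verify that the shared subtree can be laid over unqueried cells without ever conflicting with a committed query answer. Since the internal vertices are drawn from the abundant pool of unqueried cells and the leaves sit on the distinguished rows $\ell_x(j)$ that are either unqueried or already answered as $(0,\bot,\bot,\bot)$, no conflict arises, and the constructed $y$ is a consistent positive input.
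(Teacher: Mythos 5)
Your proof is correct and takes essentially the same approach as the paper's: pick $k$ columns in which no zero was found to be marked, place the special elements at $(\ell_x(b_s),b_s)$ with a cyclic chain of internal pointers and a shared pair of left/right pointers, place the leaf for each non-marked column $j$ at $(\ell_x(j),j)$, lay out the pruned tree over unqueried cells, and fill the rest with $(1,\bot,\bot,\bot)$. The only (harmless) imprecision, shared with the paper's own terse write-up, is that the counting of unqueried cells should be phrased to show enough of them lie \emph{outside} the marked columns, since internal tree vertices have value $0$ and therefore cannot be placed in a marked column.
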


\pfstart
Assume these conditions are not met.  Then, we can construct a positive input $y$ that is consistent with the answers to the queries obtained by the algorithm so far.

Indeed, choose a set $B=\{b_1,\dots,b_k\}$ of columns where no zero was found.
Define $a_s = \sA[\ell_x(b_s), b_s]$ for each $s\in[k]$.  These elements have not been queried yet.
Define $\val(y_{a_s})=1$ for all $s$, as well as $\ipoint(y_{a_s}) = a_{s+1}$ for all $s\in[k-1]$, and $\ipoint(y_{a_k}) = a_1$.

Remove from the tree $T$ the leaves with labels in $B$ and the root.  Let the resulting graph be $T'$.  
For each $j\notin B$, put the leaf of $T'$ with label $j$ into $(\ell_x(j),j)$, set its value to 0 and all pointers to $\bot$.  Put the remaining nodes of $T'$ into the still unqueried cells of $M$ preserving the structure of the graph.
Set their value to 0 and their internal pointers to $\bot$. 
Let $u$ and $v$ be the cells where the left and the right child of the root of $T$ went.
For each $s\in [k]$, set $\lpoint(y_{a_s}) = u$ and $\rpoint(y_{a_s}) = v$.  Set all the remaining cells to $(1,\bot,\bot,\bot)$.  The resulting input is positive, and consistent with the answers to the queries obtained by the algorithm.
\pfend

\begin{thm}
\label{thm:R2}
If $n$ and $m$ are sufficiently large, the randomized query complexity $R(h_{1,n,m}) = \Omega\sA[\frac{nm}{\log m}]$.
\end{thm}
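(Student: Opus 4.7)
The plan is to apply Yao's minimax principle (\rf(thm:yao)), constructing a distribution $\nu$ with equal mass on a positive ensemble $\mu_1$ and a closely coupled negative ensemble $\mu_0$.  Sample from $\mu_1$ by picking uniformly at random a marked column $b\in[m]$, a row $i_a\in[n]$ for the special element $a=(i_a,b)$, independent uniform rows $\ell(j)\in[n]$ for the highlighted zero in each non-marked column $j\ne b$, and a uniformly random injective placement of the $m-2$ non-root internal vertices of the balanced binary tree $T$ into unused cells; set all remaining cells to $(1,\bot,\bot,\bot)$.  This yields a valid $1$-input of $h_{1,n,m}$.  Sample from $\mu_0$ by first sampling from $\mu_1$ and then picking $\ell_b\in[n]\setminus\{i_a\}$ uniformly at random and flipping cell $(\ell_b,b)$ from $(1,\bot,\bot,\bot)$ to $(0,\bot,\bot,\bot)$; the extra zero destroys the all-ones structure of the former marked column so the input is negative, while the tree is preserved.

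Because a coupled pair from $\mu_0,\mu_1$ differ in exactly one cell $(\ell_b,b)$, an algorithm's transcripts coincide on the pair unless it queries this cell.  Conditioning on all parameters except $\ell_b$ and averaging over it, the probability that the algorithm queries $(\ell_b,b)$ is at most $\bE[|Q\cap\mathrm{col}(b)|]/(n-1)$, where $Q$ is the algorithm's query set on the $\mu_1$-input.  Thus it suffices to show $\bE[|Q\cap\mathrm{col}(b)|]=o(n)$ whenever the algorithm makes at most $T=o(nm/\log m)$ queries.

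To control $\bE[|Q\cap\mathrm{col}(b)|]$, introduce the potential $\Phi_t$, the number of non-marked columns whose highlighted-zero cell has been observed after $t$ queries (either directly or via a chain of pointer-traversals starting from an internal tree node discovered earlier).  Each such observation certifies the corresponding column as non-marked, so if $\Phi_t<m-1$ then at least two columns remain candidates for $b$.  The key amortized estimate is $\bE[\Phi_T]=O(T\log m/n)$.  Classify each query as \emph{random} (its target is not the child-pointer of a cell already known to hold an internal tree node) or \emph{traversal} (otherwise).  Since a balanced tree with $L$ leaves has at most $2L$ nodes, the total number of traversal queries is at most $2\Phi_T$, leaving at least $T-2\Phi_T$ random queries.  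A random query hits an internal tree node with probability $O(1/n)$, and conditioned on such a hit, the discovered node is uniformly random over the internal nodes of $T$, whose expected number of leaf descendants is $O(\log m)$; similarly a random query hits a not-yet-discovered highlighted zero with probability $O(1/n)$.  Summing yields $\bE[\Phi_T]\le(T-2\bE[\Phi_T])\cdot O(\log m/n)$, which rearranges to $\bE[\Phi_T]=O(T\log m/n)$.

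Finally, if $T=o(nm/\log m)$ then $\bE[\Phi_T]=o(m)$, so by Markov with probability $1-o(1)$ at least $\Omega(m)$ candidate columns remain.  Conditioned on this event, the symmetry of the generative model implies $b$ is approximately uniform over the candidates, and by a symmetry averaging of queries across candidates, $\bE[|Q\cap\mathrm{col}(b)|]\le T/\Omega(m)=o(n/\log m)$; this yields $o(1)$ distinguishing advantage, contradicting the error bound.  The main technical obstacle is formalizing the amortization for adaptive algorithms that may partially explore many subtrees in interleaved fashion; this should be handled by a careful martingale argument on $\Phi_t$ in the spirit of the proof of \rf(lem:R0).
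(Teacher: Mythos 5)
Your overall strategy---a hard distribution, a coupled pair of inputs differing in one cell of the marked column, and a potential function charging $O(\log m/n)$ per query because a found internal node has expected subtree size $O(\log m)$---is in the spirit of the paper's proof, and the tree-expansion estimate you invoke is essentially Lemma~\ref{lem:treeExp}.  However, there are two genuine gaps, one of which you acknowledge and one of which you do not.

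The gap you do not acknowledge is the ``symmetry averaging'' in the last paragraph.  You argue that once $\Omega(m)$ columns remain candidates, $b$ is approximately uniform over them, so $\bE[|Q\cap\mathrm{col}(b)|]\le T/\Omega(m)$.  This is false for an adaptive algorithm: the posterior on $b$ given the transcript is proportional to $n/(n-q_j)$ for each candidate column $j$ with $q_j$ all-ones queries (fewer observed ones means more places for the hidden zero), so the algorithm can bias its budget toward a small set of columns it has already half-explored, inflate their posterior mass, and make $\bE[|Q\cap\mathrm{col}(b)|]$ large even though $\Phi_T$ is small.  Your potential $\Phi_t$ only counts columns that have been outright \emph{eliminated}, and so cannot see this.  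The paper sidesteps the issue entirely by making the flipped cell the \emph{root} of $T$ and by declaring a column compromised once \emph{more than half} of it is queried; since the root is an ancestor of every internal node, a compromised root column propagates (via the ancestor clause) to all $m-1$ tree columns, so ``fewer than $m/2$ compromised columns'' directly forces the root cell to be unqueried, with no posterior argument needed.

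The second gap is the amortization itself, which you defer to ``a careful martingale argument in the spirit of Lemma~\ref{lem:R0}.''  This is where the work actually lies and where your sketch is not yet a proof.  The statement ``a random query hits an internal tree node with probability $O(1/n)$'' fails once a column is more than half queried, and your bookkeeping via traversal versus random queries (and the inequality $\bE[\Phi_T]\le(T-2\bE[\Phi_T])\cdot O(\log m/n)$, which is in any case unnecessary and not clearly correct) does not account for this.  The paper's fix is the combined potential $I_t=\min\{A_t+\tfrac{4C_0\log m}{n}B_t,\,m/2\}$, where $A_t$ counts compromised columns and $B_t$ counts queries in non-compromised columns: the moment a column crosses the half-queried threshold, $A_t$ increases by at most $O(\log m)$ in expectation (by Lemma~\ref{lem:treeExp}) while $B_t$ drops by $\lfloor n/2\rfloor$, so the potential does not go up.  Also note the paper places exactly one internal node per column via an injection $\pi_x$, which is what makes the ``$O(1/n)$ per query'' claim precise; your ``uniformly random injective placement into unused cells'' does not guarantee this column-wise bound.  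In short, the intuition is right, but the adaptive amortization and the posterior-uniformity step both need to be replaced by arguments like the paper's before the proof is complete.
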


Note that \rf(thm:R2) only considers the case $k=1$.
It is proven in a similar fashion to \rf(thm:R0-2) using the additional fact that the expected size of a subtree rooted in a node of a balanced tree is logarithmic.  The proof is given in \rf(sec:R2proof).  

%%%%%%%%%%%%%%%%%%%%%%%
\mytxtcommand{defofh}{the definition of $h_{k,n,m}$}
\subsection{Quantum versus Las Vegas}

%\begin{thm}
%\label{thm:R1}
%The randomized query complexity with a one-sided error $R_1(h_{k,n,m}) = O(nm/k + kn + m)$.
%\end{thm}
%
%\begin{proof}
%We search for a column consisting only of ones.  In the positive case, there are $k$ such columns, so we will find one after $O(m/k)$ trials with high probability.  If we do not find such a column, we reject.
%Each trial takes $n$ queries, so we spend $O(nm/k)$ queries on this step.
%
%If we find a marked column $j$, we query all $x_{i,j}$ for $i \in [n]$ to check that it satisfies condition~(2) of \defofh.  Let $a$ be the corresponding special element.  We follow the internal pointer from $a$ to check that condition (3) is also satisfied and to find all the marked columns.  We then check condition (2) on them as well.  All this requires $kn$ queries.
%
%After that, we follow the left and right pointers from $a$, and check that condition (4) of \defofh is satisfied.  This requires $O(m)$ queries.
%
%We never accept a negative instance, so the algorithm has one-sided error.
%\end{proof}

\begin{thm}
\label{thm:Q-2}
The quantum query complexity $Q(h_{k,n,m}) = \tO\sA[\sqrt{nm/k}+\sqrt{kn}+k+\sqrt{m}]$.
\end{thm}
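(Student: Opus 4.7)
The plan is to design a quantum algorithm that mirrors the high-level structure of \rf(alg:Q) but, since there are no back pointers here, replaces the back-pointer shortcut with an explicit traversal of the cycle of $k$ special elements. As in the proof of \rf(thm:Q-1), every elementary quantum subroutine (Grover's search, amplitude amplification, nested Grover, etc.) will be boosted by $O(\log(nm))$ repetitions so that its error probability is $1/\mathrm{poly}(nm)$; the overhead is absorbed into the $\tO$ notation.

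The algorithm proceeds in five stages. \textbf{Stage~1:} find one all-1 column by running an outer Grover's search over the $m$ columns whose predicate ``column $j$ is all-1'' is implemented by an inner Grover's search for a zero in column $j$. On a positive input there are $k$ all-1 columns, so the outer search runs for $O(\sqrt{m/k})$ iterations and each inner search costs $O(\sqrt{n})$, for a total of $\tO(\sqrt{nm/k})$. \textbf{Stage~2:} inside the returned column, run a Grover's search for the unique cell whose quadruple differs from $(1,\bot,\bot,\bot)$, producing a candidate special element $a_1$, at cost $\tO(\sqrt{n})$. \textbf{Stage~3:} follow the internal pointers for $k$ steps to read off $a_1,\dots,a_k$, checking that the cycle closes and that $\lpoint(x_{a_s})$ and $\rpoint(x_{a_s})$ agree across all $s$, for a cost of $O(k)$.

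\textbf{Stage~4} verifies conditions~(1) and~(2) of \defofh\ on the candidate columns: run an outer Grover over the $k$ candidate columns $b_1,\dots,b_k$ whose inner Grover hunts through the $n$ cells of a column for a cell other than $a_s$ whose quadruple differs from $(1,\bot,\bot,\bot)$; if any is found, reject. The total cost is $\tO(\sqrt{k}\cdot\sqrt{n}) = \tO(\sqrt{kn})$. \textbf{Stage~5} handles the rest of conditions~(1) and~(4): for each $j\in[m]\setminus\{b_1,\dots,b_k\}$, follow the path $T(j)$ from $a_1$ in $O(\log m)$ classical queries and mark $j$ ``bad'' if a pointer on the way is $\bot$, if the endpoint lies outside column $j$, or if its value is not $0$. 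A Grover's search over the $m-k$ columns for a bad one costs $\tO(\sqrt{m}\cdot\log m) = \tO(\sqrt{m})$; if none is found, accept. Summing over all stages gives the claimed $\tO(\sqrt{nm/k}+\sqrt{kn}+k+\sqrt{m})$ bound, with the $\sqrt{n}$ of Stage~2 absorbed into $\sqrt{kn}$.

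The main obstacle will be arguing correctness on negative inputs, where Stage~1 may return a spurious column (either because there is no all-1 column at all, or because there are more than $k$ of them). I will need to argue that in every such case some later stage rejects with high probability: if the $\ipoint$-cycle from $a_1$ does not close after exactly $k$ steps with matching left and right pointers, Stage~3 rejects; if any candidate $b_s$ contains a zero or a non-null cell other than $a_s$, Stage~4 rejects; and if any of the remaining $m-k$ columns is in fact all-1 (so its $T(j)$-leaf cannot have value $0$) or the tree-of-pointers is broken, Stage~5 rejects. Since any negative input violates at least one of the four defining conditions, these three checks cover all possibilities. A secondary technical point, standard but worth stating, is that nested Grover requires the inner subroutine to have polynomially small error, which is exactly what the uniform $O(\log(nm))$ boosting at the start provides.
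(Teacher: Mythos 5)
Your proposal is correct and follows essentially the same approach as the paper's proof: a nested Grover search for an all-1 column costing $\tO(\sqrt{nm/k})$, a Grover search of that column for the special element costing $\tO(\sqrt n)$, a classical walk of the $\ipoint$-cycle costing $O(k)$, a Grover verification of the $k$ marked columns costing $\tO(\sqrt{kn})$, and a Grover search over paths in $T$ costing $\tO(\sqrt m)$, with the observation that the failure of any defining condition is caught by one of these stages. Your write-up is more explicit about error boosting and about why condition~(1) is implicitly enforced by the condition~(4) check, but the algorithm and the cost accounting are the same as the paper's.
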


\pfstart
We search for a column consisting only of ones using Grover's search.  Testing one column takes $O(\sqrt{n})$ queries.  Also, in the positive case, there are $k$ such columns, so we will find one after $O(\sqrt{nm/k})$ queries with high probability.  If we do not find such a column, we reject.

In case we find a marked column $j$, we use Grover's search to check that it satisfies condition (2) of \defofh.  This requires $O(\sqrt{n})$ queries.  Let $a$ be the corresponding special element.
We follow the internal pointer from $a$ to find all the special elements and to check that condition (3) of \defofh is satisfied.  This requires $k$ queries.  We use Grover's search to check that all the remaining elements of the marked columns are equal to $(1,\bot,\bot,\bot)$.  This requires $O(\sqrt{kn})$ queries.

After that, we check condition (4) of \defofh.  
Since there are less than $m$ elements $\ell_j$ to check and each one can be tested in $O(\log m)$ queries, Grover's search can check this condition in $\tO(\sqrt{m})$ queries.
\pfend

\begin{cor}
\label{cor:Q-R0}
There is a total boolean function $f$ with $Q(f) = \tO(R_0(f)^{1/3})$.
\end{cor}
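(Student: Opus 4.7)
The plan is to instantiate $h_{k,n,m}$ with the parameters $k=n$, $n=n$, $m=n^2$, i.e., to use the function $h_{n,n,n^2}$, and then simply plug these values into \rf(thm:R0-2) and \rf(thm:Q-2) to read off the desired separation. First I would check that the hypothesis $k<m/2$ of \rf(thm:R0-2) is satisfied: for $n$ large enough we indeed have $n<n^2/2$, so \rf(thm:R0-2) yields
\[
R_0(h_{n,n,n^2}) \;=\; \Omega(n\cdot n^2) \;=\; \Omega(n^3).
\]

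Next I would evaluate each term in the upper bound from \rf(thm:Q-2) separately: $\sqrt{nm/k}=\sqrt{n\cdot n^2/n}=n$, $\sqrt{kn}=n$, $k=n$, and $\sqrt{m}=n$. All four terms are $\Theta(n)$, so
\[
Q(h_{n,n,n^2}) \;=\; \tO(n).
\]
Combining, $Q(h_{n,n,n^2}) = \tO\!\bigl(R_0(h_{n,n,n^2})^{1/3}\bigr)$, which is exactly the claimed polynomial relationship for the non-boolean function.

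Finally, to obtain a total boolean function with the same separation (up to poly-logarithmic factors), I would invoke the standard booleanization recipe described at the end of \rf(sec:prelim). The alphabet $\Sigma=\bool\times\tM\times\tM\times\tM$ has size polynomial in $n$, so encoding each symbol takes $O(\log n)$ bits; this multiplies $Q$ by $O(\log n)$ and preserves the lower bound on $R_0$ (a lower bound for queries returning alphabet symbols remains a lower bound when queries return single bits). The resulting boolean function $\tilde h_{n,n,n^2}$ therefore satisfies $Q(\tilde h) = \tO(R_0(\tilde h)^{1/3})$, which proves the corollary. There is no substantive obstacle here: once the two theorems from this section are in hand, the corollary is a bookkeeping exercise in choosing and checking parameters.
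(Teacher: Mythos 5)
Your proposal is correct and follows essentially the same route as the paper: choose $h_{k,n,m}$ with $k=n$, $m=n^2$, read off $R_0=\Omega(n^3)$ from \rf(thm:R0-2) and $Q=\tO(n)$ from \rf(thm:Q-2), then booleanize using the polynomial-size alphabet. The only difference is that you spell out the verification of $k<m/2$ and the term-by-term evaluation of the quantum upper bound, which the paper leaves implicit.
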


\begin{proof}
We first obtain the separation for a non-boolean function.
Take $h_{k,n,m}$ with $k=n$ and $m = n^2$.  Then the quantum complexity is $\tO(n)$ by \rf(thm:Q-2), and the Las Vegas randomized complexity is $\Omega(n^3)$ by \rf(thm:R0-2).
Since the size of the alphabet $\Sigma$ is polynomial, we obtain the required separations for the associated boolean function.
\end{proof}

%%%%%%%%%%%%%%%%%%%%%%%
\subsection{Exact Quantum versus Monte Carlo}

\begin{thm}
\label{thm:QE}
The exact quantum query complexity $Q_E(h_{k,n,m}) = O(n\sqrt{m/k} + kn + m)$.
\end{thm}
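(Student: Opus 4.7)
The plan is an exact quantum algorithm with four phases: (i) find one marked column via exact Grover search; (ii) scan that column classically to identify the special element $a_0$; (iii) recover the remaining special elements through the internal-pointer cycle and verify conditions (1)--(3) in the $k$ marked columns; and (iv) deterministically verify condition (4) by a depth-first traversal of the tree $T$. Only phase (i) uses quantum queries and it alone accounts for the $O(n\sqrt{m/k})$ term; phases (ii)--(iv) are purely classical and account for the $O(kn+m)$ terms.

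For phase (i) I would apply the exact Grover algorithm over the $m$ columns under the assumption that exactly $k$ of them are marked. The Grover oracle on input $j$ returns $1$ iff $\val(x_{i,j})=1$ for all $i\in[n]$, which is reversibly computable with $n$ input queries. Exact Grover then needs $O(\sqrt{m/k})$ oracle calls, yielding $O(n\sqrt{m/k})$ queries in total. On a positive input the assumption is correct, so some marked $b_s$ is returned with certainty; on a negative input the output distribution is a priori unknown, but this is handled by the deterministic verification described next.

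Let $j_0$ be the index returned by phase (i). In phase (ii), classically query all $n$ cells of column $j_0$ to confirm $\val(x_{i,j_0})=1$ for every $i$ and to locate the unique cell $a_0$ with $x_{a_0}\ne(1,\bot,\bot,\bot)$; reject if either check fails. In phase (iii), follow $\ipoint$ for $k-1$ steps to obtain $a_1,\dots,a_{k-1}$, verifying that $\ipoint(x_{a_{k-1}})=a_0$ and that $\lpoint$ and $\rpoint$ agree across all the $a_s$; then for each $s\ge 1$ classically scan the column containing $a_s$ to check that every other cell equals $(1,\bot,\bot,\bot)$. This phase costs $O(kn)$ queries. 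In phase (iv), perform a DFS on $T$ starting at $a_0$, using $\lpoint$ and $\rpoint$ as the edges to the left and right children; since $T$ has only $2m-1$ vertices, the traversal uses $O(m)$ queries. At each leaf of $T$ labeled $j$, verify that the cell reached lies in column $j$, has value $0$, and that $j\notin\{b_1,\dots,b_k\}$. Accept iff every check succeeds. Summing gives $O(n\sqrt{m/k}+n+kn+k+m)=O(n\sqrt{m/k}+kn+m)$.

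The one point that really needs care is exactness on negative inputs, where the ``exactly $k$ marked columns'' assumption used inside exact Grover may be false and the returned index $j_0$ is drawn from an unknown distribution. The resolution is that the verification in phases (ii)--(iv) deterministically checks all four defining conditions of the function, so if verification ever accepted for some $j_0$ the input would satisfy those conditions and hence be positive. Consequently, on any negative input, verification fails for every possible $j_0$ output by exact Grover, so the algorithm rejects with probability exactly $1$; on any positive input, exact Grover returns some valid $b_s$ with probability $1$ and verification always accepts, giving an exact quantum algorithm with the stated cost.
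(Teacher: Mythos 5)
Your proof is correct and takes essentially the same approach as the paper: exact Grover search over the columns (with the promise of exactly $k$ all-ones columns), followed by a deterministic classical verification of conditions (2)--(4), giving the same $O(n\sqrt{m/k}) + O(kn) + O(m)$ decomposition; your explicit discussion of why exactness survives on negative inputs, where the promise fed to exact Grover may be false, is a clarification the paper leaves implicit. One small phrasing slip in phase (iv): for a leaf of $T$ labeled $j\in\{b_1,\dots,b_k\}$ you should simply \emph{skip} the check rather than ``verify that $j\notin\{b_1,\dots,b_k\}$''---condition (4) imposes nothing on the paths $T(b_s)$, so on a positive input the pointers may legitimately lead to such a leaf and this must not trigger a rejection.
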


\begin{proof}
We use the exact version of Grover's search to find a column consisting only of ones.  In the positive case, there are exactly $k$ such columns, and testing each column takes $n$ queries.  Thus, the complexity of this step is $O(n\sqrt{m/k})$.

If we find a marked column $j$, we query all $x_{i,j}$ for $i \in [n]$ to check that it satisfies condition~(2) of \defofh.  Let $a$ be the corresponding special element.  We follow the internal pointer from $a$ to check that condition (3) is also satisfied and to find all the marked columns.  We then check condition (2) on them as well.  All this requires $kn$ queries.

After that, we follow the left and right pointers from $a$, and check that condition (4) of \defofh is satisfied.  This requires $O(m)$ queries.
\end{proof}

%\begin{cor}
%\label{cor:QE-R0}
%There is a total boolean function $f$ with $Q_E(f) = O(R_0(f)^{3/5})$.
%\end{cor}
%
%\begin{proof}
%Take $h_{k,n,m}$ with $k=\sqrt{n}$ and $m = n^{3/2}$.  Then the exact quantum query complexity is $O(n^{3/2})$ by \rf(thm:QE) and the Las Vegas randomized query complexity is $\Omega(n^{5/2})$ by \rf(thm:R0-2).
%Since the size of the alphabet $\Sigma$ is polynomial, this also gives the separation for the 
%associated boolean function $\tilde h_{k,n,m}$. 
%\end{proof}

\begin{cor}
\label{cor:QE-R2}
There exists a total boolean function $f$ with $Q_E(f) = \tO(R(f)^{2/3})$.
\end{cor}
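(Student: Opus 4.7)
The plan is to instantiate the function $h_{k,n,m}$ with $k=1$ and $m=n^2$, matching the entry $h_{1,n,n^2}$ in the table, and then combine the bounds from \rf(thm:QE) and \rf(thm:R2). First I would plug $k=1$ and $m=n^2$ into \rf(thm:QE) to obtain
\[
Q_E(h_{1,n,n^2}) = O\!\sA[n\sqrt{n^2/1} + 1\cdot n + n^2] = O(n^2).
\]
Next I would apply \rf(thm:R2}, which covers precisely the $k=1$ case, to get
\[
R(h_{1,n,n^2}) = \Omega\!\sA[\tfrac{n\cdot n^2}{\log (n^2)}] = \Omega(n^3/\log n).
\]
Combining these two estimates yields $Q_E(h_{1,n,n^2}) = O(n^2) = \tO\sA[(n^3/\log n)^{2/3}] = \tO\sA[R(h_{1,n,n^2})^{2/3}]$, as required.

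Finally, to pass from the non-boolean function $h_{1,n,n^2}$ to a total boolean function, I would invoke the booleanizing procedure described in the preliminaries: since the input alphabet $\Sigma = \bool\times\tM\times\tM\times\tM$ has size polynomial in $n$, encoding each letter in $\ceil[\log|\Sigma|] = O(\log n)$ bits converts the lower bound on $R$ and the upper bound on $Q_E$ into analogous bounds for the associated boolean function $\tilde h_{1,n,n^2}$, with only poly-logarithmic losses that are absorbed into the $\tO$ notation. Mirroring the proofs of \rf(cor:R0-D), \rf(cor:R1-R0), and \rf(cor:Q-R0), this gives a total boolean $f$ with $Q_E(f) = \tO(R(f)^{2/3})$.

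There is essentially no conceptual obstacle here beyond correctly juggling parameters, because the hard work has already been done in \rf(thm:QE) and \rf(thm:R2); the only thing to be a little careful with is that \rf(thm:R2) is stated only for $k=1$, which is exactly why the separation is realized by $h_{1,n,n^2}$ rather than by the general $h_{k,n,m}$ used in the $Q$ vs $R_0$ separation.
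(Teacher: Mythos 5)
Your proposal is correct and matches the paper's own proof essentially verbatim: both instantiate $h_{1,n,n^2}$, read off $Q_E = O(n^2)$ from \rf(thm:QE) and $R = \widetilde\Omega(n^3)$ from \rf(thm:R2), and then pass to the associated boolean function via the standard booleanization with only poly-logarithmic loss.
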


\pfstart
Take $h_{1,n,m}$ with $m = n^2$.  Then the exact quantum query complexity is $O(n^2)$ by \rf(thm:QE) and the Monte Carlo randomized query complexity is $\widetilde\Omega(n^{3})$ by \rf(thm:R2).
Since the size of the alphabet $\Sigma$ is polynomial, this also gives the separation for the associated boolean function $\tilde h_{1,n,m}$.
\pfend

%%%%%%%%%%%%%%%%%%%%%%%
\subsection{Approximate Polynomial Degree versus Monte Carlo}
\begin{thm}
\label{thm:deg}
Let $\tilde h_{1,n,m}\colon \bool^{nm\ceil[\log|\Sigma|]}\to\bool$ be the boolean function associated to $h_{1,n,m}$.
The approximate polynomial degree $\tdeg(\tilde h_{1,n,m}) = \tO(\sqrt{n} + \sqrt{m})$
\end{thm}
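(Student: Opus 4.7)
The plan is to use the polynomial method, exploiting the fact that the unambiguous $1$-certificates of $h_{1,n,m}$ are parameterized cleanly by the special element (back pointers are absent when $k=1$). The key identity is
\[
h_{1,n,m}(x) \;=\; \sum_{a \in M} g_a(x),
\]
where $g_a(x) \in \{0,1\}$ is the indicator that $x$ is positive with $a$ as its special element. This is an exact pointwise identity: every positive input has a unique marked column and a unique special element, so exactly one term is nonzero; conversely, $g_a(x) = 1$ forces column $b_a$ to be all-$1$s while the tree rooted at $a$ witnesses a zero in every other column, so such $x$ is automatically positive.

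Each $g_a$ factors as a product of (i) a \emph{column factor} asserting that every cell of column $b_a$ other than $a$ equals $(1,\bot,\bot,\bot)$ and that $\val(x_a)=1$, $\ipoint(x_a)=a$, and (ii) a \emph{tree factor} asserting that for every $j \neq b_a$, following the sequence $T(j)$ of left/right pointers from $a$ lands on a cell in column $j$ with value $0$. The column factor is a conjunction over $O(n \log(nm))$ boolean input bits, so by the standard bound $\tdeg_\epsilon(\mathrm{AND}_N) = O(\sqrt{N \log(1/\epsilon)})$ its approximate degree is $\tO(\sqrt{n})$ for inverse-polynomial $\epsilon$.

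The tree factor is the heart of the argument. It is the AND over $j \neq b_a$ of path-validity predicates $P_j(x)$, which I claim have \emph{exact} polynomial degree $O(\log^2(nm))$. Enumerating over potential intermediate cells,
\[
P_j(x) \;=\; \sum_{v_1,\ldots,v_k \in M} \prod_{t=1}^{k} \mathbb{1}\bigl[d_t(x_{v_{t-1}}) = v_t\bigr] \cdot \mathbb{1}\bigl[v_k \in [n]\times\{j\}\bigr] \cdot \mathbb{1}\bigl[\val(x_{v_k}) = 0\bigr],
\]
where $k = |T(j)| = O(\log m)$, $v_0 = a$, and $d_t \in \{\lpoint,\rpoint\}$ is the $t$-th direction specified by $T(j)$. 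For any fixed $x$ exactly one summand is nonzero (the actual path), so this sum represents $P_j$ exactly, with each summand a monomial of degree $O(k \log(nm)) = O(\log^2(nm))$ in the boolean input bits. Substituting these exact polynomials into the degree-$O(\sqrt{m \log(1/\epsilon)})$ approximating polynomial for $\mathrm{AND}_{m-1}$ gives an $\epsilon$-approximation of the tree factor with degree $\tO(\sqrt{m})$. Multiplying by the approximating polynomial for the column factor (both factors being $\{0,1\}$-valued, so errors compose harmlessly up to constants) yields $\tdeg_\epsilon(g_a) = \tO(\sqrt{n}+\sqrt{m})$.

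Finally, since $|M|=nm$, I would pick $\epsilon = \Theta(1/(nm))$ so that $\sum_{a\in M} \tilde g_a$ approximates $h_{1,n,m}$ with total error at most $1/10$; as $\log(1/\epsilon) = O(\log(nm))$, this introduces only polylogarithmic factors absorbed in $\tO$. The resulting polynomial has degree $\tO(\sqrt{n}+\sqrt{m})$, proving the theorem. The main technical point is the exact-degree bound on each $P_j$: I must exploit that the pointers deterministically select the path so that the exponentially large sum over intermediate cells collapses, at any fixed input, to a single nonzero monomial of polylogarithmic degree. All other steps are standard manipulations of approximate-degree bounds.
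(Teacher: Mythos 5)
Your proposal is correct, and it takes a genuinely different route from the paper's. The paper indexes by columns: it defines $g_j(x)$ to be the indicator that $x$ is positive with marked column $j$, evaluates each $g_j$ with a $\tO(\sqrt n + \sqrt m)$-query bounded-error \emph{quantum} algorithm (a variant of \textbf{VerifyColumn}), amplifies the success probability to $1 - 1/(10m)$, and invokes the Beals et al.\ theorem to extract a polynomial $p_j$ of degree $\tO(\sqrt n + \sqrt m)$ for each; summing the $m$ polynomials then gives the bound. You instead index by cells (special elements), decompose $g_a$ into an explicit column factor and tree factor, and build the approximating polynomial directly: the column factor is an $\mathrm{AND}$ over $O(n\log(nm))$ bits, and the tree factor is an $\mathrm{AND}$ over $m-1$ path-validity predicates $P_j$, each of which you represent \emph{exactly} in degree $O(\log^2(nm))$ by summing over candidate intermediate cells so that, for each fixed input, the deterministic pointer structure forces at most one summand to be nonzero. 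Both constructions then finish by picking $\epsilon$ inverse-polynomial so the sum over the $O(nm)$ (or $m$) terms keeps total error below $1/10$. What yours buys is self-containedness and transparency: it avoids the detour through quantum query algorithms and the Beals et al.\ conversion, making plain at the polynomial level why the degree is small. What the paper's buys is brevity: it reuses the already-built quantum subroutine. One cosmetic correction to your sketch: each summand in your formula for $P_j$ is a low-degree \emph{polynomial} (a product of indicator polynomials), not a single monomial; this does not affect the degree bound. Also note that the paper's $p_j$ automatically lies in $[0,1]$ since it is an acceptance probability, whereas your factor polynomials might slightly overshoot $[0,1]$; your observation that products of $\epsilon$-approximations to $\{0,1\}$-valued functions incur only $O(\epsilon)$ error handles this.
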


\pfstart
For $j\in [m]$, let $g_j\colon \bool^{nm\ceil[\log|\Sigma|]}\to\bool$ be defined as follows.
The value $g_j(x)$ is 1 if $\tilde h_{1,n,m}(x)=1$, and $j$ is the marked column.  Otherwise, $g_j(x) = 0$.

For each $j\in[m]$, the function $g_j(x)$ can be evaluated in $\tO(\sqrt{n}+\sqrt{m})$ quantum queries using a variant of the VerifyColumn procedure in \rf(alg:Q).
Repeating this quantum algorithm $O(\log m)$ times, we may assume that its error probability is at most $1/(10m)$.
We then use the connection between quantum query algorithms and polynomial degree of \cite{beals:pol} to construct a
polynomial $p_j(x)$ of degree $2T$ (where $T$ is the number of queries) that is equal to the acceptance probability of this algorithm.
The polynomial $p_j(x)$ is of degree $\tO(\sqrt{n}+ \sqrt m)$ and satisfies 
$0\le p_j(x)\le 1/(10m)$ if $g_j(x)=0$, and $1-1/(10m)\le p_j(x)\le 1$ otherwise.

We then define a polynomial $p(x) = \sum_j p_j(x)$.
If $h_{1,n,m}(x)=0$, then all $g_j(x) = 0$ and $0\le p(x) \le 1/10$.
Otherwise, there is unique $b\in [m]$ such that $g_b(x)=1$, and all other $g_j(x)=0$.  
In this case, $1-1/(10m)\le p(x)\le 11/10$.  Thus, $p(x)$ is an approximating polynomial to $h_{1,n,m}$ and its degree is $\tO(\sqrt{n} + \sqrt m)$.
\pfend

\begin{cor}
\label{cor:deg-R2}
There is a total boolean function $f$ with $\tdeg(f) = \tO(R(f)^{1/4})$.
\end{cor}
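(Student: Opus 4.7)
The plan is to specialize the family $h_{1,n,m}$ by balancing its parameters and then to combine the approximate-degree upper bound from \rf(thm:deg) with the Monte Carlo lower bound from \rf(thm:R2). The former gives $\tdeg(\tilde h_{1,n,m}) = \tO(\sqrt{n}+\sqrt{m})$, and the latter gives $R(h_{1,n,m}) = \widetilde\Omega(nm/\log m)$. To read off the best exponent $\alpha$ in a separation $\tdeg = \tO(R^\alpha)$, I would ask when $\sqrt{n}+\sqrt{m}$ is smallest relative to $(nm)^\alpha$; elementary calculus (or just symmetry) shows this occurs at $n=m$, yielding $\alpha = 1/4$. So I take $n=m$.

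With this choice, \rf(thm:deg) yields $\tdeg(\tilde h_{1,n,n}) = \tO(\sqrt{n})$ and \rf(thm:R2) yields $R(h_{1,n,n}) = \widetilde\Omega(n^2)$. The next step is to pass to a genuine boolean function. Using the booleanization discussion in \rf(sec:prelim), and noting that the alphabet $\Sigma = \bool\times\tM\times\tM\times\tM$ has size polynomial in $n$, the associated boolean function $f = \tilde h_{1,n,n}$ lives on $\tO(n^2)$ bits; the approximate-degree bound carries over with at worst a polylogarithmic loss, and the randomized lower bound carries over because a $\Sigma$-query can always simulate a boolean query at unit cost, so $R(f) \ge R(h_{1,n,n}) = \widetilde\Omega(n^2)$. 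Combining these two estimates, $\tdeg(f) = \tO(\sqrt{n}) = \tO(R(f)^{1/4})$, which is the claim.

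I do not expect a serious obstacle here, as the substantive work has already been done in \rf(thm:deg) and \rf(thm:R2); the corollary is a clean specialization plus the standard booleanization. The only detail worth double-checking is the direction of the alphabet reduction: we need the $\Sigma$-valued lower bound to imply the boolean-valued lower bound, which holds because a randomized algorithm over $\bool$-queries for $f$ yields, without any overhead, a randomized algorithm over $\Sigma$-queries for $h_{1,n,n}$.
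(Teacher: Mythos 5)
Your proof is correct and takes essentially the same approach as the paper: set $m=n$, invoke Theorem~\ref{thm:deg} for the approximate-degree upper bound and Theorem~\ref{thm:R2} for the Monte Carlo lower bound, and observe that the lower bound transfers to the boolean version because the alphabet size is polynomial. The extra discussion of parameter optimization and the direction of the booleanization reduction are both sound but not needed beyond what the paper already records in its preliminaries.
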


\pfstart
Take $\tilde h_{1,n,m}$ from \rf(thm:deg) with $m = n$.  Then the approximate degree is $\tO(\sqrt{n})$, and the Monte Carlo randomized query complexity is $\Omega(n^2)$ by \rf(thm:R2).
\pfend

%%%%%%%%%%%%%%%%%%%%%%%%%%%%%%%%%
\section{Proof of \rf(thm:R2)}
%%%%%%%%%%%%%%%%%%%%%%%%%%%%%%%%%
\label{sec:R2proof}
\mycommand{cnst}{C_0}
\mycommand{treeexp}{{\cnst\log m}}
\mycommand{magic}{\frac{4\cnst \log m}n}

\begin{lem}
\label{lem:treeExp}
Assume $T$ is a balanced binary tree with $m$ leaves, and at least a $1/4$ fraction of its nodes are marked.
Let $u$ be sampled from all the marked nodes of $T$ uniformly at random.  The expected size of the subtree rooted at $u$ does not exceed $\treeexp$ for some constant $\cnst$.
\end{lem}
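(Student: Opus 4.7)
The plan is a straightforward double-counting argument that exploits the balanced structure of $T$. First I would observe that a balanced binary tree with $m$ leaves has $N = 2m-1$ internal and leaf nodes in total, and, because of balance, every node of $T$ lies at depth at most $\lceil \log_2 m \rceil + 1$ from the root. Let $S(u)$ denote the size of the subtree rooted at $u$, and let $\mathrm{depth}(v)$ denote the depth of $v$.

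Next I would rewrite the total $\sum_{u} S(u)$ by switching the order of summation: a node $v$ belongs to the subtree of $u$ if and only if $u$ lies on the path from $v$ to the root, so
\[
\sum_{u} S(u) \;=\; \sum_{v} |\{\text{ancestors of } v\} \cup \{v\}| \;=\; \sum_{v} (\mathrm{depth}(v)+1).
\]
Using the balance bound $\mathrm{depth}(v) \le \lceil \log_2 m \rceil + 1$ for every $v$, the right-hand side is at most $N \cdot (\lceil \log_2 m \rceil + 2) \le C_0' \, m \log m$ for some absolute constant $C_0'$.

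Finally, let $U \subseteq V(T)$ be the set of marked nodes, so $|U| \ge N/4 \ge (2m-1)/4$. Then for $u$ chosen uniformly from $U$,
\[
\mathbb{E}[S(u)] \;=\; \frac{1}{|U|} \sum_{u \in U} S(u) \;\le\; \frac{4}{2m-1}\sum_{u \in V(T)} S(u) \;\le\; \frac{4 C_0' \, m \log m}{2m-1} \;\le\; C_0 \log m
\]
for a suitable absolute constant $C_0$, as required.

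There is no real obstacle here: the only thing to verify is the uniform depth bound, which holds by the explicit construction of $T$ described near Figure~\ref{fig:tree} (a completely balanced tree on $2^k$ leaves, augmented by at most one extra level of children), so all depths are within an additive constant of $\log_2 m$. The marking fraction $1/4$ only enters as a constant factor blow-up in the final averaging step, which is why the bound is $O(\log m)$ regardless of how the marked set is distributed among the nodes.
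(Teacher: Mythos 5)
Your proof is correct and takes a cleaner route than the paper's. The paper first analyzes the case where $T$ is a \emph{complete} balanced binary tree with all nodes marked, computing $\sum_i \frac{2^{k-i}}{2^k-1}(2^i-1)\le k$ by summing over levels, and then handles the general case by embedding the actual balanced tree into a complete one and arguing that sampling from the marked nodes distorts probabilities by only a constant factor. You instead double-count the pairs $(u,v)$ with $v$ in the subtree of $u$, which gives $\sum_u S(u) = \sum_v(\mathrm{depth}(v)+1) \le N\cdot(\lceil\log_2 m\rceil + 2)$ directly from the uniform depth bound, without any need to pass through the complete-tree case or the embedding argument. The two proofs share the same backbone — average subtree size over all nodes is $O(\log m)$, and restricting the average to a set containing a constant fraction of nodes loses only a constant factor — but your version is self-contained for any tree with bounded depth, not just trees that embed nicely into a complete binary tree, and it makes the role of the $1/4$-fraction hypothesis transparent (it only appears as the factor $N/|U|\le 4$ in the final averaging).
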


\pfstart
Let us first consider the case when $T$ is a complete balanced binary tree with $2^k-1$ nodes and all its nodes are marked.  Then, the expected size of the subtree is (where $i$ is the height of the node $u$)
\begin{equation}
\label{eqn:exp1}
\sum_{i=1}^{k} \frac{2^{k-i}}{2^k-1}\cdot (2^i-1) \le k.
\end{equation}

In the general case, $T$ can be embedded into a complete balanced binary tree $T'$ with $2^k-1$ nodes, where $k=\ceil[\log m]+1$.  Mark in $T'$ all the nodes marked in $T$.  Again, an $\Omega(1)$ fraction of the nodes is marked.  Hence, for each node $u$, a probability that $u$ is sampled from the marked nodes of $T'$ is at most a constant times its probability to be sampled from all the nodes of $T'$.
Thus, the expected size of the subtree is at most a constant times the value in~\rf(eqn:exp1).
\pfend

\mycommand{leaf}{\ell^{\mbox{\tiny L}}}
\mycommand{intern}{\ell^{\mbox{\tiny N}}}
\mycommand{allint}{T^{\mbox{\tiny N}}}

By \rf(thm:yao), it suffices to construct a hard distribution on inputs and show that any deterministic decision tree
 that computes $h_{1,n,m}$ with distributional error less than $2/10$ on the hard distribution 
makes an $\Omega(nm/\log m)$ expected number of queries.  
By Markov's inequality, it suffices to show that any deterministic decision tree that performs this task with error $3/8$ has \emph{depth} $\Omega(nm/\log m)$.
We now define the hard distribution.  

%By \rf(thm:yao) it suffices to construct a hard distribution on inputs and show that any deterministic algorithm 
%that computes $h_{1,n,m}$ on more than a $8/10$ fraction of inputs with respect to this distribution makes an $\Omega(nm/\log m)$ expected number of queries.  By Markov's inequality it suffices to show for a constant $c$
%that any decision tree of depth less than $cnm/\log m$  errs with probability at least $3/8$ on the hard distribution.
%We now define the hard distribution.  

Let $T$ be the balanced binary tree from the definition of $h_{1,n,m}$.
Denote by $r$ the root of the tree, and by $\allint$ the set of internal nodes of $T$.  The latter has cardinality $m-1$.

An input $x = (x_{i,j})$ is defined by a quadruple $(v_x, \pi_x, \leaf_x, \intern_x)$, where 
\itemstart
\item $v_x\in\bool$, it will be the value of the function $h_{1,n,m}$ on $x$;
\item $\pi_x\colon \allint\to [m]$ is an injection, it specifies to which columns the internal nodes of $T$ will go; and
\item $\leaf_x\colon [m]\to[n]$ and $\intern_x\colon [m]\to[n]$ are functions satisfying $\leaf_x(j)\ne \intern_x(j)$ for each $j\in[m]$.  They specify the rows where the leaves and the internal nodes of the tree $T$ land in column $j$.
\itemend
The definition is as follows.  Remove the leaf with the label $\pi_x(r)$ from $T$.  For each column $j\ne \pi_x(r)$, put the leaf $j$ into the cell $\sA[\leaf_x(j),j]$, set all its pointers to $\bot$ and its value to 0.
Then, for each internal node $u$ of the tree, put it at $\sA[\intern_x(\pi_x(u)), \pi_x(u)]$, set its left and right pointers so that the structure of the tree is preserved.
If $u\ne r$, assign its internal pointer to $\bot$, and set its value to 0.
Otherwise, if $u=r$, assign its internal pointer to itself, and set its value to $v_x$.
Assign the quadruple $(1,\bot,\bot,\bot)$ to all other cells.
It is easy to see that the value of the function $h_{1,n,m}$ on this input is $v_x$.

The hard distribution is defined as the uniform distribution over the quadruples $(v_x, \pi_x, \leaf_x, \intern_x)$ subject to the constraint $\leaf_x(j)\ne \intern_x(j)$ for each $j\in[m]$.
\medskip

\mycommand{depth}{D}
Let $\cD$ be a deterministic decision tree of depth 
\[
\depth = \frac{nm}{64\cnst \log m}.
\]
We will prove that $\cD$ errs on $x$, sampled from the hard distribution, with probability at least $3/8$.

Let $x$ be an input from the hard distribution, and consider the vertex $S$ of $\cD$ after $t$ queries to $x$.
We say that a column $j\in[m]$ and the corresponding tree element $\pi_x^{-1}(j)$ (if it exists) are \emph{compromised} on the input $x$ after $t$ queries if at least one of the following three conditions is satisfied:
\itemstart
\item one of the cells $(\leaf_x(j), j)$ and $(\intern_x(j), j)$ has been queried;
\item more than a half of the cells in the $j$th column have been queried; or
\item in the tree $T$ there exists an ancestor $u$ of $\pi^{-1}_x(j)$ such that one of the above two conditions is satisfied for $\pi_x(u)$.
\itemend
%Note that the algorithm can guess the value of the function with probability better than $1/2$ only if the root node of the whole tree has been queried.  The latter means that all the columns are compromised.

Let $A_t(x)$ denote the number of compromised columns, and $B_t(x)$ denote the number of cells queried outside the compromised columns, both after $t$ queries.
Consider the following quantity
\[
I_t(x) = \min\sfig{A_t(x) + \magic B_t(x),\;\; \frac m2}.
\]
Note that $A_t(x)$ can only increase as $t$ increases, whereas $B_t(x)$ can increase or decrease.

\begin{clm}
\label{clm:IExp}
For a non-negative integer $t$, we have 
\begin{equation}
\label{eqn:IExp}
\bE_x\skA[I_{t+1}(x)] - \bE_x\skA[I_t(x)] \le \frac{8\cnst \log m}n,
\end{equation}
 where the expectation is over the inputs in the hard distribution.
\end{clm}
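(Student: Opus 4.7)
I would follow the template of \rf(clm:R0I), conditioning on the equivalence class $S$ of inputs reaching a given vertex of $\cD$ after $t$ queries, and showing that the conditional expected increment of $I_t(x)$ inside each class is at most $8\cnst\log m/n$.  Let $(i,j)$ denote the deterministic $(t+1)$st query on inputs of $S$.  If $I_t(x) = m/2$ for every $x\in S$, the increment is nonpositive by the cap, so I may assume $A_t(x) < m/2$; then at least half of the columns, and hence at least a quarter of the nodes of $T$, are uncompromised, which is precisely the hypothesis of \rf(lem:treeExp).

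Then I would split inputs $x\in S$ into four cases, mirroring \rf(clm:R0I) but with ``column compromised'' increments of unit size replaced by ``subtree compromised'' increments whose expectations are controlled by \rf(lem:treeExp).
(a) If column $j$ is already compromised on $x$, neither $A_t$ nor $B_t$ changes.
(b) If querying $(i,j)$ crosses the $n/2$ threshold on column $j$ (a deterministic event on $S$), column $j$ becomes compromised by condition~(ii); the $\ge\lfloor n/2\rfloor$ previously counted queries in column $j$ drop from $B_t$, reducing $I_t$ by at least $\cnst\log m$ through the coefficient $\magic$, while $A_t$ grows by one plus the number of previously-uncompromised descendants of $\pi_x^{-1}(j)$ in $T$, whose conditional expectation is at most $\cnst\log m$ by \rf(lem:treeExp); the conditional expected change is therefore nonpositive.
(c) If column $j$ stays below the threshold but $i\in\{\leaf_x(j),\intern_x(j)\}$, column $j$ becomes compromised by condition~(i); the probability over $x\in S$ is at most $2/(n/2)=4/n$ because the at least $n/2$ unqueried rows of column $j$ are exchangeable, and conditional on the event the expected number of newly compromised columns is again at most $\cnst\log m$ by \rf(lem:treeExp), contributing $\le 4\cnst\log m/n$.
(d) Otherwise $A_t$ is unchanged and $B_t$ grows by $1$, so $I_t$ grows by $\magic = 4\cnst\log m/n$.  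Summing (a)--(d) yields the required $8\cnst\log m/n$.

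The main obstacle is justifying the two invocations of \rf(lem:treeExp).  I need that, conditional on $S$, the distribution of $\pi_x^{-1}(j)$ is sufficiently close to uniform over the uncompromised internal nodes of $T$ (with some chance that $j$ lies outside the image of $\pi_x$, in which case the subtree collapses to a single leaf and the bound is trivial), and likewise that the pair $(\leaf_x(j),\intern_x(j))$ is near-uniform over the unqueried rows of column $j$.  Both should follow from the built-in symmetries of the hard distribution under permuting unqueried rows within each column and under permuting the assignment of uncompromised internal nodes to uncompromised columns; the delicate step is coupling these marginals so that the conditional expected subtree size truly matches the unconditional bound of \rf(lem:treeExp), up to absorbing small constants into $\cnst$.
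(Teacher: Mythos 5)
Your case decomposition, the two invocations of \rf(lem:treeExp), and the final arithmetic $\frac{4}{n}\cdot\cnst\log m + \magic = \frac{8\cnst\log m}{n}$ all match the paper's proof exactly, so the overall approach is the same. The one place you hesitate — whether the conditional law of $\pi_x^{-1}(j)$ and of $(\leaf_x(j),\intern_x(j))$ really makes \rf(lem:treeExp) applicable — is precisely the point where your stated conditioning is too coarse. You propose to condition, as in \rf(clm:R0I), only on the vertex $S$ of $\cD$ reached after $t$ queries. But the event ``column $j$ is compromised'' depends on $\pi_x$ through condition (iii) (an ancestor's column may be compromised), so $C$, and hence which node sits at column $j$, is \emph{not} determined by $S$ alone, and you would indeed be left with a messy coupling argument. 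The paper's fix is to refine the equivalence relation: two inputs are equivalent iff they reach the same vertex of $\cD$, \emph{and} have the same set $C$ of compromised columns, \emph{and} agree on $\pi_x^{-1}|_C$. Within such a class, the quantities $(i,j)$, $A_t$, $B_t$ are constant, $\pi_x$ restricted to the uncompromised columns is an exactly uniform injection onto the remaining internal nodes (with one uncompromised column receiving no internal node), and $(\leaf_x(j),\intern_x(j))$ is an exactly uniform ordered pair of distinct rows among the fewer than $n/2$ unqueried rows of column $j$. This makes both appeals to \rf(lem:treeExp) legitimate with no coupling and no loss hidden in $\cnst$; your derivation then goes through verbatim.
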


We will prove \rf(clm:IExp) a bit later.  Now let us show how it implies the theorem.
Clearly, $I_0(x)=0$ for all $x$.
\rf(clm:IExp) implies that $\bE_x[I_D(x)] \le m/8$.
By Markov's inequality, the probability that $I_D(x)\ge m/2$ is at most $1/4$.

Let $x$ be an input satisfying $I_D(x)< m/2$.  Thus, $x$ has less than $m/2$ compromised columns after $D$ queries.  In particular, the variable $a = \sA[\intern_x(\pi_x(r)), \pi_x(r)]$, corresponding to the root of $T$, has not been queried.  
Let $y$ be the input given by $(1-v_x, \pi_x, \leaf_x, \intern_x)$.  
They only differ in $a$, and $h_{1,n,m}(x) \ne h_{1,n,m}(y)$.  Hence, the decision tree $\cD$ errs on exactly one of them.

This means that $\cD$ errs on $x$ sampled from the hard distribution with probability at least $3/8$.

\pfstart[Proof of \rf(clm:IExp)]
We divide the inputs of the hard distribution into equivalence classes, and prove that~\rf(eqn:IExp) holds with the expectation over each of the classes.
We say that two inputs $x$ and $y$ are equivalent if the following three conditions hold:
\itemstart
\item after $t$ queries, the decision tree $\cD$ gets to the same vertex on $x$ and $y$;
\item the set $C$ of compromised columns is the same in $x$ and $y$;
\item for all $j\in C$, $\pi^{-1}_x(j) = \pi^{-1}_y(j)$.
\itemend

Fix an equivalence class, let $x$ be an input in the class, and $(i,j)$ be the variable queried by $\cD$ on the $(t+1)$st query on the input $x$.  
Note that $(i,j)$, as well as $A_t(x)$ and $B_t(x)$ do not depend on the choice of $x$.
Consider the following cases, where each case excludes the preceding ones.
All expectations and probabilities are over the uniform choice of an input in the equivalence class.
\itemstart
\item We have $I_t(x) = m/2$.  Then, $I_{t+1}(x)\le m/2$, and we are done.
\item The $j$th column is compromised.  Then $I_{t+1}(x) = I_t(x)$, and we are done.
\item After the cell $(i,j)$ is queried, more than half of the cells in the $j$th column have been queried.
As $I_t(x)<m/2$, less than half of the columns are compromised.
By \rf(lem:treeExp) with non-compromised nodes marked, the expected growth of $A_t(x)$ is at most $\treeexp$.  On the other hand, the drop in $B_t(x)$ is at least $\floor[n/2]$.  Hence, $\bE_x\skA[I_{t+1}(x)] \le \bE_x\skA[I_t(x)]$.
\item Consider the remaining case.  We have $\mathrm{Pr}_x\skA[i \in \sfigA{\leaf_x(j), \intern_x(j)}]\le 4/n$.
\itemstart
\item If $i$ is one of $\leaf_x(j)$ or $\intern_x(j)$, then, as in the previous case, the expected growth of $A_t(x)$ is at most $\treeexp$, and $B_t(x)$ can only decrease.
\item If $i\ne a_x(j)$, then $A_t(x)$ does not change, and $B_t(x)$ grows by 1.
\itemend
Thus,
\[
\bE_x\skA[I_{t+1}(x)] - \bE_x\skA[I_t(x)] \le \frac4n\cdot \treeexp + \magic = \frac{8\cnst \log m}n.
\qedhere
\]
\itemend
\pfend

\section*{Acknowledgments}
This research is partially funded by the Singapore Ministry of Education and the
National Research Foundation, also through NRF RF Award No.\ NRF-NRFF2013-13, and 
the Tier 3 Grant ``Random numbers
from quantum processes,'' MOE2012-T3-1-009.
This research is also partially supported by the European Commission
IST STREP project Quantum Algorithms (QALGO) 600700, by the ERC Advanced Grant MQC, Latvian State Research Programme NeXIT project No. 1
and by the French ANR Blanc program under contract ANR-12-BS02-005 (RDAM project).

This work was done while A.B. was at the University of Latvia.  Part of it was also done while A.B. was at Centre for Quantum Technologies, Singapore.  A.B. thanks Miklos Santha for hospitality.

\bibliographystyle{\relativepath habbrvM}
%\bibliography{../../bib} %TexnicCenter doesn't get \relativepath :(
%\bibliography{\relativepath bib} %TexnicCenter doesn't get \relativepath :(
\bibliography{bib}

\end{document}